\newcommand{\Set}[1]{\left\{\, #1 \,\right\}}
\newcommand{\defproblem}[3]{
    \begin{center}
        \framebox{
            \parbox{0.92\textwidth}{
                #1 \\
                \textbf{Input}: #2 \\
                \textbf{Output}: #3
            }
        }
    \end{center}
}
    \newcommand*{\Rome}[1]{\expandafter\@slowromancap\romannumeral #1@}
\newtheorem*{reduction*}{Reduction Rule}
\theoremstyle{claimstyle}
\newtheorem{splitstep}{Step}
\theoremstyle{claimstyle}
\newtheorem{chordalstep}{Step}
\theoremstyle{claimstyle}
\newtheorem{exactstep}{Step}
\title{Breaking the Barrier \texorpdfstring{$2^k$}{2\^k} for Subset Feedback Vertex Set in Chordal Graphs (Full Version)} 
\titlerunning{Breaking the Barrier \texorpdfstring{$2^k$}{2^k} for SFVS in Chordal Graphs} 
\author{Tian Bai}{School of Computer Science and Engineering, University of Electronic Science and Technology of China, [Chengdu], P. R. China}{tian.bai.cs@outlook.com}{https://orcid.org/0000-0003-1669-285X}{}
\author{Mingyu Xiao}{School of Computer Science and Engineering, University of Electronic Science and Technology of China, [Chengdu], P. R. China}{myxiao@uestc.edu.cn}{https://orcid.org/0000-0002-1012-2373}{}
\authorrunning{T. Bai and M. Xiao} 
\keywords{Subset Feedback Vertex Set, Prize-Collecting Maximum Independent Set, Parameterized Algorithms, Split Graphs, Chordal Graphs, Dulmage-Mendelsohn Decomposition.} 
\begin{document}

\maketitle

\begin{abstract}
    The \textsc{Subset Feedback Vertex Set} problem (\textsc{SFVS}) is to delete $k$ vertices from a given graph such that in the remaining graph, any vertex in a subset $T$ of vertices (called a terminal set) is not in a cycle.
    The famous \textsc{Feedback Vertex Set} problem is the special case of SFVS with $T$ being the whole set of vertices.
    In this paper, we study exact algorithms for \textsc{SFVS in Split Graphs} (\textsc{SFVS-S}) and \textsc{SFVS in Chordal Graphs} (\textsc{SFVS-C}).
    \textsc{SFVS-S} generalizes the minimum vertex cover problem and 
    the prize-collecting version of the maximum independent set problem in hypergraphs (\textsc{PCMIS}), and  \textsc{SFVS-C} further generalizes \textsc{SFVS-S}.
    Both \textsc{SFVS-S} and \textsc{SFVS-C} are implicit \textsc{$3$-Hitting Set} problems.
    However, it is not easy to solve them faster than \textsc{$3$-Hitting Set}.
    In 2019, Philip, Rajan, Saurabh, and Tale (Algorithmica 2019) proved that \textsc{SFVS-C} can be solved in $\mathcal{O}^{*}(2^{k})$ time, slightly improving the best result $\mathcal{O}^{*}(2.0755^{k})$ for \textsc{$3$-Hitting Set}.
    In this paper, we break the ``$2^{k}$-barrier'' for \textsc{SFVS-S} and \textsc{SFVS-C} by introducing an $\mathcal{O}^{*}(1.8192^{k})$-time algorithm.
    This achievement also indicates that \textsc{PCMIS} can be solved in $\mathcal{O}^{*}(1.8192^{n})$ time, marking the first exact algorithm for \textsc{PCMIS} that outperforms the trivial $\mathcal{O}^{*}(2^{n})$ threshold.
    Our algorithm uses reduction and branching rules based on the Dulmage-Mendelsohn decomposition and a divide-and-conquer method.
\end{abstract}

\section{Introduction}

The \textsc{Feedback Vertex Set} problem~(\textsc{FVS}), one of Karp's 21 $\NP$-complete problems~\cite{cocoKarp72}, is a fundamental problem in graph algorithms.
Given a graph $G = (V, E)$ and a parameter $k$, \textsc{FVS} is to decide whether there is a subset of vertices of size at most $k$ whose deletion makes the remaining graph acyclic.
\textsc{FVS} arises in a variety of applications in various fields such as circuit testing, network communications, deadlock resolution, artificial intelligence, and computational biology~\cite{ijfcsBodlaender94,siamdmCyganPPW13,algorithmicaIwataK21}.
\textsc{FVS} with the parameter being the solution size $k$ is one of the most important problems in parameterized algorithms.
The first fixed-parameter tractable ($\FPT$) algorithm for \textsc{FVS} traces back to Bodlaender's linear-time algorithm for fixed $k$ \cite{ijfcsBodlaender94}.
After this, there were a large number of contributions to \textsc{FVS} in parameterized algorithms \cite{sctDowneyF92,springerDowneyF99,iwpecKanjPS04,talgRamanSS06}.
Later, Dehne et al. \cite{iwpecDehneFRS04} and Guo et al. \cite{jcssGuoGHNW06} proposed a single-exponential algorithm based on the iterative-compression method, respectively.
The iterative-compression method became an important method to solve \textsc{FVS} fast.
Several following improvements were based on this method.
For example, Chen et al. \cite{jcssChenFLLV08} improved the result to $\mathcal{O}^{*}(5^{k})$, where notation $\mathcal{O}^{*}(\cdot)$ hides all polynomial factors.
 Cao et al.~\cite{algorithmicaCaoCL15} improved the result to $\mathcal{O}^{*}(3.83^{k})$.
Kociumaka and Pilipczuk \cite{iplKociumakaP14} further improved the result to $\mathcal{O}^{*}(3.619^{k})$.
The current fastest deterministic algorithm, introduced by Iwata and Kobayashi \cite{algorithmicaIwataK21}, was an $\mathcal{O}^{*}(3.460^{k})$-time algorithm based on the highest-degree branching.

Because of the importance of \textsc{FVS}, different variants and generalizations have been extensively studied in the literature.
The \textsc{Subset Feedback Vertex Set} problem (\textsc{SFVS}), introduced by Even et al.~\cite{siamcompEvenNZ00} in 2000, is a famous case.
In \textsc{SFVS}, we are further given a vertex subset $T \subseteq V$ called \emph{terminal set}, and we are asked to determine whether there is a set of vertices of size at most $k$ whose removal makes each terminal in $T$ not contained in any cycle in the remaining graph.
When the terminal set is the whole vertex set of the graph, \textsc{SFVS} becomes \textsc{FVS}.
\textsc{SFVS} also generalizes another famous problem, i.e., \textsc{Node Multiway Cut}.
There is an $8$-approximation algorithm for \textsc{SFVS}~\cite{siamcompEvenNZ00}, and whether \textsc{SFVS} is $\FPT$ had been once a well-known open problem \cite{DagSemProcDeamineHM09}.
In 2013, Cygan et al. \cite{siamdmCyganPPW13} proved the fixed-parameter tractability of \textsc{SFVS} by giving an algorithm with running time $\mathcal{O}^{*}(2^{\mathcal{O}(k \log k)})$.
Recently, Iwata et al. \cite{siamcompIwataWY16,focsIwataYY18} showed the first single-exponential algorithm with running time $\mathcal{O}^{*}(4^{k})$ for \textsc{SFVS}.
In 2018, Hols and Kratsch showed that \textsc{SFVS} has a randomized polynomial kernelization with $\mathcal{O}(k^{9})$ vertices~\cite{mstHolsK18}.
In contrast, \textsc{FVS} admits a quadratic kernel~\cite{icalpIwata17,talgThomasse10}.
However, it remains open whether there is a deterministic polynomial kernel for \textsc{SFVS}.

\textsc{SFVS} has also been studied in several graph classes~\cite{damPapadopoulosT19,algorithmicaPhilipRST19}, such as interval graphs, permutation graphs, chordal graphs, and split graphs.
\textsc{SFVS} remains $\NP$-complete even in split graphs~\cite{algorithmicaFominHKPV14}, while \textsc{FVS} in split and chordal graphs are polynomial-time solvable~\cite{iplYannakakisG87}.
It turns out that both \textsc{SFVS in Chordal Graphs} (\textsc{SFVS-C}) and \textsc{SFVS in Split Graphs} (\textsc{SFVS-S}) can be regarded as implicit \textsc{$3$-Hitting Set}.
Its importance stems from the fact that \textsc{$3$-Hitting Set} can be used to recast a wide range of problems, and now it can be solved in time $\mathcal{O}^{*}(2.0755^{k})$~\cite{phdWahlstrom07}.
On the other hand, when we formulate \textsc{SFVS-C} or \textsc{SFVS-S} in terms of \textsc{$3$-Hitting Set}, the structural properties of the input graph are lost.
We believe these structural properties can potentially be exploited to obtain faster algorithms for the original problems.
However, designing a faster algorithm for \textsc{SFVS-S} and \textsc{SFVS-C} seems challenging.
Only recently did Philip et al.~\cite{algorithmicaPhilipRST19} improve the running bound to $\mathcal{O}^{*}(2^{k})$,
where they needed to consider many cases of the clique-tree structures of the chordal graphs.
In some cases, they needed to branch into seven branches.
Note that $2^{k}$ is another barrier frequently considered in algorithm design and analysis.
Some preliminary brute force algorithms, dynamic programming, and advanced techniques, such as inclusion-exclusion, iterative compression, and subset convolution, always lead to the bound $2^{k}$.
Breaking the ``$2^{k}$-barrier'' becomes an interesting question for many problems.

We highlight that \textsc{SFVS-C} and \textsc{SFVS-S} are important since they generalize a natural variation of the maximum independent set problem called \textsc{Prize-Collecting Maximum Independent Set in hypergraphs} (\textsc{PCMIS}).
In PCMIS, we are given a hypergraph $H$ with $n$ vertices.
The object is to find a vertex subset $S$ maximizing the size of $S$ minus the number of hyperedges in $H$ that contain at least two vertices from $S$.
In other words, we may balance the size of the vertex subset against the number of hyperedges on which $S$ violates the independent constraints.
The prize-collecting version of many important fundamental problems has drawn certain attention recently, such as \textsc{Prize-Collecting Steiner Tree}~\cite{algorithmicaPedrosaR22}, \textsc{Prize-Collecting Network Activation}~\cite{talgFukunaga17}, and \textsc{Prize-Collecting Travelling Salesman} Problem~\cite{stocBlauthN23}.
To the best of our knowledge, no exact algorithm for \textsc{PCMIS} faster than $\mathcal{O}^{*}(2^{n})$ is known before.

We also mark that several other implicit \textsc{$3$-Hitting Set} problems, such as \textsc{Cluster Vertex Deletion} and \textsc{Directed FVS in Tournaments}, have been extensively studied in the past few years.
Fomin et al.~\cite{talgFominLLSTZ19} showed that \textsc{Cluster Vertex Deletion} and \textsc{Directed FVS in Tournaments} admit subquadratic kernels with $\mathcal{O}(k^{5/3})$ vertices and $\mathcal{O}(k^{3/2})$ vertices, respectively; while the size of the best kernel for \textsc{SFVS-C} is still quadratic, which can be easily obtained from the kernelization for \textsc{3-Hitting Set}~\cite{jcssAbu-Khzam10}.
As for parameterized algorithms, Dom et al.~\cite{jdaDomGHNT10} first designed an $\mathcal{O}^{*}(2^{k})$-time algorithm for \textsc{Directed FVS in Tournaments}, breaking the barrier of \textsc{$3$-Hitting Set}, and the running time bound of which was later improved to $\mathcal{O}^{*}(1.6191^{k})$ by Kumar and Lokshtanov~\cite{stacsKumarL16}.
For \textsc{Cluster Vertex Deletion}, in 2010, H{\"{u}}ffner et al.~\cite{mstHuffnerKMN10} first broke the barrier of \textsc{$3$-Hitting Set} by obtaining an $\mathcal{O}^{*}(2^{k})$-time algorithm.
Now it can be solved in $\mathcal{O}^{*}(1.7549^{k})$ time~\cite{cocoonTianXY23}.

\paragraph*{Contributions and Techniques.}
In this paper, we contribute to parameterized and exact algorithms for \textsc{SFVS-C}.
Our main contributions are summarized as follows.
\begin{enumerate}[1.]
    \item We firstly break the ``$2^{k}$-barrier'' for \textsc{SFVS-S} and \textsc{SFVS-C} by giving an $\mathcal{O}^{*}(1.8192^{k})$-time algorithm, which significantly improves previous algorithms.
    \item We show that an $\mathcal{O}^{*}(\alpha^{k})$-time algorithm ($\alpha > 1$) for \textsc{SFVS-S} leads to an $\mathcal{O}^{*}(\alpha^{n})$-algorithm for \textsc{PCMIS}.
    Thus, we can solve \textsc{PCMIS} in time $\mathcal{O}^{*}(1.8192^{n})$, also breaking the ``$2^{n}$-barrier'' for this problem for the first time.
    \item We make use of the Dulmage-Mendelsohn decomposition of bipartite graphs to capture structural properties, and then we are able to use a new measure $\mu$ to analyze the running time bound.
    This is the most crucial technique for us to obtain a significant improvement.
    Note that direct analysis based on the original measure $k$ has encountered bottlenecks.
    Any tiny improvement may need complicated case-analysis.
    \item The technique based on Dulmage-Mendelsohn decomposition can only solve \textsc{SFVS-S}.
    We also propose a divide-and-conquer method by dividing the instance of \textsc{SFVS-C} into several instances of \textsc{SFVS-S}.
    We show that \textsc{SFVS-C} can be solved in time $\mathcal{O}^{*}(\alpha^{k} + 1.6191^{k})$ if \textsc{SFVS-S} can be solved in time $\mathcal{O}^{*}(\alpha^{k})$.
    \item By doing a trade-off between our $\mathcal{O}^{*}(1.8192^{k})$-time algorithm and other algorithms, we also get improved exact algorithms: \textsc{SFVS-S} can be solved in time $\mathcal{O}(1.3488^{n})$, and \textsc{SFVS-C} can be solved in time $\mathcal{O}(1.3788^{n})$.
\end{enumerate}

\section{Preliminaries}

\subsection{Graphs}

Let $G = (V, E)$ stand for a simple undirected graph with a set $V$ of vertices and a set $E$ of edges.
We adopt the convention that $n = |V|$ and $m = |E|$.
When a graph $G'$ is mentioned without specifying its vertex and edge sets, we use $V(G')$ and $E(G')$ to denote these sets, respectively.
For a subset $X \subseteq V$ of vertices, we define the following notations.
The \emph{neighbour set} of $X$, denoted by $N_{G}(X)$, is the set of all vertices in $V \setminus X$ that are adjacent to a vertex in $X$, and the \emph{closed neighbour set} of $X$ is expressed as $N_{G}[X] \coloneq N_{G}(X) \cup X$.
The subgraph of $G$ induced by $X$ is denoted by $G[X]$.
We simply write $G - X \coloneq G[V \setminus X]$ as the subgraph obtained from $G$ removing $X$ together with edges incident on any vertex in $X$.
For ease of notation, we may denote a singleton set $\Set{v}$ by $v$.

The \emph{degree} of $v$ in $G$ is defined by $\deg_{G}(v) \coloneq |N_{G}(v)|$.
An edge $e$ is a \emph{bridge} if it is not contained in any cycle of $G$.
A \emph{separator} of a graph is a vertex set such that the deletion of it increases the number of connected components of the graph.
The shorthand $[r]$ is expressed as the set $\Set{1, 2, \ldots, r}$ for $r \in \mathbb{N}^{+}$.

In an undirected graph $G = (V, E)$, a set $X \subseteq V$ is a \emph{clique} if every pair of distinct vertices $u$ and $v$ in $X$ are connected by an edge $uv \in E$; $X$ is an \emph{independent set} if $uv \not\in E$ for every pair of vertices $u$ and $v$ in $X$; $X$ is a \emph{vertex cover} if for any edge $uv \in E$ at least one of $u$ and $v$ is in $X$.
A subset $S \subseteq V$ is a vertex cover of $G$ if and only if $V \setminus S$ is an independent set.
A vertex $v$ is called \emph{simplicial} in $G$ if $N_{G}[v]$ is a clique~\cite{amhaajDirac61}.
A clique in $G$ is \emph{simplicial} if it is maximal and contains at least one simplicial vertex.
A \emph{matching} is a set of edges without common vertices.

\subsection{Chordal Graphs and Split Graphs} \label{CHORDAL AND SPLIT}

A \emph{chord} of a cycle is an edge that connects two non-consecutive vertices of the cycle.
A graph $G$ is said to be \emph{chordal} if every cycle of length at least $4$ contains a chord.
A chordal graph $G$ holds the following facts~\cite{amhaajDirac61} that will be used in the paper:
\begin{enumerate}[1.]
    \item Every induced subgraph of $G$ is chordal.
    \item Every minimal separator of $G$ is a clique.
\end{enumerate}

Consider a connected chordal graph $G$, and let $\mathcal{Q}_{G}$ denote the set of all maximal cliques in $G$.
A \emph{clique graph} of $G$ is an undirected graph $(\mathcal{Q}_{G}, \mathcal{E}_{G}, \sigma)$ with the edge-weighted function $\sigma \colon \mathcal{E}_{G} \to \mathbb{N}$ satisfying that an edge $Q_{1}Q_{2} \in \mathcal{E}_{G}$ if $Q_{1} \cap Q_{2}$ is a minimal separator and $\sigma (Q_{1}Q_{2}) \coloneq |Q_{1} \cap Q_{2}|$.
A \emph{clique tree} $\mathcal{T}_{G}$ of $G$ is a maximum spanning tree of the clique graph of $G$, and the following facts hold~\cite{dmBuneman74,jctGavril74,phdWalter72}:
\begin{enumerate}[1.]
    \item Each leaf node of a clique tree $\mathcal{T}_{G}$ is a simplicial clique in $G$.
    \item For a pair of maximal cliques $Q_{1}$ and $Q_{2}$ such that $Q_{1}Q_{2} \in \mathcal{E}_{G}$, $Q_{1} \cap Q_{2}$ separates each pair of vertices $v_{1} \in Q_{1} \setminus Q_{2}$ and $v_{2} \in Q_{2} \setminus Q_{1}$.
\end{enumerate}

Whether a graph is chordal can be checked in linear time $\mathcal{O}(n + m)$~\cite{siamcompRoseTL76}.
The number of maximal cliques in a chordal graph $G$ is at most $n$~\cite{pjmFulkersonDG65}, and all of them can be listed in linear time $\mathcal{O}(n + m)$~\cite{jctGavril74}.
These properties will be used in our algorithm.
A graph is a \emph{split graph} if its vertex set can be partitioned into a clique $K$ and an independent set $I$~\cite{seiccgtcStephaneH77}.
Such a partition $(I, K)$ is called a \emph{split partition}.
It is worth noting that every split graph is chordal, and whether a graph is a split graph can also be checked in linear time $\mathcal{O}(n + m)$ by definition.

\subsection{Subset Feedback Vertex Set in Split and Chordal Graphs}

Given a terminal set $T \subseteq V$ of an undirected graph $G = (V, E)$, a cycle in $G$ is a \emph{$T$-cycle} if it contains a terminal from $T$, and a \emph{$T$-triangle} is specifically a $T$-cycle of length three.
A \emph{subset feedback vertex set} of a graph $G$ with a terminal set $T$ is a subset of $V$ whose removal makes $G$ contain no $T$-cycle.

In this study, we focus on \textsc{SFVS} in chordal graphs.
The problem takes as input a chordal graph $G = (V, E)$, a terminal set $T \subseteq V$, and an integer $k$.
The task is to determine whether there is a subset feedback vertex set $S$ of size at most $k$.
Moreover, the following lemma shows that the problem can be transformed into the problem of finding a subset of vertices intersecting all $T$-triangles instead of all $T$-cycles.

\begin{lemma}[\cite{algorithmicaPhilipRST19}] \label{T TRIANGLE}
    Let $G = (V, E)$ be a chordal graph and $T \subseteq V$ be the terminal set.
    A vertex set $S \subseteq V$ is a subset feedback vertex set of $G$ if and only if $G - S$ contains no $T$-triangles.
\end{lemma}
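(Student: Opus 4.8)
The forward direction is immediate and carries no content: a $T$-triangle is by definition a $T$-cycle of length three, so if $G - S$ contains no $T$-cycle then in particular it contains no $T$-triangle. The substance is the converse, and the plan is to argue by contradiction via a shortest counterexample, using only two facts about chordal graphs: induced subgraphs of chordal graphs are chordal, and every cycle of length at least $4$ has a chord.

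So suppose $G - S$ contains no $T$-triangle but, for contradiction, does contain some $T$-cycle. Among all $T$-cycles of $G - S$, choose one of minimum length, say $C = v_0 v_1 \cdots v_{\ell - 1} v_0$, and let $t \in T$ be a terminal on $C$; after relabelling we may assume $t = v_0$. Since $G - S$ is an induced subgraph of a chordal graph, it is itself chordal. If $\ell = 3$ then $C$ is a $T$-triangle, contradicting our assumption, so $\ell \ge 4$ and $C$ has a chord $v_i v_j$ with $i < j$ and $j \ge i + 2$ and $(i,j) \ne (0, \ell - 1)$. This chord splits $C$ into the two cycles $C_1 = v_0 v_1 \cdots v_i v_j v_{j+1} \cdots v_{\ell-1} v_0$ and $C_2 = v_i v_{i+1} \cdots v_j v_i$, both of which live entirely in $G - S$ (all their vertices and edges, including the chord itself, are present). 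A direct index count gives $\lvert C_1 \rvert = (i+1) + (\ell - j) \le \ell - 1$ and $\lvert C_2 \rvert = (j - i) + 1 \le \ell - 1$, where both strict inequalities use exactly the conditions $j \ge i+2$ and $(i,j)\ne(0,\ell-1)$ that make $v_i v_j$ a genuine chord rather than an edge of $C$. Finally $t = v_0$ lies on $C_1$ (it lies on $C_2$ too in the degenerate case $v_0 \in \{v_i, v_j\}$), so $C_1$ is a $T$-cycle of $G - S$ strictly shorter than $C$, contradicting minimality. Hence $G - S$ has no $T$-cycle.

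The argument is essentially routine, and one could equivalently present it as an induction on cycle length. The only place that needs a little care — and the closest thing to an obstacle — is the bookkeeping that simultaneously guarantees that \emph{both} $C_1$ and $C_2$ are strictly shorter than $C$ and that the chosen terminal $t$ survives on at least one of them, so that the minimality (or induction) step genuinely applies; once the index inequalities above are recorded correctly, there is nothing further to do.
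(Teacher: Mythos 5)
Your proof is correct. Note that the paper itself gives no proof of this lemma: it is imported by citation from Philip et al.\ (Algorithmica 2019), so there is nothing in the source to compare against line by line. Your argument — take a shortest $T$-cycle in the chordal graph $G-S$, use a chord to split it into two strictly shorter cycles, and observe that the chosen terminal survives on the arc-cycle $C_1$ containing $v_0$ — is the standard proof of this fact and matches the one in the cited reference; the index bookkeeping ($j \ge i+2$ and $(i,j)\neq(0,\ell-1)$ giving both $\lvert C_1\rvert \le \ell-1$ and $\lvert C_2\rvert \le \ell-1$) is handled correctly.
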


For the sake of presentation, this paper considers a slight generalization of \textsc{SFVS-C}.
In this generalized version, a set of marked edges $M \subseteq E$ is further given, and we are asked to decide whether there is a subset feedback vertex set of size at most $k$, which also covers all marked edges, i.e., each marked edge must have at least one of its endpoints included in the set.
This set is called a \emph{solution} to the given instance.
Among all solutions, a \emph{minimum solution} is the one with the smallest size.
The size of a minimum solution to an instance $\mathcal{I}$ is denoted by $s(\mathcal{I})$.

Formally, the generalization of \textsc{SFVS-C} is defined as follows.

\defproblem{\textsc{(Generalized) SFVS-C}}
{A chordal graph $G = (V, E)$, a terminal set $T \subseteq V$, a marked edge set $M \subseteq E$, and an integer $k$.}
{Determine whether there is a subset of vertices $S \subseteq V$ of size at most $k$, such that neither edges in $M$ nor $T$-cycles exist in $G - S$.}

We have the following simple observations.
Let $abc$ be a $T$-triangle with a degree-$2$ vertex $b$ in the graph.
Any solution must contain at least one of the vertices $a$, $b$, and $c$.
If vertex $b$ is included in the solution, we can replace it with either $a$ or $c$ without affecting the solution's feasibility.
Consequently, we can simplify the graph by removing $b$ and marking edge $ac$.
This observation motivates the consideration of the generalized version.

The equivalence between the original and generalized versions is apparent, allowing us to simply use \textsc{SFVS-C} to denote the generalized version.
An instance of our problem is denoted by $\mathcal{I} = (G, T, M, k)$.

During the algorithm, it may be necessary to consider some sub-instances where the graph is a subgraph of $G$.
We define the \emph{instance induced by $X \subseteq V$ or $G[X]$} as $(G[X], T \cap X, M \cap E(G[X]), k)$.

\subsection{Branching Algorithms}

Our algorithm consists of several reduction rules and branching rules.
To estimate the progress made by each step, we need to select a measure $\mu(\cdot)$ for the instances.
In each step of the algorithm, the measure does not increase, and if it becomes non-positive, the problem can be solved in polynomial time.

Branching rules divide the current instance $\mathcal{I}$ into several sub-instances $\mathcal{I}_{1}, \mathcal{I}_{2}, \ldots, \mathcal{I}_{r}$, where $\mu(\mathcal{I}_{i}) < \mu(\mathcal{I})$ for each $i \in [r]$.
A branching rule is safe if $\mathcal{I}$ is a Yes-instance if and only if at least one of the sub-instances is a Yes-instance.
Reduction rules transform the input instance into a ``smaller'' one.
A reduction rule is safe if the input instance is a Yes-instance if and only if the output instance is a Yes-instance.

To analyze the running time, we bound the size of the search tree generated by the algorithm.
We use $R(\mu)$ to denote the worst-case size of the search tree in our algorithm with respect to the measure $\mu$.
For a branching rule, we establish the recurrence relation $R(\mu) \leq 1 + \sum_{i \in [r]} R(\mu_{i})$, where $\mu_{i} = \mu(\mathcal{I}_{i})$ for $i \in [r]$.
This recurrence relation is typically represented by the \emph{branching vector} $(\mu - \mu_{1}, \mu - \mu_{2}, \ldots, \mu - \mu_{r})$.
The \emph{branching factor}, which is the unique positive real root of the equation $x^{\mu} - \sum_{i = 1}^{r} x^{\mu_{i}} = 0$, is associated with this branching vector.

A branching vector $\mathrm{\mathbf{a}}$ is \emph{not worse than} $\mathrm{\mathbf{b}}$ if the branching factor of $\mathrm{\mathbf{a}}$ is not greater than that of $\mathrm{\mathbf{b}}$.
We will use the property that if $a_{i} \geq b_{i}$ for each $i \in [r]$, the branching vector $\mathrm{\mathbf{a}} = (a_{1}, a_{2}, \ldots, a_{r})$ is not worse than $\mathrm{\mathbf{b}} = (b_{1}, b_{2}, \ldots, b_{r})$.
If the maximum branching factor among that of all branching vectors is $\alpha$, then the size of the search tree in the algorithm is bounded by $\mathcal{O}(\alpha^{\mu})$.

\section{The Dulmage-Mendelsohn Decomposition and Reduction} \label{SEC: DM REDUCTION}
This section introduces the Dulmage-Mendelsohn decomposition of a bipartite graph~\cite{canjmathDulmageM58,tranrscDulmageA59}.
The Dulmage-Mendelsohn decomposition will play a crucial role in our algorithm for \textsc{SFVS-S}.

\begin{definition}[Dulmage-Mendelsohn Decomposition~\cite{elsevierLocaszP86,jcssChenK03}]
    Let $F$ be a bipartite graph with bipartition $V(F) = A \cup B$.
    The Dulmage-Mendelsohn decomposition \textup{(}cf.~Fig.~\ref{FIG: DM DECOMPOSITION}\textup{)} of $F$ is a partitioning of $V(F)$ into three disjoint parts $C$, $H$ and $R$, such that
    \begin{enumerate}[1.]
        \item $C$ is an independent set and $H = N_{F}(C)$;
        \item $F[R]$ has a perfect matching;
        \item $H$ is the intersection of all minimum vertex covers of $F$; and
        \item any maximum matching in $F$ includes all vertices in $R \cup H$.
    \end{enumerate}
\end{definition}

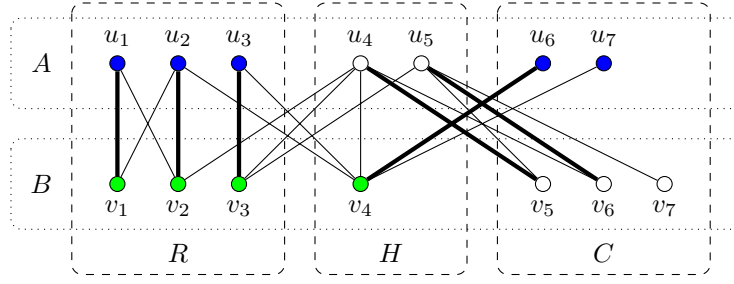
\begin{figure}[t!]
    \centering
        \begin{tikzpicture}
            [
            scale = 0.8,
            nonterminal/.style={draw, shape = circle, inner sep=2pt},
            terminal/.style={draw, shape = circle, fill = black, inner sep=2pt},
            ]
            \node[] (A) at(0.75, 0) {$A$};
            \node[fill = blue, nonterminal, label={[yshift = 0mm]$u_{1}$}] (u1) at(2, 0) {};
            \node[fill = blue, nonterminal, label={[yshift = 0mm]$u_{2}$}] (u2) at(3, 0) {};
            \node[fill = blue, nonterminal, label={[yshift = 0mm]$u_{3}$}] (u3) at(4, 0) {};
            \node[nonterminal, label={[yshift = 0mm]$u_{4}$}] (u4) at(6, 0) {};
            \node[nonterminal, label={[yshift = 0mm]$u_{5}$}] (u5) at(7, 0) {};
            \node[fill = blue, nonterminal, label={[yshift = 0mm]$u_{6}$}] (u6) at(9, 0) {};
            \node[fill = blue, nonterminal, label={[yshift = 0mm]$u_{7}$}] (u7) at(10, 0) {};

            \node[] (B) at(0.75, -2) {$B$};
            \node[fill = green, nonterminal, label={[yshift = -6.5mm]$v_{1}$}] (v1) at(2, -2) {};
            \node[fill = green, nonterminal, label={[yshift = -6.5mm]$v_{2}$}] (v2) at(3, -2) {};
            \node[fill = green, nonterminal, label={[yshift = -6.5mm]$v_{3}$}] (v3) at(4, -2) {};
            \node[fill = green, nonterminal, label={[yshift = -6.5mm]$v_{4}$}] (v4) at(6, -2) {};
            \node[nonterminal, label={[yshift = -6.5mm]$v_{5}$}] (v5) at(9, -2) {};
            \node[nonterminal, label={[yshift = -6.5mm]$v_{6}$}] (v6) at(10, -2) {};
            \node[nonterminal, label={[yshift = -6.5mm]$v_{7}$}] (v7) at(11, -2) {};

            \node[] (R) at(3, -3.15) {$R$};
            \node[] (H) at(6.5, -3.15) {$H$};
            \node[] (C) at(10, -3.15) {$C$};

            \draw[ultra thick]
            (u1) -- (v1) (u2) -- (v2) (u3) -- (v3)
            (u4) -- (v5) (u5) -- (v6) (v4) -- (u6);
            \draw
            (u1) -- (v2) (u2) -- (v1) (u2) -- (v4) (u3) -- (v4) (u4) -- (v2) (u4) -- (v3) (u5) -- (v3)
            (u4) -- (v4) (u7) -- (v4) (u4) -- (v6) (u5) -- (v5) (u5) -- (v7);

            \draw[rounded corners, dashed]
            (1.25, 1) rectangle (4.75, -3.5)
            (5.25, 1) rectangle (7.75, -3.5)
            (8.25, 1) rectangle (11.75, -3.5);

            \draw[rounded corners, dotted]
            (0.25, 0.75) rectangle (12.25, -0.75)
            (0.25, -1.25) rectangle (12.25, -2.75);
        \end{tikzpicture}
        \caption{A bipartite graph $F$ with bipartition $V(F) = A \cup B$, where $A = \Set{u_{i}}_{i = 1}^{7}$ and $B = \Set{v_{i}}_{i = 1}^{7}$.
        The thick edges form a maximum matching of $F$.
        The Dulmage-Mendelsohn decomposition of $F$ is $(C, H, R)$ with $C = \Set{u_{6}, u_{7}, v_{5}, v_{6}, v_{7}}$, $H = \Set{u_{4}, u_{5}, v_{4}}$, and $R = \Set{u_{1}, u_{2}, u_{3}, v_{1}, v_{2}, v_{3}}$.
        If $F$ is an auxiliary subgraph of an instance of \textsc{SFVS-S}, then $\hat{A} = \Set{u_{1}, u_{2}, u_{3}, u_{6}, u_{7}}$ (denoted by blue vertices) and $\hat{B} = \Set{v_{1}, v_{2}, v_{3}, v_{4}}$ (denoted by green vertices).}
    \label{FIG: DM DECOMPOSITION}
\end{figure}

The Dulmage-Mendelsohn decomposition always exists and is unique~\cite{elsevierLocaszP86}, which can be computed in time $\mathcal{O}(m\sqrt{n})$ by finding the maximum matching of the graph $F$~\cite{siamcompHopcroftK73}.
Leveraging this decomposition, we propose a crucial reduction rule for \textsc{SFVS-S}.

Consider an instance $\mathcal{I} = (G = (V, E), T, M, k)$ of \textsc{SFVS-S}.
Let $(I, K)$ be a split partition of $G$, where $I$ is an independent set and $K$ is a clique.
Based on the split partition $(I, K)$ of $G$, we can uniquely construct an auxiliary bipartite subgraph $F$ with bipartition $V(F) = A \cup B$.
In this subgraph $F$, partition $A$ is the subset of the vertices in $I$ that are only incident to marked edges and $B = N_{G}(A)$.
In addition, $E(F)$ is the set of all edges between $A$ and $B$, i.e., $E(F) \coloneq \{ ab \in E : a \in A,~b \in B \}$.
Notice that $F$ contains no isolated vertex, and all edges in $F$ are marked by the definitions of $A$ and $B$.

Let $(R, H, C)$ denote the Dulmage-Mendelsohn decomposition of the auxiliary subgraph $F$.
Define $\hat{A} \coloneq A \cap (R \cup C)$ and $\hat{B} \coloneq B \cap (R \cup H)$ (see Fig.~\ref{FIG: DM DECOMPOSITION}).
We have $\hat{B} = N_{G}(\hat{A})$, and there exists a matching saturating all vertices in $\hat{B}$.
This indicates that every solution must contain at least $|\hat{B}|$ vertices in $\hat{A} \cup \hat{B}$.
On the other hand, $\hat{B}$ is a minimum vertex cover of the subgraph induced by $\hat{A} \cup \hat{B}$.
Consequently, a minimum solution to $\mathcal{I}$ containing $\hat{B}$ exists.
Now, we formally introduce the following reduction rule, which we call DM Reduction.

\begin{reduction*}[DM Reduction] \label{REDUCTION: DM}
    Let $F$ be the auxiliary subgraph with bipartition $V(F) = A \cup B$, and let $(R, H, C)$ denote the Dulmage-Mendelsohn decomposition of $F$.
    If $\hat{A}$ and $\hat{B}$ are non-empty, delete $\hat{A}$ and $\hat{B}$ from the graph $G$ and decrease $k$ by $|\hat{B}| = |R| / 2 + |H \cap B|$.
\end{reduction*}

\begin{lemma} \label{SAFENESS: DM DECOMPOSITION}
    The DM Reduction is safe.
\end{lemma}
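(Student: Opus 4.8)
The plan is to prove both directions of the Yes/No equivalence, with the bulk of the work concentrated in an \emph{exchange argument} showing that some minimum solution to $\mathcal{I}$ contains all of $\hat{B}$. Throughout I will argue with $T$-triangles rather than $T$-cycles: by Lemma~\ref{T TRIANGLE} applied to each induced (hence chordal) subgraph that arises, a \emph{solution} to an instance is exactly a vertex set that covers every marked edge and meets every $T$-triangle. I will use the structural facts recorded just before the rule: $N_{G}(\hat{A}) = \hat{B}$; the subgraph $G[\hat{A}\cup\hat{B}]$ contains a matching $M_{0}$ of $|\hat{B}|$ pairwise vertex-disjoint marked edges saturating $\hat{B}$; and $\hat{B}$ is a (minimum) vertex cover of $G[\hat{A}\cup\hat{B}]$. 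These follow from the defining properties of the Dulmage-Mendelsohn decomposition $(R,H,C)$ together with K\"{o}nig's theorem, and in particular they give $|\hat{B}| = |R|/2 + |H\cap B|$, the quantity by which $k$ is decreased. I will also note that $\hat{A}\subseteq I$ and $\hat{B}\subseteq K$, so the reduced graph is still split.

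The exchange claim is: for every solution $S$ of $\mathcal{I}$, the set $S' \coloneq (S \setminus (\hat{A}\cup\hat{B})) \cup \hat{B}$ is again a solution and $|S'| \le |S|$. The size bound holds because the $|\hat{B}|$ edges of $M_{0}$ are vertex-disjoint, marked, and contained in $\hat{A}\cup\hat{B}$, so $S$ contains at least one endpoint of each, whence $|S \cap (\hat{A}\cup\hat{B})| \ge |\hat{B}|$ and $|S'| = |S| - |S\cap(\hat{A}\cup\hat{B})| + |\hat{B}| \le |S|$. Feasibility of $S'$ has two parts. Marked edges: if an edge of $M$ is not covered by $S \setminus (\hat{A}\cup\hat{B})$, then since $S$ covers it, one endpoint lies in $\hat{A}\cup\hat{B}$; if that endpoint is in $\hat{B}\subseteq S'$ we are done, and if it is in $\hat{A}\subseteq I$ then the other endpoint lies in $N_{G}(\hat{A}) = \hat{B} \subseteq S'$. $T$-triangles: if a $T$-triangle $\tau$ avoids $S'$, then since $S$ meets $\tau$ there is $z \in \tau \cap S$ with $z \notin S'$; as $S \setminus (\hat{A}\cup\hat{B}) \subseteq S'$ and $\hat{B} \subseteq S'$, we get $z \in \hat{A} \subseteq I$, so the other two vertices of $\tau$ lie in $N_{G}(z) \subseteq \hat{B} \subseteq S'$, contradicting $\tau\cap S' = \emptyset$. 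Applying this to a minimum solution yields a minimum solution $S^{*}$ with $\hat{B}\subseteq S^{*}$.

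Finally I derive the equivalence with $\mathcal{I}' = (G',\, T\cap V(G'),\, M\cap E(G'),\, k-|\hat{B}|)$ where $G' = G-(\hat{A}\cup\hat{B})$. Forward: take the minimum solution $S^{*}$ with $\hat{B}\subseteq S^{*}$ and put $S'' \coloneq S^{*}\setminus(\hat{A}\cup\hat{B})$; then $|S''| \le |S^{*}| - |\hat{B}| \le k - |\hat{B}|$, and since $G' - S'' = G - (\hat{A}\cup S^{*})$ is an induced subgraph of $G - S^{*}$ it has no marked edge and no $T$-triangle, while $S''$ still covers every edge of $M\cap E(G')$ (each such edge has both endpoints outside $\hat{A}\cup\hat{B}$, so its covering endpoint in $S^{*}$ lies in $S''$). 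Backward: given a solution $S'$ of $\mathcal{I}'$ with $|S'| \le k - |\hat{B}|$, put $S \coloneq S' \cup \hat{B}$, so $|S| \le k$; a marked edge of $G$ meeting $\hat{A}\cup\hat{B}$ is covered by $\hat{B}$ exactly as in the exchange argument, while a marked edge inside $V(G')$ is covered by $S'$; and any $T$-triangle of $G - S$ must avoid $\hat{B}$ and hence (as above, since $\hat{A}\subseteq I$ and $N_G(\hat A)=\hat B$) avoid $\hat{A}$ too, making it a $T$-triangle of $G' - S'$, which is impossible.

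I expect the main obstacle to be the $T$-triangle part of the exchange argument: one must ensure that replacing the whole of $S \cap (\hat{A}\cup\hat{B})$ by $\hat{B}$ still hits every $T$-triangle, and this is precisely where the decomposition is used, through $\hat{A}\subseteq I$ and $N_{G}(\hat{A}) = \hat{B}$, so that every triangle touching a discarded vertex of $\hat{A}$ is already destroyed by $\hat{B}$. A secondary point requiring care is that $\hat{B}$ is a \emph{minimum} vertex cover of $G[\hat{A}\cup\hat{B}]$ (equivalently, that the matching $M_{0}$ saturating $\hat{B}$ lives inside $\hat{A}\cup\hat{B}$, not merely inside the whole auxiliary subgraph $F$), which is what forces the budget to drop by exactly $|\hat{B}|$.
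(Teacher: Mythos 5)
Your proof is correct and follows essentially the same route as the paper's: an exchange argument showing that replacing $S \cap (\hat{A}\cup\hat{B})$ by $\hat{B}$ yields a solution of no larger size, using $N_{G}(\hat{A}) = \hat{B}$, the matching of marked edges saturating $\hat{B}$ inside $G[\hat{A}\cup\hat{B}]$, and $\hat{A}\subseteq I$. In fact your write-up is more complete than the paper's (which asserts ``we can see that $S'$ is also a solution'' and leaves the equivalence with the reduced instance implicit), and your size bound via $|S\cap(\hat{A}\cup\hat{B})|\geq|\hat{B}|$ is cleaner than the paper's detour through $|\hat{A}|>|\hat{B}|$.
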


\begin{proof}
    Recall that $\hat{A} \coloneq A \cap (R \cup C)$ and $\hat{B} \coloneq B \cap (R \cup H)$.
    According to the definition of the Dulmage-Mendelsohn decomposition, we know that $N_{F}(\hat{A}) = \hat{B}$, and $\hat{B}$ is a minimum vertex cover of the subgraph induced by $\hat{A} \cup \hat{B}$.
    For a solution $S$ to the input instance $\mathcal{I}$, the size of $S \cap (\hat{A} \cup \hat{B})$ is no less than $|\hat{B}|$ since $S$ covers every edge in $M$.
    Let $S' = (S \setminus \hat{A}) \cup \hat{B}$.
    Observe that $|\hat{A}| > |\hat{B}|$; otherwise, $A$ would be a minimum vertex cover of $F$, contradicting that $H$ is a subset of any minimum vertex cover.
    Consequently, we derive that $|S'| \leq |S|$.
    In addition, we can see that $S'$ is also a solution, leading to the safeness of the DM Reduction.
\end{proof}

\begin{lemma} \label{STRUCTURE: DM DECOMPOSITION}
    Given an instance $\mathcal{I} = (G, T, M, k)$ of \textsc{SFVS-S}, let $F$ be the auxiliary subgraph of $G$ with bipartition $V(F) = A \cup B$.
    If the DM Reduction cannot be applied, for any non-empty subset $A' \subseteq A$, it holds that $|A'| < |N_{G}(A')|$.
\end{lemma}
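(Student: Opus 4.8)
The plan is to first determine what the Dulmage--Mendelsohn decomposition of the auxiliary subgraph $F$ must look like when the DM Reduction is not applicable, and then to deduce the strict Hall-type inequality from the characterising property of $H$ as the intersection of all minimum vertex covers.

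First I would show that the non-applicability of the DM Reduction forces $H = A$. By definition, the rule fails to apply exactly when $\hat A = \emptyset$ or $\hat B = \emptyset$. Suppose $\hat A = A \cap (R \cup C) = \emptyset$; then $A \subseteq H$, and since $H$ is contained in some minimum vertex cover of $F$, which in turn has size at most $|A|$ (because $A$ is itself a vertex cover of $F$, being one side of the bipartition), we obtain $|A| \le |H| \le |A|$ and hence $H = A$. Suppose instead $\hat B = B \cap (R \cup H) = \emptyset$; then $B \subseteq C$, and since $C$ is an independent set while $F$ has no isolated vertex, no vertex of $A$ can lie in $C$, so $C = B$ and therefore $H = N_F(C) = N_F(B) = A$, where $N_F(B) = A$ because $B = N_F(A)$ and $F$ has no isolated vertex. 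In both cases $H = A$; in particular $A$ is the unique minimum vertex cover of $F$, and by the fourth defining property of the decomposition (every maximum matching saturates $R \cup H$) every maximum matching of $F$ saturates $A$.

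Next I would carry out the Hall-type argument inside $F$. Since $A \subseteq I$ is an independent set and every edge of $G$ incident to a vertex of $A$ belongs to $E(F)$, we have $N_G(A') = N_F(A')$ for every $A' \subseteq A$, so it suffices to work in $F$. Assume for contradiction that some non-empty $A' \subseteq A$ satisfies $|N_F(A')| \le |A'|$. Because a maximum matching of $F$ saturates $A$, Hall's condition yields $|N_F(A')| \ge |A'|$, hence $|N_F(A')| = |A'|$. Then $W \coloneq (A \setminus A') \cup N_F(A')$ is a vertex cover of $F$ (each edge with an endpoint in $A'$ is covered by its endpoint in $N_F(A')$, every other edge by its endpoint in $A \setminus A'$), of size $|A| - |A'| + |N_F(A')| = |A|$, so $W$ is a minimum vertex cover. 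But $W$ is disjoint from $A'$, contradicting $A' \subseteq A = H \subseteq W$. This contradiction shows $|A'| < |N_F(A')| = |N_G(A')|$, as claimed.

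The main obstacle is the first step, and specifically the case $\hat B = \emptyset$: to rule out a ``mixed'' decomposition there one must combine the independence of $C$, the absence of isolated vertices in $F$, and the identity $N_F(B) = A$. Once $H = A$ is established, the remainder is a short and routine consequence of the definition of $H$ together with elementary bipartite matching facts, so I do not expect any further complications.
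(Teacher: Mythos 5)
Your proposal is correct and follows essentially the same route as the paper: show that non-applicability of the DM Reduction forces $H = A$, so that $A$ is the unique minimum vertex cover of $F$, and then derive a contradiction from a violating subset $A'$ via the alternative cover $(A \setminus A') \cup N_F(A')$. Your argument is in fact more complete than the paper's, which simply asserts that the decomposition must be $(\varnothing, A, B)$ without the case analysis you supply.
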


\begin{proof}
    If the DM Reduction cannot be applied, the Dulmage-Mendelsohn decomposition of $F$ must be $(R, H, C) = (\varnothing, A, B)$.
    According to the definition of the Dulmage-Mendelsohn decomposition, $A$ is a vertex cover, and $H = A$ is the intersection of all minimum vertex covers of $F$.
    As a result, we know that $A$ is the unique minimum vertex cover of the auxiliary subgraph $F$.
    
    We assume to the contrary that there exists a subset $A' \subseteq A$ such that $|A'| \geq |N_{G}(A')|$.
    Then we immediately know that $(A \setminus A') \cup N_{G}(A')$ is a minimum vertex cover distinct from $A$, leading to a contradiction.
\end{proof}

\begin{lemma} \label{SAFENESS: MEASURE}
    Given an instance $\mathcal{I} = (G, T, M, k)$ of \textsc{SFVS-S}, let $F$ be the auxiliary subgraph of $G$ with bipartition $V(F) = A \cup B$.
    If the DM Reduction cannot be applied and $k < |A|$, the instance $\mathcal{I}$ is a No-instance.
\end{lemma}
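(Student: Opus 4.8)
The plan is to show that \emph{every} solution $S$ of $\mathcal{I}$ satisfies $|S| \ge |A|$; since the hypothesis gives $k < |A|$, this means no solution of size at most $k$ can exist, so $\mathcal{I}$ is a No-instance. The only structural input I will need is Lemma~\ref{STRUCTURE: DM DECOMPOSITION}, which (because the DM Reduction cannot be applied) guarantees the strict Hall-type inequality $|A'| < |N_{G}(A')|$ for every non-empty $A' \subseteq A$, together with the facts that $A \subseteq I$, $B = N_{G}(A) \subseteq K$, that $A$ is an independent set of $G$, and that every edge incident to a vertex of $A$ is marked (these are exactly how $A$ and $B$ were defined, and they also imply $A$ has no isolated vertex in $F$).

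First I would fix an arbitrary solution $S$ of $\mathcal{I}$ and set $A_{0} \coloneq A \setminus S$. The key local observation is that $N_{G}(A_{0}) \subseteq S$: indeed, take any $a \in A_{0}$ and any neighbour $b$ of $a$ in $G$; the edge $ab$ is marked, so $S$ must cover it, and since $a \notin S$ we get $b \in S$. Because $A \subseteq I$ and every neighbour of a vertex of $A$ lies in $B \subseteq K$, the two sets $S \cap A$ and $N_{G}(A_{0})$ are disjoint subsets of $S$ (they live in the two different sides $I$ and $K$ of the split partition). Hence
\begin{equation*}
    |S| \;\ge\; |S \cap A| + |N_{G}(A_{0})| \;=\; \bigl(|A| - |A_{0}|\bigr) + |N_{G}(A_{0})|.
\end{equation*}

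Finally I would split into two cases according to whether $A_{0}$ is empty. If $A_{0} = \varnothing$, then the displayed inequality already gives $|S| \ge |A|$. If $A_{0} \neq \varnothing$, then Lemma~\ref{STRUCTURE: DM DECOMPOSITION} applied to $A' = A_{0}$ yields $|N_{G}(A_{0})| > |A_{0}|$, i.e. $|N_{G}(A_{0})| \ge |A_{0}| + 1$, and plugging this in gives $|S| \ge |A| + 1 > |A|$. In either case $|S| \ge |A| > k$, so no solution of size at most $k$ exists. The main thing to get right is the disjointness step — that $S \cap A$ and $N_{G}(A_{0})$ do not overlap — which is where the split structure (and the fact that $B = N_G(A)$ lies entirely in the clique side) is genuinely used; after that, the argument is just the strict Hall inequality from Lemma~\ref{STRUCTURE: DM DECOMPOSITION} plus counting.
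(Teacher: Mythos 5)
Your proof is correct and follows essentially the same route as the paper's: both arguments reduce to the observation that any solution must cover every (marked) edge of $F$, hence contains a vertex cover of $F$, whose size is at least $|A|$ when the DM Reduction is inapplicable. The only difference is presentational — you derive $|S| \ge |A|$ directly from the strict Hall-type inequality of Lemma~\ref{STRUCTURE: DM DECOMPOSITION} via the sets $S \cap A$ and $N_G(A \setminus S)$, whereas the paper invokes the fact (established inside the proof of that lemma) that $A$ is the minimum vertex cover of $F$; your version is, if anything, slightly more self-contained.
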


\begin{proof}
    The size of the solution to $\mathcal{I} = (G, T, M, k)$ is no less than the size of the minimum vertex cover of $F$ since all marked edges need to be covered.
    If the DM Reduction cannot be applied, Lemma~\ref{STRUCTURE: DM DECOMPOSITION} implies that $A$ is the minimum vertex cover of $F$.
    Consequently, the size of the minimum solution to $\mathcal{I}$ must be no less than $|A|$, which means that an instance $\mathcal{I}$ is a No-instance if $k < |A|$.
\end{proof}

\section{Algorithms for \textsc{SFVS in Split Graphs}} \label{SEC: SFVS-S}

This section presents an algorithm for \textsc{SFVS-S}.
This algorithm plays a critical role in the algorithm for \textsc{SFVS-C}.

\subsection{Good Instances}

We begin by introducing a special instance of \textsc{SFVS-S}, which we refer to as a good instance.
In this subsection, we show that solving good instances is as hard as solving normal instances of \textsc{SFVS-S} in some sense.

\begin{definition}[Good Instances] \label{DEF: GOOD}
    An instance $\mathcal{I} = (G = (V, E), T, M, k)$ of \textsc{SFVS-S} is called \emph{good} if it satisfies the following properties:
    \begin{enumerate}[\textup{(}i\textup{)}]
        \item $(T, V \setminus T)$ is the split partition, where terminal set $T$ is the independent set and $V \setminus T$ forms the clique;
        \item every marked edge connects one terminal and one non-terminal; and
        \item the DM reduction cannot be applied on the auxiliary subgraph determined by $(T, V \setminus T)$.
    \end{enumerate}
\end{definition}

\begin{lemma} \label{ALG: GOOD = SFVS-S}
    For any constant $\alpha > 1$, \textsc{SFVS-S} can be solved in $\mathcal{O}^{*}(\alpha^{k})$ if and only if \textsc{SFVS-S} on good instances can be solved in $\mathcal{O}^{*}(\alpha^{k})$.
\end{lemma}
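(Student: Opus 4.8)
The ``only if'' direction is trivial, since a good instance is just a particular \textsc{SFVS-S} instance with the same parameter. For the converse, the plan is to give a polynomial-time transformation that, from an arbitrary \textsc{SFVS-S} instance $\mathcal{I} = (G, T, M, k)$, produces an equivalent family of $1 + \mathcal{O}(n^{2})$ instances, each with parameter at most $k$, of which all but one have a clique of size at most two (hence are solvable in polynomial time) while the last one is good; running the assumed $\mathcal{O}^{*}(\alpha^{k})$ algorithm for good instances on that last one then solves $\mathcal{I}$ in $\mathcal{O}^{*}(\alpha^{k})$ time. I would fix a split partition $(I, K)$ of $G$ and proceed in three stages, only the first of which branches.

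In Stage 1 I would remove all terminals from the clique. The enabling observation is the dichotomy that every solution $S$ satisfies $K \cap T \subseteq S$ or $|K \setminus S| \leq 2$: if a terminal of $K$ together with two further vertices of $K$ all avoided $S$, they would form a $T$-triangle in $G - S$. Accordingly I would branch into the instance obtained by putting $K \cap T$ into the solution (delete it and decrease $k$ by $|K \cap T|$) and, for each choice of the at most two surviving clique vertices, the instance obtained by putting the rest of $K$ into the solution. The latter $\mathcal{O}(n^{2})$ instances have a clique of size at most $2$ and are solvable in polynomial time, so I would continue only with the former, in which $K \cap T = \varnothing$ and hence $T \subseteq I$. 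Note that once $T \subseteq I$, every $v \in I \setminus T$ lies in no $T$-triangle, since each triangle through $v$ has its other two vertices in $K$, none of which is a terminal.

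Stage 2 would make the instance good except possibly for the DM condition, using only reduction rules. For every marked edge $xy$ with both endpoints in $K$ I would unmark $xy$ and add a fresh terminal vertex $z$ adjacent precisely to $x$ and $y$: the only new requirement is that $S$ hits $\{x, y, z\}$, which is implied by the former marked-edge constraint on $xy$, and since $z$ lies in exactly this one $T$-triangle, one checks that a minimum solution may be taken to avoid $z$ (swap $z$ for $x$), so the instance is unchanged; as $z$ can be placed on the independent side, $G$ stays split. Then, for each $v \in I \setminus T$, I would delete its unmarked edges (safe, as this destroys only non-$T$-triangles), delete $v$ if it becomes isolated, and otherwise relabel $v$ as a terminal; this last step is safe because $v$, now incident only to marked edges, creates only new $T$-triangles of the form $\{v, w_{a}, w_{b}\}$ with $w_{a}, w_{b} \in N_{G}(v) \subseteq K$, and the constraint from such a triangle is implied by the marked edges $v w_{a}$ and $v w_{b}$. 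Afterwards $I = T$, so $(T, V \setminus T)$ is the split partition and every marked edge joins a terminal to a non-terminal, giving properties (i) and (ii) of a good instance. Finally, in Stage 3 I would apply the DM Reduction to the auxiliary subgraph of $(T, V \setminus T)$ exhaustively: by Lemma~\ref{SAFENESS: DM DECOMPOSITION} it is safe, it never raises $k$, it deletes at least two vertices per application (so it terminates in polynomial time), and it preserves (i) and (ii) since it only deletes vertices; once it halts, property (iii) holds too, so the resulting instance is good.

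The step I expect to be the main obstacle is Stage 1. Every rule in Stages 2 and 3 is a reduction that merely shrinks the instance, but eliminating clique-terminals genuinely seems to require branching, which would ruin the $\mathcal{O}^{*}(\alpha^{k})$ bound unless the blow-up stays polynomial --- and that is exactly what the dichotomy secures, because in the ``large clique'' branch all but two clique vertices are forced into the solution and the instance collapses. A secondary subtlety is the safety of relabeling an independent non-terminal as a terminal in Stage 2; this is why I would first strip its unmarked edges, so that the $T$-triangles created by the relabeling carry only constraints already enforced by marked edges.
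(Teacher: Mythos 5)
Your proof is correct, and while it follows the same skeleton as the paper's (eliminate clique-terminals, massage the remaining instance into a good one, apply the DM Reduction, call \texttt{GoodAlg}), the two key mechanisms are executed differently. For clique-terminals, the paper uses a genuine recursive branching: it picks a constant $C$ so that the branching vector $(1,C,C)$ has factor at most $\alpha$, brute-forces the case $|K|\leq 2C$, and otherwise branches on a single terminal $t\in K$ into ``delete $t$'' / ``delete $K'$'' / ``delete $K''$'', recursing on each child. You instead apply the dichotomy ``$K\cap T\subseteq S$ or $|K\setminus S|\leq 2$'' once, generating $\mathcal{O}(n^{2})$ side-instances whose clique has size at most two (and which are indeed polynomial-time solvable, since each surviving $T$-triangle or marked edge is then hit by a forced vertex once the trace of the solution on the two clique vertices is guessed), plus one main instance with $K\cap T=\varnothing$. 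This avoids the branching-factor machinery and the $\alpha$-dependent constant $C$ entirely, at the price of the small extra argument that tiny-clique split instances are polynomial. For the independent non-terminals, the paper absorbs each $v\in I\setminus T$ into the clique by adding all missing edges to $K$ (safe because $v$ lies in no $T$-triangle), whereas you strip $v$'s unmarked edges and relabel it as a terminal; both are sound, and your justification that the newly created $T$-triangles through $v$ are implied by the marked edges $vw_{a}$ is the right one. The gadget for marked edges inside $K$ and the final exhaustive application of the DM Reduction (safe by Lemma~\ref{SAFENESS: DM DECOMPOSITION}, terminating, and preserving properties (i) and (ii)) coincide with the paper's.
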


\begin{proof}
    We only need to show that if there exists an algorithm \texttt{GoodAlg} solving good instances in time $\mathcal{O}^{*}(\alpha^{k})$, there also exists an algorithm for \textsc{SFVS-S} running in the same time bound $\mathcal{O}^{*}(\alpha^{k})$.

    Let $\mathcal{I} = (G = (V, E), T, M, k)$ be an instance of \textsc{SFVS-S}.
    Notice that $\alpha$ is a constant.
    We select a sufficiently large constant $C$ such that the branching factor of the branching vector $(1, C, C)$ does not exceed $\alpha$.
    Our algorithm for \textsc{SFVS-S} is constructed below.
         
    First, we find the split partition $(I, K)$ of $G$ in polynomial time.
    If $|K| \leq 2C$, we solve the instance directly in polynomial time by brute-force enumerating subsets of $K$ in the solution.
    Otherwise, we assume that the size of $K$ is at least $2C + 1$.
    
    Next, we consider two cases.
    
    \textbf{Case 1:} There is a terminal $t \in K$.
    In this case, we partition $K \setminus \Set{t}$ into two parts $K'$ and $K''$ such that $|K'| \geq C$ and $|K''| \geq C$.
    If $t$ is not included in the solution, at most one vertex in the clique $K \setminus \Set{t}$ is not contained in the solution.
    Consequently, either $K'$ or $K''$ must be part of the solution.
    We can branch into three instances by either 
    \begin{itemize}
        \item removing $t$, and decreasing $k$ by $1$;
        \item removing $K'$, and decreasing $k$ by $|K'|$; or
        \item removing $K''$, and decreasing $k$ by $|K''|$.
    \end{itemize}
    This branching rule yields a branching vector $(1, |K'|, |K''|)$ (w.r.t. the measure $k$) with the branching factor not greater than $\alpha$ since $|K'| \geq C$ and $|K''| \geq C$.
    
    \textbf{Case 2:} No terminal is in $K$.
    For this case, each non-terminal $v \in I$ is not contained in any $T$-triangle.
    However, we cannot directly remove $v$ since it may be incident to marked edges.
    We can add an edge between $v$ and each vertex $u \in K$ not adjacent to $v$ without creating any new $T$-triangle or marked edge.
    This operation will change $v$ from a vertex in $I$ to a vertex in $K$, preserving the graph as a split graph.
    After handling all non-terminal $v \in I$, we know that the terminal set and non-terminal set form a split partition.
    Subsequently, for each marked edge between two non-terminals $v$ and $u$, we add a new $2$-degree terminal $t_{uv}$ adjacent to $u$ and $v$ and unmark the edge $uv$.
    We then apply the DM Reduction and obtain a good instance.
    Finally, we call the algorithm \texttt{GoodAlg} to solve the good instance in time $\mathcal{O}^{*}(\alpha^{k})$.

    By either branching with a branching factor not greater than $\alpha$ or solving the instance directly in $\mathcal{O}^{*}(\alpha^{k})$ time, our algorithm runs in time $\mathcal{O}^{*}(\alpha^{k})$.
\end{proof}

In the rest of this section, we only need to focus on the algorithm, denoted as \texttt{GoodAlg}, for good instances of \textsc{SFVS-S}.

\subsection{The Measure and Its Properties}

With the help of the auxiliary subgraph and DM Reduction (defined in Section~\ref{SEC: DM REDUCTION}), we use the following specific measure to analyze our algorithm.

\begin{definition} [The Measure of Good Instances] \label{DEF: MEASURE}
    Given a good instance $\mathcal{I} = (G, T, M, k)$ of \textsc{SFVS-S}, let $F$ be the auxiliary subgraph of $G$ with bipartition $V(F) = A \cup B$.
    We define the measure of the instance $\mathcal{I}$ as
    \begin{equation*}
        \mu(\mathcal{I}) \coloneq k - \frac{2}{3} |A|.
    \end{equation*}
\end{definition}

In our algorithm \texttt{GoodAlg}, the DM Reduction will be applied as much as possible once the graph changes to keep the instance always good.
Additionally, according to Lemma~\ref{SAFENESS: MEASURE}, an instance $\mathcal{I}$ can be solved in polynomial time when $\mu(\mathcal{I}) \leq 0$.
Thus, we can use $\mu(\cdot)$, defined in Definition~\ref{DEF: MEASURE}, as our measure to analyze the algorithm.

We may branch on a vertex by including it in the solution or excluding it from the solution in our algorithm.
In the first branch, we delete the vertex from the graph and decrease the parameter $k$ by $1$.
In the second branch, we execute a basic operation of \emph{hiding} the vertex, which is defined according to whether the vertex is a terminal.
\begin{description}
    \item[Hiding a terminal $t$:] delete every vertex in $N_{M}(t)$ and decrease $k$ by $|N_{M}(t)|$.
    \item[Hiding a non-terminal $v$:] delete every terminal in $N_{M}(v)$ and decrease $k$ by $|N_{M}(v)|$; for each $T$-triangle $vtu$ containing $v$, mark edge $tu$; and last, delete $v$ from the graph.
\end{description}
Here, the notation $N_{M}(v)$ represents the set of the vertices adjacent to $v$ via a marked edge.

\begin{lemma} \label{SAFENESS: OPERATIION}
    If there exists a solution containing a vertex $v$, then it is safe to delete $v$, decrease $k$ by $1$, and do the DM Reduction.
    If there exists a solution not containing a vertex $v$, then it is safe to hide $v$ and do the DM Reduction.
    Moreover, the resulting instance is good after applying either of the above two operations.
\end{lemma}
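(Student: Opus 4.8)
The plan is to verify the three assertions separately: (a) the "include $v$" operation is safe, (b) the "hide $v$" operation is safe, and (c) after either operation (followed by exhaustive DM Reduction) the instance is again good. Throughout I would freely use Lemma~\ref{T TRIANGLE} to work with $T$-triangles instead of $T$-cycles, and Lemma~\ref{SAFENESS: DM DECOMPOSITION} to absorb the DM Reduction into the safeness argument.

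\emph{Safeness of including $v$.} If some solution $S$ contains $v$, then $S \setminus \{v\}$ is a solution of size at most $k-1$ for the instance induced by $G - v$: deleting a vertex cannot create new $T$-triangles or new marked edges, and every marked edge incident to $v$ is already satisfied by $v$'s removal, while every other marked edge and every $T$-triangle was already hit by $S \setminus \{v\}$. Conversely, a solution of size $\le k-1$ for $G-v$ plus $v$ is a solution of size $\le k$ for $G$. Then Lemma~\ref{SAFENESS: DM DECOMPOSITION} gives that following up with the DM Reduction preserves Yes/No-ness.

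\emph{Safeness of hiding $v$.} Here I would split on whether $v$ is a terminal. If $v$ is a terminal $t$ and $S$ is a solution avoiding $t$: every marked edge $tu$ with $u \in N_M(t)$ must be covered by $S$, hence $N_M(t) \subseteq S$; removing $N_M(t)$ from $G$ and from $S$ leaves a solution of size $\le k - |N_M(t)|$ for the reduced instance, because in a split graph with $t$ the independent-set side, $t$ lies in no remaining $T$-triangle unless one of its clique-neighbours is a terminal — but if $tuw$ were a $T$-triangle with $u,w \in K$, it is a $T$-triangle regardless of which vertex is the terminal, and $S$ (avoiding $t$) must hit it at $u$ or $w$, which survives; actually the cleanest route is: deleting $N_M(t)$ destroys all marked edges at $t$, and any surviving $T$-triangle through $t$ is also a $T$-triangle through a non-terminal endpoint, so it is still hit. (I would double-check this edge case carefully; see the obstacle remark below.) For the converse, any solution for the reduced instance together with $N_M(t)$ is a solution for $\mathcal{I}$ of size $\le k$. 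If $v$ is a non-terminal, the argument is the same for the marked edges ($N_M(v) \cap T \subseteq S$), and additionally: for each $T$-triangle $vtu$ through $v$ that $S$ must hit but cannot hit at $v$, $S$ hits it at $t$ or $u$, so after marking $tu$ the new constraint "$S$ covers $tu$" is satisfied by the same $S$; after deleting $v$, no $T$-triangle through $v$ remains and the marked edges $tu$ faithfully record the residual obligations. Conversely a solution $S'$ for the reduced instance covers each such marked edge $tu$, hence hits the original $T$-triangle $vtu$, and together with $N_M(v)$ it hits every marked edge at $v$ and every $T$-triangle not through $v$; so $S' \cup N_M(v)$ is a solution for $\mathcal{I}$ of size $\le k$. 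Again wrap with Lemma~\ref{SAFENESS: DM DECOMPOSITION} for the trailing DM Reduction.

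\emph{The resulting instance is good.} This is the part I expect to require the most care. I must re-establish conditions (i)–(iii) of Definition~\ref{DEF: GOOD}. Condition (i): the graph stays a split graph with the terminals forming the independent set — deleting vertices preserves this, and the only structural change is marking an edge $tu$ in the non-terminal clique during the hide-non-terminal step, which touches no vertex partition. But wait: marking an edge $tu$ inside the clique violates condition (ii), which demands every marked edge join a terminal and a non-terminal. So the hide operation must, as in Case~2 of Lemma~\ref{ALG: GOOD = SFVS-S}, replace each such clique marked edge $tu$ by a fresh degree-two terminal $t_{tu}$ adjacent to $t$ and $u$ with $tu$ unmarked — I would state explicitly that the "mark edge $tu$" instruction is shorthand for this gadget (this is consistent with the generalized-problem observation in the preliminaries). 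With that reading, (ii) holds: remaining marked edges are the untouched old ones plus none new. Condition (iii): after the graph changes we explicitly re-apply the DM Reduction to exhaustion, so by definition it can no longer be applied — and since the DM Reduction only deletes vertices and decreases $k$, it does not disturb (i) or (ii). The main obstacle, then, is bookkeeping: making sure the "mark edge" shorthand and the re-computation of the auxiliary subgraph $F$ after deletions are handled so that the instance genuinely satisfies all three clauses; once the gadget interpretation is pinned down, each clause follows directly. I would therefore open the proof by fixing that convention, then dispatch (a), (b), (c) in turn.
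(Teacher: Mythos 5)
Your safeness arguments for both operations match the paper's (short) proof: a solution containing $v$ restricts to $G-v$, a solution avoiding $v$ must contain $N_M(v)$ and must hit each $T$-triangle through $v$ at one of the other two vertices, and Lemma~\ref{SAFENESS: DM DECOMPOSITION} absorbs the trailing DM Reduction. The problem is in your goodness argument, where you claim that hiding a non-terminal $v$ marks an edge $tu$ \emph{inside the clique}, violating condition (ii) of Definition~\ref{DEF: GOOD}, and you then repair this by replacing each such edge with a fresh degree-two terminal gadget. This premise is false. In a good instance the terminal set $T$ is the independent side of the split partition, so every triangle contains at most one terminal; a $T$-triangle $vtu$ through the non-terminal $v$ therefore has exactly one terminal among $\{t,u\}$, and the newly marked edge $tu$ automatically joins one terminal and one non-terminal. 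This is precisely the observation the paper uses (``either operation only deletes some vertices and marks some edges between terminals and non-terminals''), and it is the step your write-up is missing.

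Beyond being unnecessary, your gadget is actively harmful: it changes the operation whose safeness the lemma asserts, it adds vertices to the independent side and alters the auxiliary subgraph $F$, and it would invalidate the measure accounting in Lemma~\ref{MEASURE: HIDE NONTERMINAL}, which computes $\mu(\mathcal{I})-\mu(\mathcal{I}_1)$ under the assumption that hiding a non-terminal only deletes vertices and marks terminal--non-terminal edges. Drop the gadget, insert the one-line observation that triangles meet $T$ in at most one vertex, and the goodness of the resulting instance follows exactly as you outline for the remaining clauses (deletions preserve (i) and (ii); exhaustive DM Reduction restores (iii)).
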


\begin{proof}
    Assuming a solution $S$ contains $v$, it is trivial that $S \setminus \Set{v}$ is also a solution to the instance $(G - v, T \setminus \Set{v}, k - 1)$.
    Moreover, since the DM Reduction is safe by Lemma~\ref{SAFENESS: DM DECOMPOSITION}, the first operation in the lemma is safe.
    
    Now, we assume that a solution $S$ does not contain a vertex $v$.
    Since $S$ must cover all edges in $M$, we know that $S$ contains all neighbours of $v$ in $M$.
    This shows that hiding $v$ is safe if $v$ is a terminal.
    Suppose that $v$ is a non-terminal, for every $T$-triangle $vut$ containing $v$, we have that $S \cap \Set{u, t} \neq \varnothing$.
    Consequently, it is safe to mark the edge $ut$ further.
    Moreover, since the DM Reduction is safe by Lemma~\ref{SAFENESS: DM DECOMPOSITION}, the second operation in the lemma is safe.

    Finally, either operation only deletes some vertices and marks some edges between terminals and non-terminals.
    Hence, the terminal set and the non-terminal set still form an independent set and a clique, repetitively.
    Additionally, the DM Reduction cannot be applied on the resulting instances.
    Therefore, the resulting instance is good after applying either of the above two operations.
\end{proof}

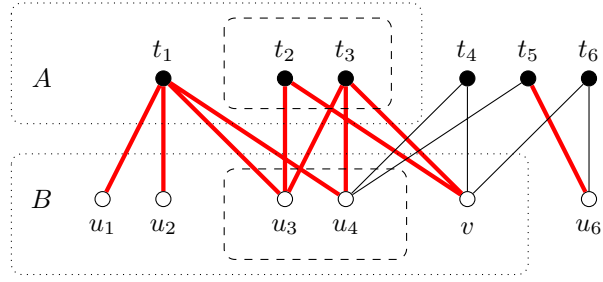
\begin{figure}[t!]
    \centering
        \begin{tikzpicture}
            [
            scale = 0.8,
            nonterminal/.style={draw, shape = circle, fill = white, inner sep=2pt},
            terminal/.style={draw, shape = circle, fill = black, inner sep=2pt},
            ]
            \node[] (A) at(1, 0) {$A$};
            \node[terminal, label={[yshift = 0mm]$t_{1}$}] (t1) at(3, 0) {};
            \node[terminal, label={[yshift = 0mm]$t_{2}$}] (t2) at(5, 0) {};
            \node[terminal, label={[yshift = 0mm]$t_{3}$}] (t3) at(6, 0) {};
            \node[terminal, label={[yshift = 0mm]$t_{4}$}] (t4) at(8, 0) {};
            \node[terminal, label={[yshift = 0mm]$t_{5}$}] (t5) at(9, 0) {};
            \node[terminal, label={[yshift = 0mm]$t_{6}$}] (t6) at(10, 0) {};

            \node[] (B) at(1, -2) {$B$};
            \node[nonterminal, label={[yshift = -7mm]$u_{1}$}] (v1) at(2, -2) {};
            \node[nonterminal, label={[yshift = -7mm]$u_{2}$}] (v2) at(3, -2) {};
            \node[nonterminal, label={[yshift = -7mm]$u_{3}$}] (v3) at(5, -2) {};
            \node[nonterminal, label={[yshift = -7mm]$u_{4}$}] (v4) at(6, -2) {};
            \node[nonterminal, label={[yshift = -7mm]$v$}] (v5) at(8, -2) {};
            \node[nonterminal, label={[yshift = -7mm]$u_{6}$}] (v6) at(10, -2) {};

            \draw[ultra thick, red]
            (t1) -- (v1) (t1) -- (v2) (t1) -- (v3) (t1) -- (v4)
            (t2) -- (v3) (t2) -- (v5)
            (t3) -- (v3) (t3) -- (v4) (t3) -- (v5)
            (t5) -- (v6);
            \draw
            (t4) -- (v4) (t4) -- (v5)
            (t5) -- (v4)
            (t6) -- (v5) (t6) -- (v6);
            \draw[rounded corners, dotted]
            (0.5, 1.25) rectangle (7.25, -.75)
            (0.5, -1.25) rectangle (9, -3.25);
            \draw[rounded corners, dashed]
            (4, 1) rectangle (6.75, -.5)
            (4, -1.5) rectangle (7, -3);
        \end{tikzpicture}
        \caption{
        The graph $G$, where black vertices are terminals, white vertices are non-terminals, thick and red edges are marked edges, and edges between two non-terminals are not presented in the graph; the auxiliary subgraph is $F$ with bipartition $V(F) = A \cup B$, where $A = \Set{t_{1}, t_{2}, t_{3}}$ and $B = \Set{u_{1}, u_{2}, u_{3}, u_{4}, v}$ (denoted by dotted boxes).
        After deleting $v$, the DM Reduction can be applied.
        When doing the DM Reduction, $\hat{A} = \Set{t_{2}, t_{3}}$ and $\hat{B} = \Set{u_{3}, u_{4}}$ (denoted by dashed boxes) are deleted.}
        \label{FIG: DELETE + DM}
\end{figure}

\begin{lemma} \label{MEASURE: DELETE}
    Given the good instance $\mathcal{I} = (G = (V, E), T, M, k)$, let $F$ be the auxiliary subgraph of $G$ with bipartition $V(F) = A \cup B$, and $v$ be a vertex in $V \setminus A$.
    Let $\mathcal{I}_{1}$ be the instance obtained from $\mathcal{I}$ by first deleting $v$ and then doing the DM Reduction \textup{(}cf.~Fig.~\ref{FIG: DELETE + DM}\textup{)}.
    Then $\mathcal{I}_{1}$ is a good instance such that $\mu(\mathcal{I}) - \mu(\mathcal{I}_{1}) \geq 0$.
\end{lemma}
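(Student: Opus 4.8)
My plan is to route everything through the size of a minimum vertex cover of the auxiliary subgraph, which I write $\tau(\cdot)$. Let $\mathcal{I}_{1} = (G_{1}, T_{1}, M_{1}, k_{1})$ be the resulting instance, let $Y$ be the (terminal and non-terminal) vertices removed while applying the DM Reduction exhaustively, so $G_{1} = G - v - Y$ and $k_{1} = k - D$, where $D \ge 0$ is the total amount by which $k$ is decreased over those reductions, and let $A_{1}$ be the auxiliary set of $\mathcal{I}_{1}$. First I would check that $\mathcal{I}_{1}$ is good: deleting $v$ and applying the DM Reduction only delete vertices and never mark an edge, so $T_{1} = T \cap V(G_{1})$ remains an independent set, $V(G_{1}) \setminus T_{1}$ remains a clique, and every surviving marked edge still joins a terminal to a non-terminal; since the DM Reduction is applied exhaustively it is no longer applicable to $\mathcal{I}_{1}$ (the same reasoning as in the last paragraph of the proof of Lemma~\ref{SAFENESS: OPERATIION}). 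Since $\mathcal{I}_{1}$ is good, Lemma~\ref{STRUCTURE: DM DECOMPOSITION} makes $A_{1}$ a minimum vertex cover of the auxiliary subgraph $\Phi$ of $\mathcal{I}_{1}$, so $|A_{1}| = \tau(\Phi)$. Thus the whole statement reduces to the single inequality $|A_{1}| \ge |A| - D$, after which
\begin{equation*}
\mu(\mathcal{I}) - \mu(\mathcal{I}_{1}) = (k - k_{1}) - \tfrac{2}{3}\bigl(|A| - |A_{1}|\bigr) = D - \tfrac{2}{3}\bigl(|A| - |A_{1}|\bigr) \ge D - \tfrac{2}{3}D = \tfrac{1}{3}D \ge 0 .
\end{equation*}
(If one reads ``deleting $v$'' as also decreasing $k$ by $1$, then $k - k_{1} = 1 + D$ and the same computation yields $1 + \tfrac{1}{3}D \ge 0$.)

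To prove $|A_{1}| \ge |A| - D$ I would split it into two halves. The first half uses the hypothesis $v \notin A$ together with strict expansion: let $F_{0}$ be the auxiliary subgraph of $G - v$, and I claim $\tau(F_{0}) \ge |A|$. Since $\mathcal{I}$ is good, Lemma~\ref{STRUCTURE: DM DECOMPOSITION} gives $|N_{G}(A')| > |A'|$ for every non-empty $A' \subseteq A$, hence $|N_{G-v}(A')| \ge |N_{G}(A')| - 1 \ge |A'|$. Moreover every edge of $G$ incident to $A$ is marked and lands in $B = N_{G}(A)$, and no vertex of $A$ becomes isolated after deleting the single vertex $v$ (each has at least two neighbours), so every vertex of $A$ stays in the auxiliary set of $G - v$ and $N_{F_{0}}(A') = N_{G-v}(A')$ for $A' \subseteq A$. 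Hall's theorem then produces a matching of $F_{0}$ saturating $A$, whence $\tau(F_{0}) \ge |A|$.

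The second half tracks $\tau$ through the DM Reduction steps $F_{0} = \Phi_{0} \to \Phi_{1} \to \dots \to \Phi_{\ell} = \Phi$, where $\Phi_{i+1}$ is the auxiliary subgraph recomputed after removing $\widehat{A}_{i} = A(\Phi_{i}) \cap (R_{i} \cup C_{i})$ and $\widehat{B}_{i} = B(\Phi_{i}) \cap (R_{i} \cup H_{i})$, and $D = \sum_{i < \ell} |\widehat{B}_{i}|$. For one step I would argue $\tau(\Phi_{i+1}) \ge \tau(\Phi_{i}) - |\widehat{B}_{i}|$ in two pieces. (a) From $H_{i} = N_{\Phi_{i}}(C_{i})$ and the absence of edges between $R_{i}$ and $C_{i}$ one gets $N_{\Phi_{i}}(\widehat{A}_{i}) \subseteq \widehat{B}_{i}$; hence for any minimum vertex cover $S$ of $\Phi_{i} - \widehat{A}_{i} - \widehat{B}_{i}$ the set $S \cup \widehat{B}_{i}$ is a vertex cover of $\Phi_{i}$, so $\tau(\Phi_{i} - \widehat{A}_{i} - \widehat{B}_{i}) \ge \tau(\Phi_{i}) - |\widehat{B}_{i}|$. (b) The recomputed subgraph $\Phi_{i+1}$ contains $\Phi_{i} - \widehat{A}_{i} - \widehat{B}_{i}$ as a subgraph: rebuilding can only \emph{add} newly exposed terminals whose remaining edges are all marked, together with their edges (this is the phenomenon of vertex $t_{5}$ in Fig.~\ref{FIG: DELETE + DM}), never delete vertices or edges, and $\tau$ is monotone under adding vertices and edges. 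Combining (a) and (b) and chaining over all steps starting from $\tau(F_{0}) \ge |A|$ gives $\tau(\Phi) \ge |A| - D$, i.e. $|A_{1}| \ge |A| - D$.

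I expect the main obstacle to be the bookkeeping in the second half: because the auxiliary subgraph is rebuilt from scratch after every DM Reduction, one must argue precisely that these rebuilds only enlarge it (handling the ``newcomer'' terminals), and one must extract the structural fact $N_{\Phi_{i}}(\widehat{A}_{i}) \subseteq \widehat{B}_{i}$ cleanly from the definition of the Dulmage--Mendelsohn decomposition (no $R$--$C$ edges, $N(C) = H$). By contrast, the use of strict expansion in Lemma~\ref{STRUCTURE: DM DECOMPOSITION} to absorb the loss caused by removing one vertex is the conceptual crux but is short once it is set up as a Hall-type argument.
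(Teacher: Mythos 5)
Your proof is correct, but it takes a genuinely different route from the paper's. The paper argues in one shot: writing $\hat{A}_{0},\hat{B}_{0}$ for the sets removed by the DM Reduction after deleting $v$, it computes $\mu(\mathcal{I})-\mu(\mathcal{I}_{1})=\tfrac{2}{3}(|\hat{B}_{0}|-|\hat{A}_{0}|)$ directly, then shows $|\hat{A}_{0}|=|\hat{B}_{0}|$ by combining $N_{G}(\hat{A}_{0})\subseteq\hat{B}_{0}\cup\{v\}$ with the strict expansion of Lemma~\ref{STRUCTURE: DM DECOMPOSITION} (giving $|\hat{A}_{0}|\le|\hat{B}_{0}|$) and the matching saturating $\hat{B}_{0}$ from the decomposition (giving the reverse); so the measure change is exactly $0$ in its single-round accounting. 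You instead track the minimum vertex cover $\tau$ of the auxiliary subgraph: Hall's theorem plus strict expansion shows deleting one vertex outside $A$ cannot push $\tau$ below $|A|$, and then each DM round loses at most $|\hat{B}_{i}|$ from $\tau$ while $k$ drops by the same amount. Your version costs more bookkeeping but buys two things the paper glosses over: it explicitly handles the DM Reduction being applied several times (with the auxiliary subgraph rebuilt, and ``newcomer'' terminals entering $A$, between rounds), and it never needs to apply the expansion property to $\hat{A}_{0}$ itself --- which is delicate in the paper's argument, since $\hat{A}_{0}$ lives in the auxiliary set of $G-v$, which may strictly contain $A$, whereas Lemma~\ref{STRUCTURE: DM DECOMPOSITION} is stated only for subsets of $A$. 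The paper's version, in exchange, is shorter and yields the sharper statement that the measure change is exactly zero. Both proofs share the same conceptual crux: the strict expansion of a good instance absorbs the loss caused by removing the single vertex $v\notin A$.
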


\begin{proof}
    Let $\mathcal{I}_{0} = (G_{0}, T_{0}, M_{0}, k_{0})$ be the instance after deleting $v$ from $\mathcal{I}$.
    Then, $\mathcal{I}_{1} = (G_{1}, T_{1}, M_{1}, k_{1})$ is the instance after doing the DM Reduction from $\mathcal{I}_{0}$.
    Let $F_{i}$ with bipartition $V(F_{i}) = A_{i} \cup B_{i}$ be the auxiliary subgraph of $G_{i}$, where $i \in \Set{0, 1}$.
    
    It is clear that $\mu(\mathcal{I}_{0}) = \mu(\mathcal{I})$ since no edge is newly marked and $k_{0} = k$ holds.
    
    Assume that $\hat{A}_{0} \subseteq A_{0}$ and $\hat{B}_{0} \subseteq B_{0}$ are deleted.
    We note that the DM Reduction cannot be applied if and only if $\hat{A}_{0} = \hat{B}_{0} = \varnothing$.
    Observe that $A_{1} = A_{0} \setminus \hat{A}_{0}$, $B_{1} = B_{0} \setminus \hat{B}_{0}$ and $k_{1} = k_{0} - |\hat{B}_{0}|$.
    Thus, we have
    \begin{equation*}
         \mu(\mathcal{I}) - \mu(\mathcal{I}_{1}) = \mu(\mathcal{I}_{0}) - \mu(\mathcal{I}_{1}) = \left(k_{0} - \frac{2}{3} |A_{0}| \right) - \left(k_{1} - \frac{2}{3} |{A}_{1}|\right) = |\hat{B}_{0}| - \frac{2}{3} |\hat{A}_{0}|.
    \end{equation*}
    
    If the DM Reduction cannot be applied, then $\hat{A}_{0} = \hat{B}_{0} = \varnothing$, which already implies that $\mu(\mathcal{I}) - \mu(\mathcal{I}_{1}) = 0$.
    Otherwise, the DM Reduction can be applied after deleting $v$.
    In this case, we have $v \in B$.
    Additionally, according to Lemma~\ref{STRUCTURE: DM DECOMPOSITION}, $\mathcal{I}$ is a good instance implying that $N_{G}(\hat{A}_{0}) = \hat{B}_{0} \cup \Set{v}$ and $|\hat{B}_{0} \cup \Set{v}| > |\hat{A}_{0}|$.
    Therefore, we obtain that $|\hat{A}_{0}| = |\hat{B}_{0}|$.
    Thus we get $\mu(\mathcal{I}) - \mu(\mathcal{I}_{1}) \geq 0$.
    The lemma holds.
\end{proof}

\begin{lemma} \label{MEASURE: HIDE TERMINAL}
    Given a good instance $\mathcal{I} = (G = (V, E), T, M, k)$, let $F$ be the auxiliary subgraph of $G$ with bipartition $V(F) = A \cup B$, and $t$ be a terminal in $T$.
    Let $\mathcal{I}_{1}$ be the instance obtained from $\mathcal{I}$ by first hiding $t$ and then doing the DM Reduction \textup{(}cf.~Fig.~\ref{FIG: HIDE TERMINAL + DM}\textup{)}.
    Then $\mathcal{I}_{1}$ is a good instance such that
    \begin{itemize}
        \item If $t \in A$, it holds $ \mu(\mathcal{I}) - \mu(\mathcal{I}_{1}) \geq 4/3$; and
        \item If $t \not\in A$, it holds $\mu(\mathcal{I}) - \mu(\mathcal{I}_{1}) \geq 0$.
    \end{itemize}
\end{lemma}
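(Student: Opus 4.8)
The plan is to mimic the bookkeeping of the proof of Lemma~\ref{MEASURE: DELETE}, tracking the auxiliary subgraph through the two stages: first the hiding operation, producing an intermediate instance $\mathcal{I}_{0}$, then the DM Reduction, producing $\mathcal{I}_{1}$. Write $\mathcal{I}_{0} = (G_{0}, T_{0}, M_{0}, k_{0})$ for the instance obtained by hiding $t$ (i.e. deleting all of $N_{M}(t)$, decreasing $k$ by $|N_{M}(t)|$, and deleting $t$ itself from the graph) and $\mathcal{I}_{1} = (G_{1}, T_{1}, M_{1}, k_{1})$ for the result of then exhaustively applying the DM Reduction. Let $F_{i}$ with bipartition $V(F_{i}) = A_{i} \cup B_{i}$ be the auxiliary subgraph of $G_{i}$ for $i \in \{0,1\}$. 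Since $\mathcal{I}$ is good, the DM Reduction could not be applied to it, so by Lemma~\ref{STRUCTURE: DM DECOMPOSITION} every non-empty $A' \subseteq A$ satisfies $|A'| < |N_{G}(A')|$; I would keep invoking this structural fact as the main lever.

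First I would analyze the hiding stage and show $\mu(\mathcal{I}) - \mu(\mathcal{I}_{0}) = k - k_{0} - \tfrac{2}{3}(|A| - |A_{0}|) = |N_{M}(t)| - \tfrac{2}{3}(|A| - |A_{0}|)$. The key is to control $|A| - |A_{0}|$. The vertices removed from $A$ going to $G_{0}$ are exactly those $a \in A$ that are either equal to $t$, or lie in $N_{M}(t)$, or become incident to a non-marked edge (this last can only happen when a $T$-triangle $v t u$ is created and $tu$ is marked — but $t$ is removed, so actually this does not arise; the only relevant effect is marking edges $ut$ for $T$-triangles through a hidden \emph{non}-terminal, which is not the present case). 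Thus hiding the terminal $t$ simply deletes $t$ and its marked-neighbourhood $N_{M}(t)$, and possibly a few former $A$-vertices whose only marked neighbour was in $N_{M}(t)$ now become isolated and leave $A$. I would argue that when $t \in A$, we have $N_{M}(t) \subseteq B$, so $|N_{M}(t)|$ vertices leave $B$ and at least $t$ itself leaves $A$, giving $|A| - |A_{0}| \geq 1$ while $k - k_{0} = |N_{M}(t)| \geq 1$ (since $t \in A$ means $t$ is incident only to marked edges and is non-isolated in $F$). A more careful count using Lemma~\ref{STRUCTURE: DM DECOMPOSITION} — any set $A' \subseteq A$ of vertices leaving $A$ because all their $F$-neighbours lie in $N_{M}(t)$ would force $|A'| < |N_{M}(t)|$ — should give $|A| - |A_{0}| \leq 1 + (|N_{M}(t)| - 1) \cdot (\text{something} < 1)$, but actually I expect the clean bound to be $|A| - |A_{0}| = |A'| + \mathbf{1}[t \in A]$ where $A' \subseteq A \setminus \{t\}$ is the set of newly-isolated $A$-vertices, and $N_G(A' \cup (N_M(t)\cap A)) \subseteq$ deleted vertices forces $|A'| + |N_M(t)\cap A| < |N_M(t)|$ by goodness. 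Combining, for $t \in A$ this yields $\mu(\mathcal{I}) - \mu(\mathcal{I}_{0}) = |N_{M}(t)| - \tfrac{2}{3}(|A| - |A_{0}|) \geq |N_{M}(t)| - \tfrac{2}{3}\big(1 + (|N_M(t)| - 1)\big) = \tfrac{1}{3}|N_M(t)| + \tfrac{1}{3} \geq \tfrac{2}{3}$; and pushing to $\geq 4/3$ will require showing $|N_M(t)| \geq 2$ when $t \in A$ survives as a non-trivial branching vertex, or else folding in an extra $2/3$ from the DM stage. For $t \notin A$, the same computation with $\mathbf{1}[t\in A] = 0$ gives $\mu(\mathcal{I}) - \mu(\mathcal{I}_{0}) = |N_M(t)| - \tfrac{2}{3}|A'| \geq |N_M(t)| - \tfrac{2}{3}(|N_M(t)| - 1) = \tfrac{1}{3}|N_M(t)| + \tfrac{2}{3} \geq 0$ (trivially, and in fact $\geq 2/3$ if $N_M(t) \cap A \neq \emptyset$).

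Second, for the DM Reduction stage I would copy the argument of Lemma~\ref{MEASURE: DELETE} almost verbatim: if $\hat{A}_{0}, \hat{B}_{0}$ are the deleted parts, then $\mu(\mathcal{I}_{0}) - \mu(\mathcal{I}_{1}) = \tfrac{2}{3}(|\hat{B}_{0}| - |\hat{A}_{0}|) \geq 0$, because the DM decomposition guarantees $|\hat{B}_{0}|$ can be taken into the solution while $N_{F_0}(\hat{A}_0) = \hat{B}_0$, and — crucially — since $\mathcal{I}$ was good, any set $\hat{A}_0 \subseteq A_0 \subseteq A$ satisfies $|\hat A_0| < |N_G(\hat A_0)|$; the extra neighbours of $\hat A_0$ in $G$ not surviving in $G_0$ are among the (few) vertices deleted by the hiding of $t$, so $|N_{G}(\hat A_0)| \le |\hat B_0| + |N_M(t)| + 1$, forcing $|\hat A_0| \le |\hat B_0|$ and hence the difference $\geq 0$. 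Adding the two stages gives the claimed bounds; the $t \notin A$ case is immediate, and the $t \in A$ case needs the slack from both stages to reach $4/3$. The main obstacle I anticipate is exactly pinning down the inequality $|A| - |A_0| \le 1$ (or the precise correction term) when $t \in A$ — i.e. ruling out, via Lemma~\ref{STRUCTURE: DM DECOMPOSITION} and the fact that all $F$-edges are marked, the possibility that hiding $t$ cascades several vertices out of $A$ faster than $k$ drops — and then verifying that the combined drop is at least $4/3$ rather than merely $2/3$; this is where I would spend the most care, likely by observing that a vertex in $A$ adjacent (in $F$) only to vertices of $N_M(t)$ together with $t$'s own $A$-membership would contradict $|A'| < |N_G(A')|$ unless $|N_M(t)|$ is large enough to absorb the $4/3$.
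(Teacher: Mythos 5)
Your overall plan---track the measure through the hiding stage ($\mathcal{I}\to\mathcal{I}_0$) and the DM stage ($\mathcal{I}_0\to\mathcal{I}_1$), using Lemma~\ref{STRUCTURE: DM DECOMPOSITION} as the lever---is the same as the paper's, but two steps fail and together they leave both bounds unproved. First, your treatment of the DM stage rests on ``$\hat{A}_{0}\subseteq A_{0}\subseteq A$'', which is false: hiding $t$ deletes the non-terminals in $N_{M}(t)$, and a terminal $t'\notin A$ all of whose unmarked neighbours lie in $N_{M}(t)$ joins $A_{0}\setminus A$, so $A_0\not\subseteq A$ and goodness of $\mathcal{I}$ cannot be applied to $\hat A_0$. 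Even granting it, your chain $|\hat A_0|<|N_G(\hat A_0)|\le|\hat B_0|+|N_M(t)|+1$ gives only $|\hat A_0|\le|\hat B_0|+|N_M(t)|$, not $|\hat A_0|\le|\hat B_0|$. In fact the DM stage alone satisfies $\mu(\mathcal{I}_0)-\mu(\mathcal{I}_1)=|\hat B_0|-\tfrac{2}{3}|\hat A_0|$, which can be negative: take two terminals of $A$, each with one common surviving marked neighbour $b$ and one further marked neighbour inside $N_M(t)$; this is compatible with goodness of $\mathcal{I}$, yet after hiding $t$ the DM Reduction deletes them with $|\hat A_0|=2$, $|\hat B_0|=1$, a net change of $-\tfrac{1}{3}$. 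So no stage-by-stage argument of this shape can close the proof; the loss in the DM stage must be paid for by the gain in the hiding stage, i.e.\ the two stages have to be analyzed jointly.

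Second, the case $t\in A$ is not finished: you obtain only $\tfrac{1}{3}|N_M(t)|$ (your extra $+\tfrac13$ is an arithmetic slip, since $\tfrac23\bigl(1+(|N_M(t)|-1)\bigr)=\tfrac23|N_M(t)|$) and you explicitly defer the jump to $4/3$. The missing idea---which is exactly the paper's key move---is to apply the goodness inequality \emph{once} to the combined set $A\setminus A_1$ of original $A$-vertices lost over both stages. Every such vertex has all of its $G$-neighbours inside $N_M(t)\cup\hat B_0$ (neighbours either were deleted while hiding $t$ or are deleted by the DM Reduction), so if $A\setminus A_1\ne\varnothing$ then $|N_M(t)\cup\hat B_0|\ge|N_G(A\setminus A_1)|>|A\setminus A_1|$; and $t\in A$ forces $t\in A\setminus A_1$ because $N_G(t)=N_M(t)$ makes $t$ isolated. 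Hence $\mu(\mathcal{I})-\mu(\mathcal{I}_1)\ge|N_M(t)\cup\hat B_0|-\tfrac23|A\setminus A_1|\ge 1+\tfrac13|A\setminus A_1|\ge\tfrac43$. Your stage-one-only variant, which applies goodness to $A'$ rather than to $A'\cup\{t\}$ and never brings $\hat B_0$ into the same inequality, cannot recover this.
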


\begin{figure}[t!]
    \centering
        \begin{tikzpicture}
            [
            scale = 0.8,
            nonterminal/.style={draw, shape = circle, fill = white, inner sep = 2pt},
            terminal/.style={draw, shape = circle, fill = black, inner sep = 2pt},
            ]
            \node[] (A) at(1, 0) {$A$};
            \node[terminal, label={[yshift = 0mm]$t_{1}$}] (t1) at(3, 0) {};
            \node[terminal, label={[yshift = 0mm]$t_{2}$}] (t2) at(5, 0) {};
            \node[terminal, label={[yshift = 0mm]$t$}] (t3) at(6, 0) {};
            \node[terminal, label={[yshift = 0mm]$t_{4}$}] (t4) at(8, 0) {};
            \node[terminal, label={[yshift = 0mm]$t_{5}$}] (t5) at(10, 0) {};
            \node[terminal, label={[yshift = 0mm]$t_{6}$}] (t6) at(11, 0) {};

            \node[] (B) at(1, -2) {$B$};
            \node[nonterminal, label={[yshift = -7mm]$u_{1}$}] (v1) at(2, -2) {};
            \node[nonterminal, label={[yshift = -7mm]$u_{2}$}] (v2) at(3, -2) {};
            \node[nonterminal, label={[yshift = -7mm]$u_{3}$}] (v3) at(5, -2) {};
            \node[nonterminal, label={[yshift = -7mm]$u_{4}$}] (v4) at(6, -2) {};
            \node[nonterminal, label={[yshift = -7mm]$u_{5}$}] (v5) at(7, -2) {};
            \node[nonterminal, label={[yshift = -7mm]$v_{6}$}] (v6) at(9, -2) {};
            \node[nonterminal, label={[yshift = -7mm]$u_{7}$}] (v7) at(10, -2) {};

            \draw[ultra thick, red]
            (t1) -- (v1) (t1) -- (v2) (t1) -- (v3) (t1) -- (v4)
            (t2) -- (v3) (t2) -- (v5)
            (t3) -- (v3) (t3) -- (v4) (t3) -- (v5)
            (t5) -- (v7);
            \draw
            (t4) -- (v4) (t4) -- (v6) (t4) -- (v5)
            (t5) -- (v4)
            (t6) -- (v5);
            \draw[rounded corners, dotted]
            (0.5, 1.25) rectangle (7, -.75)
            (0.5, -1.25) rectangle (8.25, -3.25);
        \end{tikzpicture}
        \caption{
        The graph $G$, where black vertices are terminals, white vertices are non-terminals, thick and red edges are marked edges, and edges between two non-terminals are not presented in the graph; the auxiliary subgraph is $F$ with bipartition $V(F) = A \cup B$, where $A = \Set{t_{1}, t_{2}, t_{3}}$ and $B = \Set{u_{1}, u_{2}, u_{3}, u_{4}, u_{5}}$ (denoted by dotted boxes).
        After hiding $t$, the terminal $t_{2}$ becomes isolated, and the DM Reduction cannot be applied.}
        \label{FIG: HIDE TERMINAL + DM}
\end{figure}
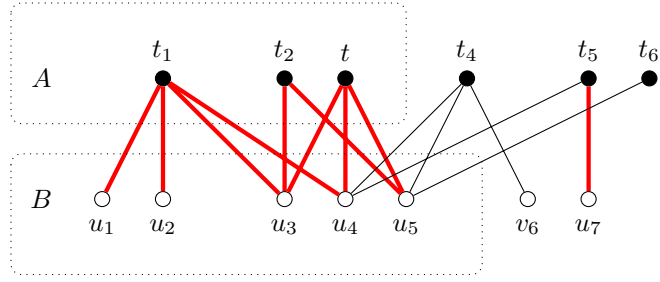

\begin{proof}
    Let instance $\mathcal{I}_{0} = (G_{0}, T_{0}, M_{0}, k_{0})$ denote the instance after hiding $t$ from $\mathcal{I}$.
    Then, $\mathcal{I}_{1} = (G_{1}, T_{1}, M_{1}, k_{1})$ is the instance after doing the DM Reduction from $\mathcal{I}_{0}$.
    Let $F_{i}$ with bipartition $V(F_{i}) = A_{i} \cup B_{i}$ be the auxiliary subgraph of $G_{i}$, where $i \in \Set{0, 1}$.
    
    After hiding terminal $t$, a non-terminal is removed if and only if it is adjacent to $t$ via a marked edge.
    Thus, $B_{0} = B \setminus N_{M}(t)$ and $k_{0} = k - |N_{M}(t)|$ hold.
    It follows that
    \begin{equation*}
        \mu(\mathcal{I}) - \mu(\mathcal{I}_{0}) = \left(k - \frac{2}{3}|A|\right) - \left(k_{0} - \frac{2}{3}|A_{0}|\right) = |N_{M}(t)| - \frac{2}{3}(|A| - |A_{0}|).
    \end{equation*}
    
    Notice that after hiding $t$, the only deleted terminal is $t$.
    Besides, a terminal $t' \neq t$ is removed from $A$ if and only if it becomes an isolated vertex.
    It follows that $N_{G}(t') \subseteq N_{M}(t)$ if $t' \in A \setminus A_{0}$.
    
    Assume that $\hat{A}_{0} \subseteq A_{0}$ and $\hat{B}_{0} \subseteq B_{0}$ are deleted after doing the DM Reduction.
    We note that the DM Reduction cannot be applied if and only if $\hat{A}_{0} = \hat{B}_{0} = \varnothing$.
    Observe that $k_{1} = k_{0} - |\hat{B}_{0}|$ and $|A_{1}| = |A_{0}| - |\hat{A}_{0}|$.
    Thus, we derive that
    \begin{equation*}
        \begin{aligned}
        \mu(\mathcal{I}) - \mu(\mathcal{I}_{1}) 
        = &|N_{M}(t)| - \frac{2}{3}(|A| - |A_{0}|) + |\hat{B}_{0}| - \frac{2}{3}|\hat{A}_{0}| \\
        = &|N_{M}(t) \cup \hat{B}_{0}| - \frac{2}{3}(|A| - |A_{1}|) \\
        \geq &|N_{M}(t) \cup \hat{B}_{0}| - \frac{2}{3}|A \setminus A_{1}|. \\
        \end{aligned}
    \end{equation*}
    
    Consider a terminal $t' \in A \setminus A_{1}$.
    If $t' \in A \setminus A_{0}$, we know $N_{G}(t') \subseteq N_{M}(t)$.
    Otherwise, we have $t' \in \hat{A}_{0}$, and all neighbours of $t'$ in $G$ are deleted, which indicates that $N_{G}(t') \subseteq \hat{B}_{0} \cup N_{M}(t)$.
    It follows that $N_{G}(A \setminus A_{1}) \subseteq \hat{B}_{0} \cup N_{M}(t)$.
    Since $\mathcal{I}$ is good, by Lemma~\ref{STRUCTURE: DM DECOMPOSITION}, we have $A \setminus A_{1} = \varnothing$ or $|N_{G}(A \setminus A_{1})| > |A \setminus A_{1}|$.
    
    If $A \setminus A_{1} = \varnothing$ holds, every terminal in $A$ is not deleted, which implies that $t \not\in A$.
    In this case, we have
    \begin{equation*}
        \mu(\mathcal{I}) - \mu(\mathcal{I}_{1}) \geq |N_{M}(t) \cup \hat{B}_{0}| - \frac{2}{3}|A \setminus A_{1}| \geq  |N_{M}(t) \cup \hat{B}_{0}| \geq 0.
    \end{equation*}
    If $|N_{G}(A \setminus A_{1})| > |A \setminus A_{1}|$ holds, we have
    \begin{equation*}
        \mu(\mathcal{I}) - \mu(\mathcal{I}_{1}) \geq |N_{M}(t) \cup \hat{B}_{0}| - \frac{2}{3}|A \setminus A_{1}| \geq |N_{G}(A \setminus A_{1})| - \frac{2}{3}|A \setminus A_{1}| \geq \frac{4}{3}.
    \end{equation*}
    Therefore, we complete our proof.
\end{proof}

\begin{lemma} \label{MEASURE: HIDE NONTERMINAL}
    Given a good instance $\mathcal{I} = (G = (V, E), T, M, k)$, let $F$ be the auxiliary subgraph of $G$ with bipartition $V(F) = A \cup B$, and $v$ be a non-terminal in $V \setminus T$.
    Let $\mathcal{I}_{1}$ be the instance obtained from $\mathcal{I}$ by first hiding $v$ and then doing the DM Reduction \textup{(}cf.~Fig.~\ref{FIG: HIDE NONTERMINAL + DM}\textup{)}.
    Then $\mathcal{I}_{1}$ is a good instance such that $\mu(\mathcal{I}) - \mu(\mathcal{I}_{1}) \geq 0$.
    
    Furthermore, if every terminal  \textup{(}resp. non-terminal\textup{)} is adjacent to at least two non-terminals \textup{(}resp. terminals\textup{)} and no two $2$-degree terminals have identical neighbours in $G$, it satisfies that
    \begin{itemize}
        \item If $v \in B$, it holds 
        \begin{equation*}
            \mu(\mathcal{I}) - \mu(\mathcal{I}_{1}) \geq \min\Set{\frac{2}{3}|N_{G}(v) \cap T| - \frac{1}{3}|N_{M}(v) \cap A|, \frac{4}{3}}.
        \end{equation*}
        \item If $v \not\in B$, it holds 
        \begin{equation*}
            \mu(\mathcal{I}) - \mu(\mathcal{I}_{1}) \geq \min\Set{\frac{2}{3}|N_{G}(v) \cap T| + \frac{1}{3}|N_{M}(v) \cap A|, \frac{4}{3}} = \frac{4}{3}.
        \end{equation*}
    \end{itemize}
\end{lemma}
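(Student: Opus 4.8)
The plan is to follow the template of the proofs of Lemmas~\ref{MEASURE: DELETE} and~\ref{MEASURE: HIDE TERMINAL}. Write $\mathcal{I}_{0} = (G_{0}, T_{0}, M_{0}, k_{0})$ for the instance obtained from $\mathcal{I}$ by hiding $v$, and $\mathcal{I}_{1} = (G_{1}, T_{1}, M_{1}, k_{1})$ for the instance obtained from $\mathcal{I}_{0}$ by the DM Reduction; let $F_{i}$ with bipartition $V(F_{i}) = A_{i} \cup B_{i}$ be the auxiliary subgraph of $G_{i}$, and let $\hat{A}_{0} \subseteq A_{0}$, $\hat{B}_{0} \subseteq B_{0}$ be the sets removed by the DM Reduction. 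That $\mathcal{I}_{1}$ is good is immediate from Lemma~\ref{SAFENESS: OPERATIION}. The first step is to determine how the set $A$ changes. Since hiding $v$ deletes exactly $\Set{v} \cup N_{M}(v)$, all of $N_{M}(v)$ are terminals, and only terminal--non-terminal edges are ever marked, I would show that the terminals leaving $A$ are exactly those in $A \cap N_{M}(v)$, and that the set $P \coloneq A_{0} \setminus A$ of terminals \emph{entering} $A$ is contained in $U \coloneq (N_{G}(v) \cap T) \setminus N_{M}(v)$ (a terminal whose incidences change must be adjacent to $v$). Moreover, because the non-terminals form a clique, for every $t \in U$ and every non-terminal $u \ne v$ with $tu \in E$ the triangle $vtu$ is present and hence $tu$ gets marked; so under the hypothesis that every terminal has degree at least two, no $t \in U$ becomes isolated, giving $U \subseteq P$ and therefore $P = U$. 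Combining $k_{0} = k - |N_{M}(v)|$, $k_{1} = k_{0} - |\hat{B}_{0}|$, $|A_{1}| \ge |A_{0}| - |\hat{A}_{0}|$, and $|A| - |A_{0}| = |A \cap N_{M}(v)| - |P|$, I obtain
\begin{equation*}
    \mu(\mathcal{I}) - \mu(\mathcal{I}_{1}) \ge |N_{M}(v) \setminus A| + \frac{1}{3}|N_{M}(v) \cap A| + \frac{2}{3}|P \setminus \hat{A}_{0}| + \left(|\hat{B}_{0}| - \frac{2}{3}|\hat{A}_{0} \cap A|\right).
\end{equation*}

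The second step is the structural inequality $N_{G}(\hat{A}_{0} \cap A) \subseteq \hat{B}_{0}$: a terminal $t \in \hat{A}_{0} \cap A$ has all edges marked, hence is not adjacent to $v$, hence its incidences survive the hiding step untouched, and $N_{F_{0}}(t) \subseteq N_{F_{0}}(\hat{A}_{0}) = \hat{B}_{0}$. Applying Lemma~\ref{STRUCTURE: DM DECOMPOSITION} to the good instance $\mathcal{I}$ then gives $|\hat{A}_{0} \cap A| \le |\hat{B}_{0}| - 1$ whenever $\hat{A}_{0} \cap A \ne \varnothing$; consequently the bracket $|\hat{B}_{0}| - \frac{2}{3}|\hat{A}_{0} \cap A|$ is always non-negative, is at least $\frac{1}{3}|\hat{B}_{0}| + \frac{2}{3}$ if $\hat{A}_{0} \cap A \ne \varnothing$, and in particular is at least $\frac{4}{3}$ as soon as $|\hat{B}_{0}| \ge 2$. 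For the first (unconditional) claim $\mu(\mathcal{I}) - \mu(\mathcal{I}_{1}) \ge 0$ I would not need the degree hypotheses: one checks $A \setminus A_{1} = (A \cap N_{M}(v)) \sqcup (A \cap \hat{A}_{0})$ and, using $|A \cap \hat{A}_{0}| \le |\hat{B}_{0}|$, that $|A \setminus A_{1}| \le |N_{M}(v) \cup \hat{B}_{0}|$, whence $\mu(\mathcal{I}) - \mu(\mathcal{I}_{1}) \ge |N_{M}(v) \cup \hat{B}_{0}| - \frac{2}{3}|A \setminus A_{1}| \ge \frac{1}{3}|N_{M}(v) \cup \hat{B}_{0}| \ge 0$.

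For the refined bound, assume the extra hypotheses, so $P = U$, and set $\tau \coloneq \frac{2}{3}|N_{G}(v) \cap T| - \frac{1}{3}|N_{M}(v) \cap A| = \frac{1}{3}|N_{M}(v) \cap A| + \frac{2}{3}|N_{M}(v) \setminus A| + \frac{2}{3}|U|$. I would case on $|\hat{B}_{0}|$. If $|\hat{B}_{0}| \ge 2$, the last bracket in the displayed bound is already $\ge \frac{4}{3}$, so $\mu(\mathcal{I}) - \mu(\mathcal{I}_{1}) \ge \frac{4}{3} \ge \min\Set{\tau, \frac{4}{3}}$. If $|\hat{B}_{0}| = 0$, the DM Reduction is vacuous ($\hat{A}_{0} = \varnothing$) and the displayed bound reads $\mu(\mathcal{I}) - \mu(\mathcal{I}_{1}) \ge |N_{M}(v) \setminus A| + \frac{1}{3}|N_{M}(v) \cap A| + \frac{2}{3}|U| \ge \tau$. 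Finally the case $|\hat{B}_{0}| = 1$ cannot occur: Step~2 forces $\hat{A}_{0} \cap A = \varnothing$, so $\hat{A}_{0} \subseteq P = U$, every $t \in \hat{A}_{0}$ has $N_{F_{0}}(t) = \hat{B}_{0}$ (a single non-terminal $b$), hence $N_{G}(t) = \Set{b, v}$, i.e.\ $t$ is a $2$-degree terminal; but $|\hat{A}_{0}| \ge |\hat{B}_{0}| + 1 = 2$, so two $2$-degree terminals share the neighbourhood $\Set{b, v}$, contradicting the no-twin-$2$-degree-terminals hypothesis. When $v \notin B$ we have $N_{M}(v) \cap A = \varnothing$ and $|N_{G}(v) \cap T| \ge 2$, so $\tau = \frac{2}{3}|N_{G}(v) \cap T| \ge \frac{4}{3}$ and the three cases above give $\mu(\mathcal{I}) - \mu(\mathcal{I}_{1}) \ge \frac{4}{3} = \min\Set{\frac{2}{3}|N_{G}(v) \cap T|, \frac{4}{3}}$.

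The step I expect to be the main obstacle is the first one: precisely tracking which terminals enter and leave $A$ under the combined operation ``hide $v$, then DM-reduce''. In contrast to hiding a terminal, hiding a non-terminal simultaneously removes terminals and marks new terminal--non-terminal edges, so $A$ can \emph{grow}; pinning down the equality $P = U$ (via the clique of non-terminals and the degree-$\ge 2$ hypothesis) and controlling $\hat{A}_{0} \cap A$ through Lemma~\ref{STRUCTURE: DM DECOMPOSITION} is exactly what keeps the charging tight enough for the stated bounds. The other delicate point is excluding the case $|\hat{B}_{0}| = 1$, which is where the no-twin-$2$-degree-terminals hypothesis is used.
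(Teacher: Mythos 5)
Your proof is correct and takes essentially the same route as the paper's: both track that the terminals entering $A$ are exactly $(N_{G}(v)\cap T)\setminus N_{M}(v)$ (using the degree hypothesis and the clique on non-terminals), both invoke Lemma~\ref{STRUCTURE: DM DECOMPOSITION} on $\hat{A}_{0}\cap A$ to control the DM-reduction term, and both use the no-twin-$2$-degree-terminal hypothesis to rule out the tight configuration $|\hat{B}_{0}|=1$. The only difference is organizational (you case on $|\hat{B}_{0}|$ via one master inequality, while the paper cases on $\hat{A}_{0}\cap A$ and on $v\in B$), and your bookkeeping is sound.
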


\begin{figure}[t!]
    \centering
        \begin{tikzpicture}
            [
            scale = 0.8,
            nonterminal/.style={draw, shape = circle, fill = white, inner sep = 2pt},
            terminal/.style={draw, shape = circle, fill = black, inner sep = 2pt},
            ]
            \node[] (A) at(1, 0) {$A$};
            \node[terminal, label={[yshift = 0mm]$t_{1}$}] (t1) at(3, 0) {};
            \node[terminal, label={[yshift = 0mm]$t_{2}$}] (t2) at(5, 0) {};
            \node[terminal, label={[yshift = 0mm]$t_{3}$}] (t3) at(6, 0) {};
            \node[terminal, label={[yshift = 0mm]$t_{4}$}] (t4) at(8, 0) {};
            \node[terminal, label={[yshift = 0mm]$t_{5}$}] (t5) at(10, 0) {};
            \node[terminal, label={[yshift = 0mm]$t_{6}$}] (t6) at(11, 0) {};

            \node[] (B) at(1, -2) {$B$};
            \node[nonterminal, label={[yshift = -7mm]$u_{1}$}] (v1) at(2, -2) {};
            \node[nonterminal, label={[yshift = -7mm]$u_{2}$}] (v2) at(3, -2) {};
            \node[nonterminal, label={[yshift = -7mm]$u_{3}$}] (v3) at(5, -2) {};
            \node[nonterminal, label={[yshift = -7mm]$u_{4}$}] (v4) at(6, -2) {};
            \node[nonterminal, label={[yshift = -7mm]$u_{5}$}] (v5) at(7, -2) {};
            \node[nonterminal, label={[yshift = -7mm]$v$}] (v6) at(9, -2) {};
            \node[nonterminal, label={[yshift = -7mm]$u_{7}$}] (v7) at(10, -2) {};

            \draw[ultra thick, red]
            (t1) -- (v1) (t1) -- (v2) (t1) -- (v3) (t1) -- (v4)
            (t2) -- (v3) (t2) -- (v5)
            (t3) -- (v3) (t3) -- (v4) (t3) -- (v5)
            (t5) -- (v7);
            \draw
            (t4) -- (v4) (t4) -- (v6) (t4) -- (v5)
            (t5) -- (v4)
            (t6) -- (v5);
            \draw[rounded corners, dotted]
            (0.5, 1.25) rectangle (7, -.75)
            (0.5, -1.25) rectangle (8.25, -3.25);
            \draw[rounded corners, dashed]
            (4, 1) rectangle (9, -.5)
            (4, -1.5) rectangle (7.75, -3);
        \end{tikzpicture}
        \caption{
        The graph $G$, where black vertices are terminals, white vertices are non-terminals, thick and red edges are marked edges, and edges between two non-terminals are not presented in the graph; the auxiliary subgraph is $F$ with bipartition $V(F) = A \cup B$, where $A = \Set{t_{1}, t_{2}, t_{3}}$ and $B = \Set{u_{1}, u_{2}, u_{3}, u_{4}, u_{5}}$ (denoted by dotted boxes).
        After hiding $v$, the DM Reduction can be applied.
        When doing the DM Reduction, $\hat{A} = \Set{t_{2}, t_{3}, t_{4}}$ and $\hat{B} = \Set{u_{3}, u_{4}, u_{5}}$ (denoted by dashed boxes) are deleted.}
        \label{FIG: HIDE NONTERMINAL + DM}
\end{figure}
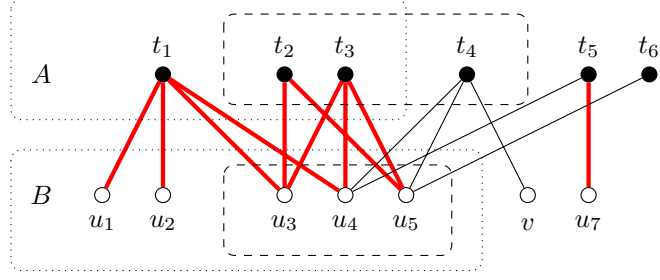

\begin{proof}
    Let instance $\mathcal{I}_{0} = (G_{0}, T_{0}, M_{0}, k_{0})$ denote the instance after hiding $v$ from $\mathcal{I}$.
    Then, $\mathcal{I}_{1} = (G_{1}, T_{1}, M_{1}, k_{1})$ is the instance after doing the DM Reduction from $\mathcal{I}_{0}$.
    Let $F_{i}$ with bipartition $V(F_{i}) = A_{i} \cup B_{i}$ be the auxiliary subgraph of $G_{i}$, where $i \in \Set{0, 1}$.

    After hiding $v$, a vertex is removed if and only if it is a terminal adjacent to $v$ via a marked edge.
    Thus, $k_{0} = k - |N_{M}(v)|$ and $A \setminus A_{0} = A \cap N_{M}(v)$.
    It follows that
    \begin{equation*}
        \begin{aligned}
        \mu(\mathcal{I}) - \mu(\mathcal{I}_{0}) 
        &= (k - \frac{2}{3} |A|) - (k_{0} -  \frac{2}{3} |A_{0}|) \\
        &= (k - k_{0}) + \frac{2}{3}|A_{0} \setminus A| - \frac{2}{3}|A \setminus A_{0}| \\
        &= |N_{M}(v)| + \frac{2}{3}|A_{0} \setminus A| - \frac{2}{3}|A \cap N_{M}(v)|. \\
        \end{aligned}
    \end{equation*}   
    It is easy to see that $\mu(\mathcal{I}) - \mu(\mathcal{I}_{0}) \geq 0$ since $|A \cap N_{M}(v)| \leq |N_{M}(v)|$.
    Thus, if the DM Reduction cannot be applied, we have $\mu(\mathcal{I}_{0}) = \mu(\mathcal{I}_{1})$, leading that $\mu(\mathcal{I}) - \mu(\mathcal{I}_{1}) \geq 0$.
    
    Next, we consider what terminals belong to set $A_{0} \setminus A$.
    On the one hand, a terminal $t \in A_{0} \setminus A$ must be adjacent to $v$ via an unmarked edge.
    On the other hand, $t$ should not be an isolated vertex after hiding $v$, which implies that the terminal $t$ is adjacent to at least one vertex distinct from $v$.
    Thus, if every terminal is adjacent to at least two non-terminals, we derive that $A_{0} \setminus A = (N_{G}(v) \cap T) \setminus N_{M}(v)$.
    It follows that
    \begin{equation*}
        \begin{aligned}
        \mu(\mathcal{I}) - \mu(\mathcal{I}_{0})
        &= |N_{M}(v)| + \frac{2}{3}|A_{0} \setminus A| - \frac{2}{3}|A \cap N_{M}(v)| \\
        &= |N_{M}(v)| + \frac{2}{3}|(N_{G}(v) \cap T) \setminus N_{M}(v)| - \frac{2}{3}|A \cap N_{M}(v)| \\
        &= |N_{M}(v)| + \frac{2}{3}(|N_{G}(v) \cap T| - |N_{M}(v)|) - \frac{2}{3}|A \cap N_{M}(v)| \\
        &\geq \frac{2}{3}|N_{G}(v) \cap T| - \frac{1}{3}|A \cap N_{M}(v)|. \\
        \end{aligned}
    \end{equation*}
    If the DM Reduction cannot be applied on $\mathcal{I}_{0}$, we can derive that
    \begin{equation*}
        \mu(\mathcal{I}) - \mu(\mathcal{I}_{1}) = \mu(\mathcal{I}) - \mu(\mathcal{I}_{0}) \geq \frac{2}{3}|N_{G}(v) \cap T| - \frac{1}{3}|A \cap N_{M}(v)|.
    \end{equation*}
    Furthermore, if $v \not\in B$, then $v$ is not adjacent to any terminal in $A$, leading that $|A \cap N_{M}(v)| = 0$.
    In the case that $v$ belongs to $B$ and it is adjacent to at least two terminals, we further have
    \begin{equation}\label{EQ: DM NOT APPLY}
        \mu(\mathcal{I}) - \mu(\mathcal{I}_{1}) = \mu(\mathcal{I}) - \mu(\mathcal{I}_{0}) \geq \frac{2}{3}|N_{G}(v) \cap T| \geq \frac{4}{3}.
    \end{equation}
    
    Now, we assume that the DM Reduction can be applied on $\mathcal{I}_{0}$.
    Suppose $\hat{A}_{0} \subseteq A_{0}$ and $\hat{B}_{0} \subseteq B_{0}$ are deleted.
    We know that $\hat{A}_{0}$ and $\hat{B}_{0}$ are non-empty, and $k_{1} = k_{0} - |\hat{B}_{0}| = k - |N_{M}(v)| - |\hat{B}_{0}|$ holds.
    Consider a terminal $t' \in A \setminus A_{1}$.
    If $t' \in A \cap N_{M}(v)$ it is deleted when hiding $v$; otherwise, it is deleted when doing the DM Reduction which means that $t' \in A \cap \hat{A}_{0}$.
    It follows that
    \begin{equation*}
        \begin{aligned}
            \mu(\mathcal{I}) - \mu(\mathcal{I}_{1})
            = &(k - \frac{2}{3} |A|) - (k_{1} -  \frac{2}{3} |A_{1}|) \\
            = &|N_{M}(v)| + |\hat{B}_{0}| + \frac{2}{3} |A_{1} \setminus A| - \frac{2}{3}|A \setminus A_{1}| \\
            \geq &|N_{M}(v)| + |\hat{B}_{0}| + \frac{2}{3} |A_{1} \setminus A| - \frac{2}{3} (|A \cap \hat{A}_{0}| + |A \cap N_{M}(v)|) \\
            \geq &\frac{1}{3}|N_{M}(v)| + |\hat{B}_{0}| + \frac{2}{3} |A_{1} \setminus A| - \frac{2}{3} |A \cap \hat{A}_{0}|. \\
        \end{aligned}
    \end{equation*}

    Now, we analysis the lower bound of $\mu(\mathcal{I}) - \mu(\mathcal{I}_{1})$ and there are two cases.
    
    \textbf{Case 1.1}: $A \cap \hat{A}_{0}$ is non-empty.
    We observe that in graph $G$, $v$ is not adjacent to any terminal in $\hat{A}_{0}$.
    This is because all the terminals adjacent to $v$ via a marked edge are deleted after hiding $v$ and they do not appear in the graph $G_{0}$.
    Hence we know $\hat{B}_{0} = N_{G}(\hat{A}_{0})$. 
    Besides, we have $B = N_{G}(A)$, and thus $B \cap \hat{B}_{0} = N_{G}(A \cap \hat{A}_{0})$ holds.
    since $\mathcal{I}$ is good and $A \cap \hat{A}_{0}$ is non-empty, we get $|B \cap \hat{B}_{0}| > |A \cap \hat{A}_{0}|$.
    Then we derive that the decrease of the measure is  
    \begin{equation*}
        \mu(\mathcal{I}) - \mu(\mathcal{I}_{1}) \geq |\hat{B}_{0}| - \frac{2}{3} |A \cap \hat{A}_{0}| \geq 1 + \frac{1}{3}|A \cap \hat{A}_{0}| \geq \frac{4}{3}.
    \end{equation*}
    
    \textbf{Case 1.2}: $A \cap \hat{A}_{0}$ is empty.
    In this case, we can directly obtain that 
    \begin{equation}\label{EQ: DECREASE>=1}
        \mu(\mathcal{I}) - \mu(\mathcal{I}_{1}) \geq \frac{1}{3}|N_{M}(v)| + |\hat{B}_{0}| + \frac{2}{3}|A_{1} \setminus A| \geq \frac{1}{3}|N_{M}(v)| + 1 \geq 1.
    \end{equation}
    Thus, we have proven the measure does not increase.

    Finally, we show that $\mu(\mathcal{I}) - \mu(\mathcal{I}_{1}) \geq 4/3$ always holds when the input graph satisfies the condition in the lemma.
    We consider two subcases.
    
    \textbf{Case 2.1:} $v \in B$.
    For this subcase, $v$ is adjacent to at least one terminal in $A$ via a marked edge.
    Hence, set $N_{M}(v)$ is non-empty, and we get $\mu(\mathcal{I}) - \mu(\mathcal{I}_{1}) \geq 1/3 + 1 = 4/3$.
    This completes that the measure is decreased by at least $4/3$ for $v \in B$.

    \textbf{Case 2.2:} $v \not\in B$.
    We assume to the contrary that if the DM Reduction can be applied and $\mu(\mathcal{I}) - \mu(\mathcal{I}_{1}) < 4/3$.
    According to \eqref{EQ: DECREASE>=1}, we can obtain that $|N_{M}(v)| = 0$, $|\hat{B}_{0}| = 1$ and $|A_{1} \setminus A| = 0$.
    It means that for every terminal $t$ adjacent to $v$, it satisfies that
    \begin{enumerate}
        \item $\deg_{G}(t) \geq 2$ holds according to the condition in the lemma;
        \item $tv$ is unmarked and $t \not\in A$ since $N_{M}(v)$ is empty;
        \item $t$ is in $A_{0}$ since $\deg_{G}(t) \geq 2$ and $v$ is the unique non-terminal deleted after hiding $v$;
        \item $t$ is deleted when doing the DM Reduction (i.e., $t \in \hat{A}_{0}$) since $A_{1} \subseteq A$ but $t \not\in A$; and
        \item $t$ has exactly two neighbours in $G$ which are $v$ and the unique vertex in $\hat{B}_{0}$ (i.e., $N_{G}(t) = \Set{v} \cup \hat{B}_{0}$) since $|\hat{B}_{0}| = 1$.
    \end{enumerate}
    Above all, we derive that all terminals in $N_{G}(v)$ are $2$-degree vertices and have the same neighbours.
    By the condition in the lemma, vertex $v$ is adjacent to at least two terminals, contradicting the condition that no two $2$-degree terminals have identical neighbours in $G$.
    Combine with \eqref{EQ: DM NOT APPLY}, we conclude that for any vertex $v \not\in B$, it holds
    \begin{equation*}
        \mu(\mathcal{I}) - \mu(\mathcal{I}_{1}) \geq \min\Set{\frac{2}{3}|N_{G}(v) \cap T|, \frac{4}{3}} = \frac{4}{3}.
    \end{equation*}
\end{proof}

\subsection{An Algorithm for Good Instances}

We now give the algorithm \texttt{GoodAlg} that solves the good instances.
When introducing a step, we assume all previous steps cannot be applied.

\begin{splitstep} \label{SPLIT REDUCTION: MUESURE}
    If $|A| > k$ or even $\mu(\mathcal{I}) < 0$, return No and quit.
    If $|T| \leq k$, return Yes and quit.
\end{splitstep}

The safeness of this step directly follows from Lemma~\ref{SAFENESS: MEASURE}.

\begin{splitstep} \label{SPLIT REDUCTION: 0TERMINAL}
    Delete any vertex in $G$ that is not contained in any $T$-triangle or marked edge, and then do the DM Reduction on the instance.
\end{splitstep}

\begin{splitstep} \label{SPLIT REDUCTION: 1TERMINAL}
    If there exists a non-terminal $v$ such that $|N_{G}(v) \cap T| = 1$, hide $v$ and then do the DM Reduction on the instance.
\end{splitstep}


The safeness of Steps~\ref{SPLIT REDUCTION: 0TERMINAL} and~\ref{SPLIT REDUCTION: 1TERMINAL} are trivial.
Based on Lemmas~\ref{SAFENESS: OPERATIION},~\ref{MEASURE: DELETE} and~\ref{MEASURE: HIDE NONTERMINAL}, after applying Step~\ref{SPLIT REDUCTION: 0TERMINAL} or Step~\ref{SPLIT REDUCTION: 1TERMINAL}, the resulting instance is good, and the measure $\mu$ does not increase.
Furthermore, every terminal is in some $T$-triangle after Step~\ref{SPLIT REDUCTION: 1TERMINAL}.
It follows that every terminal (resp. non-terminal) is adjacent to at least two non-terminals (resp. terminals).

\begin{splitstep} \label{SPLIT REDUCTION: 2TERMINAL}
    This step deals with some $2$-degree terminals in $T \setminus A$, and there are two cases.
    \begin{enumerate}
        \item Let $t$ be a $2$-degree terminal in $T \setminus A$. 
        If $t$ is adjacent to exactly one marked edge, hide $t$ and do the DM Reduction.
        \item Let $t$ and $t'$ be two $2$-degree terminals in $T \setminus A$.
        If $t$ and $t'$ have the same neighbours and none of them is adjacent to a marked edge, delete one of them and do the DM reduction.
    \end{enumerate}
\end{splitstep}

After this step, one can easily find the condition in Lemma~\ref{MEASURE: HIDE NONTERMINAL} holds.

\begin{lemma} \label{SPLIT CORRECTNESS: 2TERMINAL}
    Step~\ref{SPLIT REDUCTION: 2TERMINAL} is safe.
    After applying Step~\ref{SPLIT REDUCTION: 2TERMINAL}, the resulting instance is good, and the measure $\mu$ does not increase.
\end{lemma}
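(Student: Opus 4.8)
The plan is to handle the two cases of Step~\ref{SPLIT REDUCTION: 2TERMINAL} separately, and in each case to (a) establish safeness by a local replacement argument, and then (b) read off ``the resulting instance is good'' and ``$\mu$ does not increase'' from the structural lemmas already proved. The common starting point is a structural observation: since $\mathcal{I}$ is good, $T$ is an independent set and $V \setminus T$ is a clique, so a $2$-degree terminal $t \in T \setminus A$ has exactly two neighbours $u$ and $w$, both non-terminals, with $uw \in E$; hence $tuw$ is the \emph{unique} $T$-triangle through $t$, and every marked edge incident to $t$ lies in $\Set{tu, tw}$.

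For Case~(1), let $tu$ be the unique marked edge at $t$, so that $tw$ is unmarked. The key claim is that any Yes-instance $\mathcal{I}$ has a solution avoiding $t$: given a solution $S$ with $t \in S$, set $S' = (S \setminus \Set{t}) \cup \Set{u}$; then $|S'| \leq |S|$, the vertex $u$ covers the marked edge $tu$ and destroys the triangle $tuw$, and since $t$ lies in no other $T$-triangle and no other marked edge, removing $t$ un-hits nothing else while adding $u$ can only help, so $S'$ is again a solution. Combining this claim with Lemma~\ref{SAFENESS: OPERATIION} (applied with $v = t$) shows that hiding $t$ and doing the DM Reduction transforms $\mathcal{I}$ into an equivalent good instance $\mathcal{I}_{1}$, and Lemma~\ref{MEASURE: HIDE TERMINAL}, applied in the case $t \notin A$, gives $\mu(\mathcal{I}) - \mu(\mathcal{I}_{1}) \geq 0$.

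For Case~(2), write $\Set{u, w}$ for the common neighbourhood of the two $2$-degree terminals $t, t'$, neither of which touches a marked edge; then the only $T$-triangles meeting $\Set{t, t'}$ are $tuw$ and $t'uw$. Here ``deleting one of them'' means removing the vertex $t'$ from $G$ while leaving $k$ and $M$ unchanged, which is exactly the deletion operation of Lemma~\ref{MEASURE: DELETE}. I will prove this reduction safe by a direct two-way equivalence: a solution of $\mathcal{I}$ is also a solution of the reduced instance (fewer $T$-triangles, the same marked edges, since removing $t'$ only deletes the triangle $t'uw$); conversely, a size-$\leq k$ solution $S'$ of the reduced instance hits $tuw$, so $S' \cap \Set{t, u, w} \neq \varnothing$, and either $S'$ already meets $\Set{u, w}$ --- in which case it also hits $t'uw$ and solves $\mathcal{I}$ --- or $t \in S'$ and $(S' \setminus \Set{t}) \cup \Set{u}$ solves $\mathcal{I}$ by the same check as in Case~(1). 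Safeness of the subsequent DM Reduction is Lemma~\ref{SAFENESS: DM DECOMPOSITION}, and since $t' \in T \setminus A \subseteq V \setminus A$, Lemma~\ref{MEASURE: DELETE} yields that deleting $t'$ and then applying the DM Reduction gives a good instance $\mathcal{I}_{1}$ with $\mu(\mathcal{I}) - \mu(\mathcal{I}_{1}) \geq 0$.

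The only part needing care is the replacement argument shared by both cases: one must check that swapping a $2$-degree terminal for one of its two neighbours cannot re-create any $T$-triangle or leave any marked edge uncovered, and this is precisely where the hypotheses ``$t$ has degree $2$'' and ``$t$ touches at most the designated marked edge'' are used. Once that is in place, the goodness and measure conclusions are immediate from Lemmas~\ref{SAFENESS: OPERATIION}, \ref{SAFENESS: DM DECOMPOSITION}, \ref{MEASURE: DELETE}, and \ref{MEASURE: HIDE TERMINAL}.
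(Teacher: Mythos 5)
Your proposal is correct and follows essentially the same route as the paper: the local replacement of a $2$-degree terminal by one of its two (clique-adjacent) neighbours establishes safeness in both cases, and goodness plus non-increase of $\mu$ are then read off from Lemmas~\ref{SAFENESS: OPERATIION}, \ref{MEASURE: DELETE}, and \ref{MEASURE: HIDE TERMINAL}. The only cosmetic difference is that in Case~(2) you delete $t'$ and argue about solutions hitting $tuw$, whereas the paper deletes $t$ and replaces $t'$ in a solution of the reduced instance; the arguments are symmetric.
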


\begin{proof}
    According to Lemma~\ref{SAFENESS: OPERATIION} the resulting instance is a good instance.
    Based on Lemmas~\ref{MEASURE: DELETE},~\ref{MEASURE: HIDE TERMINAL} and~\ref{MEASURE: HIDE NONTERMINAL}, it follows that the measure $\mu$ does not increase after applying Step~\ref{SPLIT REDUCTION: 2TERMINAL}.
    We only need to prove the safeness of this step.

    We consider the first case in Step~\ref{SPLIT REDUCTION: 2TERMINAL}.
    The terminal $t$ is contained in exactly one $T$-triangle $tvu$ and one marked edge, say $tv \in M$.
    If there exists a solution $S$ containing $t$, then $S' = (S \setminus \Set{t}) \cup \Set{v}$ is also a solution since $v$ hits the marked edge $tv$ and the unique $T$-triangle $tvu$.
    Hence, there exists a solution not containing $t$.

    Next, we consider the second case in Step~\ref{SPLIT REDUCTION: 2TERMINAL}.
    Let $\mathcal{I} = (G, T, M, k)$ be the input instance.
    Let $\mathcal{I}_{0} = (G_{0}, T_{0}, M_{0}, k_{0})$ be the instance after removing one of $t$ and $t'$ from $\mathcal{I}$, say $t$.
    For any solution $S$ to $\mathcal{I}$, it is also a solution to $\mathcal{I}_{0}$, since $G_{0} = G - t$ is a subgraph of $G$.
    Let $S_{0}$ be a solution to $\mathcal{I}_{0}$.
    If $S_{0}$ contains $t'$, we replace $t'$ with a neighbour of $t'$ in $S_{0}$ to get $S'_{0}$.
    $S'_{0}$ is also a solution to $\mathcal{I}$ because every $T$-triangle containing $t$ also containing its two neighbours.
    Therefore, Step~\ref{SPLIT REDUCTION: 2TERMINAL} is safe.
\end{proof}

\begin{splitstep} \label{SPLIT BRANCH: 2NONTERMINAL}
    If there exists a non-terminal $v \in B$ adjacent to exactly one terminal $t$ via a marked edge and exactly one terminal $t'$ via an unmarked edge, we branch into two instances by either
    \begin{itemize}
        \item hiding the vertex $v$ and doing the DM Reduction; or
        \item hiding the vertex $t$ and doing the DM Reduction.
    \end{itemize}
\end{splitstep}

\begin{splitstep} \label{SPLIT BRANCH: UNMARKED NONTERMINAL}
    If there exists a non-terminal $v \in V \setminus T$ incident to at least one unmarked edge, we branch into two instances by either
    \begin{itemize}
        \item deleting the vertex $v$, decreasing $k$ by $1$, and doing the DM Reduction; or
        \item hiding $v$ and doing the DM Reduction (cf.~Fig.~\ref{FIG: NONTERMINAL DEGREE 2}).
    \end{itemize}
\end{splitstep}

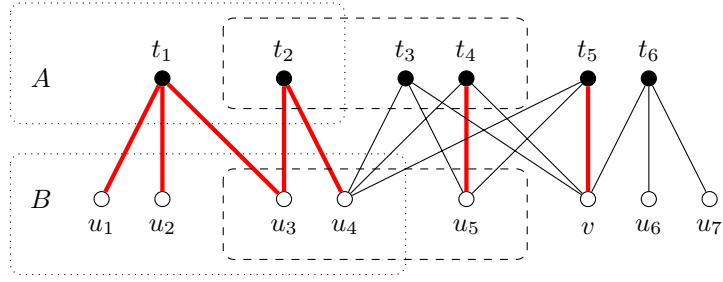
\begin{figure}[t!]
    \centering
        \begin{tikzpicture}
            [
            scale = 0.8,
            nonterminal/.style={draw, shape = circle, fill = white, inner sep=2pt},
            terminal/.style={draw, shape = circle, fill = black, inner sep=2pt},
            ]
            \node[] (A) at(1, 0) {$A$};
            \node[terminal, label={[yshift = 0mm]$t_{1}$}] (t1) at(3, 0) {};
            \node[terminal, label={[yshift = 0mm]$t_{2}$}] (t2) at(5, 0) {};
            \node[terminal, label={[yshift = 0mm]$t_{3}$}] (t3) at(7, 0) {};
            \node[terminal, label={[yshift = 0mm]$t_{4}$}] (t4) at(8, 0) {};
            \node[terminal, label={[yshift = 0mm]$t_{5}$}] (t5) at(10, 0) {};
            \node[terminal, label={[yshift = 0mm]$t_{6}$}] (t6) at(11, 0) {};

            \node[] (B) at(1, -2) {$B$};
            \node[minimum height = 5mm,minimum width = 15mm] (v0) at(0.25, -2) {};
            \node[nonterminal, label={[yshift = -7mm]$u_{1}$}] (v1) at(2, -2) {};
            \node[nonterminal, label={[yshift = -7mm]$u_{2}$}] (v2) at(3, -2) {};
            \node[nonterminal, label={[yshift = -7mm]$u_{3}$}] (v3) at(5, -2) {};
            \node[nonterminal, label={[yshift = -7mm]$u_{4}$}] (v4) at(6, -2) {};
            \node[nonterminal, label={[yshift = -7mm]$u_{5}$}] (v5) at(8, -2) {};
            \node[nonterminal, label={[yshift = -7mm]$v$}] (v6) at(10, -2) {};
            \node[nonterminal, label={[yshift = -7mm]$u_{6}$}] (v7) at(11, -2) {};
            \node[nonterminal, label={[yshift = -7mm]$u_{7}$}] (v8) at(12, -2) {};

            \draw[ultra thick, red]
            (t1) -- (v1) (t1) -- (v2) (t1) -- (v3)
            (t2) -- (v3) (t2) -- (v4)
            (t4) -- (v5) (t5) -- (v6);
            \draw
            (t3) -- (v4) (t3) -- (v5) (t3) -- (v6)
            (t4) -- (v4) (t4) -- (v6)
            (t5) -- (v4) (t5) -- (v5)
            (t6) -- (v6) (t6) -- (v7) (t6) -- (v8);
            \draw[rounded corners, dotted]
            (0.5, 1.25) rectangle (6, -0.75)
            (0.5, -1.25) rectangle (7, -3.25);
            \draw[rounded corners, dashed]
            (4, 1) rectangle (9, -.5)
            (4, -1.5) rectangle (9, -3);
        \end{tikzpicture}
        \caption{
        The graph $G$, where black vertices are terminals, white vertices are non-terminals, thick and red edges are marked edges, and edges between two non-terminals are not presented in the graph; the auxiliary subgraph is $F$ with bipartition $V(F) = A \cup B$, where $A = \Set{t_{1}, t_{2}, t_{3}, t_{4}}$ and $B = \Set{u_{1}, u_{2}, u_{3}, u_{4}, v_{5}}$ (denoted by dotted boxes).
        For the non-terminal $v \in V \setminus (T \cap B)$, after hiding $v$, the DM Reduction can be applied.
        When doing the DM Reduction, $\hat{A} = \Set{t_{2}, t_{3}, t_{4}}$ and $\hat{B} = \Set{v_{3}, v_{4}, v_{5}}$ (denoted by dashed boxes) are deleted.}
        \label{FIG: NONTERMINAL DEGREE 2}
\end{figure}

\begin{lemma} \label{SPLIT CORRECTNESS: NONTERMINAL}
    After applying Step~\ref{SPLIT BRANCH: 2NONTERMINAL} or~\ref{SPLIT BRANCH: UNMARKED NONTERMINAL}, the resulting instance in each branch is good, and we can get a branching vector not worse than $(1, 4/3)$ in both steps.
\end{lemma}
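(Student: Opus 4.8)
The plan is to deal with the two branching steps separately. For each, I would first verify that every branch yields a good instance and that the branching is safe, and then compute the measure drop using Lemmas~\ref{MEASURE: DELETE},~\ref{MEASURE: HIDE TERMINAL} and~\ref{MEASURE: HIDE NONTERMINAL}. Goodness is the easy part: each branch is ``hide a vertex'' or ``delete a vertex and decrease $k$'' followed by the DM Reduction, so Lemma~\ref{SAFENESS: OPERATIION} applies directly. Also, since Steps~\ref{SPLIT REDUCTION: 0TERMINAL}--\ref{SPLIT REDUCTION: 2TERMINAL} are assumed inapplicable, every terminal (resp.\ non-terminal) is adjacent to at least two non-terminals (resp.\ terminals) and no two $2$-degree terminals have identical neighbourhoods, so the hypotheses of the ``furthermore'' clause of Lemma~\ref{MEASURE: HIDE NONTERMINAL} are available throughout.

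For Step~\ref{SPLIT BRANCH: 2NONTERMINAL} I would first record the local structure: since $v\in B=N_{G}(A)$ and every edge of the auxiliary subgraph is marked, the unique marked terminal-neighbour $t$ of $v$ lies in $A$ while $t'\notin A$; and because a good instance has marked edges only between a terminal and a non-terminal, $N_{M}(v)=\{t\}$ and $N_{G}(v)\cap T=\{t,t'\}$. For safeness, the marked edge $vt$ forces every solution to contain $v$ or $t$, and if a solution $S$ contains both then $S'=(S\setminus\{v\})\cup\{t'\}$ is a solution of size at most $|S|$ that avoids $v$: the edge $vt$ is still covered by $t$, and every $T$-triangle through the non-terminal $v$ has $t$ or $t'$ as its terminal vertex and hence is hit by $S'$. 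So a yes-instance always admits a solution avoiding $v$ or a solution avoiding $t$, which is exactly what the two branches test. For the branching vector, hiding $v$ and running the DM Reduction drops $\mu$ by at least $\min\{\tfrac23|N_{G}(v)\cap T|-\tfrac13|N_{M}(v)\cap A|,\tfrac43\}=\min\{\tfrac43-\tfrac13,\tfrac43\}=1$ by Lemma~\ref{MEASURE: HIDE NONTERMINAL}, and hiding $t$ (with $t\in A$) and running the DM Reduction drops $\mu$ by at least $\tfrac43$ by Lemma~\ref{MEASURE: HIDE TERMINAL}; this is the vector $(1,4/3)$.

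For Step~\ref{SPLIT BRANCH: UNMARKED NONTERMINAL} safeness is immediate ($v$ is in the solution or it is not, covered by the two halves of Lemma~\ref{SAFENESS: OPERATIION}). The first branch --- delete $v$, decrease $k$ by $1$, DM Reduction --- drops $\mu$ by at least $1$, since deleting $v$ and running the DM Reduction does not increase $\mu$ by Lemma~\ref{MEASURE: DELETE} (note $v\notin A$), while decreasing $k$ by $1$ lowers $\mu$ by exactly $1$. The second branch --- hide $v$, DM Reduction --- is where the choice of $v$ matters; I would choose it so that Lemma~\ref{MEASURE: HIDE NONTERMINAL} delivers a drop of at least $\tfrac43$, again giving the vector $(1,4/3)$. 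This is clear when $v\notin B$, where the drop is at least $\tfrac23|N_{G}(v)\cap T|\ge\tfrac43$. It also holds when $v\in B$ has an unmarked edge to a terminal: then $v$ has a marked terminal-edge (as $v\in B$) and an unmarked terminal-edge, and since Step~\ref{SPLIT BRANCH: 2NONTERMINAL} is inapplicable it cannot have exactly one of each, so $|N_{G}(v)\cap T|\ge3$; using $|N_{M}(v)\cap A|\le|N_{M}(v)\cap T|\le|N_{G}(v)\cap T|-1$, the drop is at least $\tfrac23|N_{G}(v)\cap T|-\tfrac13(|N_{G}(v)\cap T|-1)=\tfrac13(|N_{G}(v)\cap T|+1)\ge\tfrac43$.

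The part I expect to need the most care is establishing that such a favourable branching vertex always exists when Step~\ref{SPLIT BRANCH: UNMARKED NONTERMINAL} fires --- the bad case being a vertex of $B$ all of whose terminal neighbours are reached through marked edges, for which hiding can drop $\mu$ by as little as $\tfrac23$. I would argue as follows. When Step~\ref{SPLIT BRANCH: UNMARKED NONTERMINAL} is reached, Step~\ref{SPLIT REDUCTION: MUESURE} has not quit, so (as $A=T$ would force either $|A|>k$ or $|T|\le k$) we have $A\subsetneq T$; fix $t'\in T\setminus A$. By definition of $A$ the terminal $t'$ is incident to an unmarked edge, and as $T$ is independent its other endpoint $x$ is a non-terminal. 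Thus $x$ is a non-terminal incident to an unmarked edge, and $x$ is either outside $B$ or inside $B$ with an unmarked terminal-edge --- exactly the favourable configurations above --- so branching on $v:=x$ yields the vector $(1,4/3)$. The remaining routine points are that this re-selection is always compatible with the premise of the step and that the two favourable bounds are applied correctly, which I would check using the inapplicability of Steps~\ref{SPLIT REDUCTION: 1TERMINAL} and~\ref{SPLIT BRANCH: 2NONTERMINAL} and the definition of the set $A$.
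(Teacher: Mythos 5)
Your proof is correct and follows essentially the same route as the paper's: goodness via Lemma~\ref{SAFENESS: OPERATIION}, safeness of Step~\ref{SPLIT BRANCH: 2NONTERMINAL} by the exchange $(S\setminus\{v\})\cup\{t'\}$, and the measure drops via Lemmas~\ref{MEASURE: DELETE},~\ref{MEASURE: HIDE TERMINAL} and~\ref{MEASURE: HIDE NONTERMINAL}, using $|N_{G}(v)\cap T|\ge 3$ (from the inapplicability of Step~\ref{SPLIT BRANCH: 2NONTERMINAL}) and $|N_{M}(v)\cap A|\le |N_{G}(v)\cap T|-1$. Your extra argument that a non-terminal with an \emph{unmarked terminal-edge} always exists when Step~\ref{SPLIT BRANCH: UNMARKED NONTERMINAL} is reached (via $A\subsetneq T$ after Step~\ref{SPLIT REDUCTION: MUESURE}) merely makes explicit what the paper compresses into ``by choice of $v$'', so it is a clarification rather than a different approach.
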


\begin{proof}
    According to Lemma~\ref{SAFENESS: OPERATIION} two instances in the two sub-branches are good instances.

    Next, we consider the safeness of these two steps.
    It is trivial that Step~\ref{SPLIT BRANCH: UNMARKED NONTERMINAL} is safe.
    The safeness of Step~\ref{SPLIT BRANCH: 2NONTERMINAL} is based on the following facts: there always exists a minimum solution containing exactly one of $v$ and $t$.
    Indeed, on the one hand, any minimum solution contains at least one of them since $vt$ is marked.
    On the other hand, if there is a minimum solution $S$ containing both $v$ and $t$, it is not hard to see that $(S \setminus \Set{v}) \cup \Set{t'}$ is a solution.
    This is because $v$ is adjacent to only two terminals.
    
    Finally, we analyze how much the measure can reduce in each branch.
    In Step~\ref{SPLIT BRANCH: 2NONTERMINAL}, $v$ is a non-terminal in $B$, and it holds $|N_{M}(v)| = 1$ and $|N_{G}(v) \cap T| = 2$.
    Furthermore, the condition in Lemma~\ref{MEASURE: HIDE NONTERMINAL} holds since Steps~\ref{SPLIT REDUCTION: 1TERMINAL} and~\ref{SPLIT REDUCTION: 2TERMINAL} cannot be applied.
    Then we get that the decreasing of the first branch is at least $1$ by Lemma~\ref{MEASURE: HIDE NONTERMINAL}.
    In the second branch, vertex $v \in B$ implies that the terminal $t$ must belong to $A$, which means that the decrease of the measure is at least $4/3$ according to Lemma~\ref{MEASURE: HIDE TERMINAL}.
    Thus, the branching vector is not worse than $(1, 4/3)$ in Step~\ref{SPLIT BRANCH: 2NONTERMINAL}.

    In the first branch of Step~\ref{SPLIT BRANCH: UNMARKED NONTERMINAL}, by Lemma~\ref{MEASURE: DELETE}, we know that the measure does not decrease after deleting $v$ and doing the DM Reduction.
    Besides, we also put $v$ into the solution and decrease $k$ by $1$.
    Hence, the measure will be decreased by at least $1$.
    Next, we consider the second branch.
    If $v$ belongs to $B$, we have $|N_{G}(v) \cap T| \geq 3$ since Step~\ref{SPLIT BRANCH: 2NONTERMINAL} cannot be applied.
    By choice of $v$ in Step~\ref{SPLIT BRANCH: UNMARKED NONTERMINAL}, the set $(N_{G}(v) \cap T) \setminus A$ is non-empty.
    Furthermore, the condition in Lemma~\ref{MEASURE: HIDE NONTERMINAL} holds since Steps~\ref{SPLIT REDUCTION: 1TERMINAL} and~\ref{SPLIT REDUCTION: 2TERMINAL} cannot be applied.
    Hence, the measure is decreased by at least $2/3 \times 3 - 1/3 \times 2 = 4/3$ according to Lemma~\ref{MEASURE: HIDE NONTERMINAL}.
    Otherwise, $v$ does not belong to $B$.
    The measure also decreases at least $4/3$ by Lemma~\ref{MEASURE: HIDE NONTERMINAL}.
    Therefore, the branching vector is not worse than $(1, 4/3)$ in Step~\ref{SPLIT BRANCH: UNMARKED NONTERMINAL}.
\end{proof}

One can easily find that every edge between a terminal and a non-terminal is marked if Step~\ref{SPLIT BRANCH: UNMARKED NONTERMINAL} cannot be applied.
Thus, we have $T = A$, and Step~\ref{SPLIT REDUCTION: MUESURE} will be applied and return the answer.
Therefore, we obtain the following result.

\begin{lemma} \label{TIME: GOOD}
    \textsc{SFVS-S} can be solved in time $\mathcal{O}^{*}(1.8192^{k})$.
\end{lemma}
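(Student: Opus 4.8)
The plan is to realize the algorithm \texttt{GoodAlg} as the sequential application of Steps~\ref{SPLIT REDUCTION: MUESURE}--\ref{SPLIT BRANCH: UNMARKED NONTERMINAL}, each step being invoked only when all earlier steps are inapplicable, and to bound the size of the resulting search tree using the measure $\mu(\mathcal{I}) = k - \frac{2}{3}|A|$ from Definition~\ref{DEF: MEASURE}. By Lemma~\ref{ALG: GOOD = SFVS-S} it suffices to solve good instances of \textsc{SFVS-S} within the claimed bound, so we may assume throughout that the current instance is good; moreover, after each reduction or branch we reapply the DM Reduction exhaustively so that the instance stays good, as licensed by Lemma~\ref{SAFENESS: OPERATIION}.

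First I would settle correctness. Step~\ref{SPLIT REDUCTION: MUESURE} returns the answer directly when $|A| > k$ (equivalently $\mu(\mathcal{I}) < 0$) or when $|T| \le k$, and its safeness is Lemma~\ref{SAFENESS: MEASURE}. Steps~\ref{SPLIT REDUCTION: 0TERMINAL}--\ref{SPLIT REDUCTION: 2TERMINAL} are reduction rules whose safeness is either immediate or furnished by Lemma~\ref{SPLIT CORRECTNESS: 2TERMINAL}, and by Lemmas~\ref{SAFENESS: OPERATIION},~\ref{MEASURE: DELETE},~\ref{MEASURE: HIDE TERMINAL}, and~\ref{MEASURE: HIDE NONTERMINAL} each of them preserves goodness and does not increase $\mu$. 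Steps~\ref{SPLIT BRANCH: 2NONTERMINAL} and~\ref{SPLIT BRANCH: UNMARKED NONTERMINAL} are branching rules whose safeness and goodness-preservation are exactly the content of Lemma~\ref{SPLIT CORRECTNESS: NONTERMINAL}. When none of the seven steps applies, I would invoke the observation recorded just after Lemma~\ref{SPLIT CORRECTNESS: NONTERMINAL}: every edge between a terminal and a non-terminal is marked, so every terminal lies in $A$, i.e.\ $T = A$; consequently either $|A| = |T| > k$ and Step~\ref{SPLIT REDUCTION: MUESURE} returns No, or $|T| \le k$ and Step~\ref{SPLIT REDUCTION: MUESURE} returns Yes (deleting all of $T$ hits every $T$-triangle and, since in a good instance every marked edge joins a terminal to a non-terminal, every marked edge). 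Thus the recursion always terminates with a correct verdict.

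Next I would bound the running time. Steps~\ref{SPLIT REDUCTION: 0TERMINAL}--\ref{SPLIT REDUCTION: 2TERMINAL} never increase $\mu$ and each strictly shrinks the graph, so a maximal run of reductions takes only polynomial time and contributes a polynomial factor; branching occurs solely at Steps~\ref{SPLIT BRANCH: 2NONTERMINAL} and~\ref{SPLIT BRANCH: UNMARKED NONTERMINAL}, and by Lemma~\ref{SPLIT CORRECTNESS: NONTERMINAL} each such branch splits into two subinstances with branching vector not worse than $(1, 4/3)$, giving the recurrence $R(\mu) \le 1 + R(\mu - 1) + R(\mu - 4/3)$. The associated branching factor is the positive root of $x^{4/3} - x^{1/3} - 1 = 0$; the substitution $y = x^{1/3}$ turns this into $y^{4} - y - 1 = 0$, whose positive root is $\approx 1.2207$, so $x \approx 1.819 < 1.820$. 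Since the per-node work --- building the auxiliary subgraph, computing its Dulmage--Mendelsohn decomposition via one maximum-matching computation, and applying the DM Reduction --- is polynomial, the total running time is $\mathcal{O}^{*}(1.820^{\mu})$. Finally, because $A \subseteq V$ we always have $\mu(\mathcal{I}) = k - \frac{2}{3}|A| \le k$, hence $\mathcal{O}^{*}(1.820^{\mu}) = \mathcal{O}^{*}(1.820^{k})$ for good instances, and Lemma~\ref{ALG: GOOD = SFVS-S} transfers this bound to arbitrary instances of \textsc{SFVS-S}.

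The step I expect to be the main obstacle is not the recurrence itself but verifying that the seven steps are genuinely exhaustive --- that their combined inapplicability really forces $T = A$ so that Step~\ref{SPLIT REDUCTION: MUESURE} is guaranteed to halt the recursion --- together with checking that the measure-non-increase claims of Lemmas~\ref{MEASURE: DELETE}--\ref{MEASURE: HIDE NONTERMINAL} are applied only under hypotheses those lemmas actually grant once the earlier steps have normalized the instance (every terminal lying in some $T$-triangle, each terminal and non-terminal having degree at least two on the other side, and no two degree-$2$ terminals sharing the same neighbourhood), so that the strong bound $\mu(\mathcal{I}) - \mu(\mathcal{I}_{1}) \ge 4/3$ in the second branch of each branching rule is legitimately available.
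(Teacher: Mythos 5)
Your proposal is correct and follows essentially the same route as the paper: assemble \texttt{GoodAlg} from Steps~\ref{SPLIT REDUCTION: MUESURE}--\ref{SPLIT BRANCH: UNMARKED NONTERMINAL}, use the cited lemmas for safeness, goodness-preservation, and non-increase of $\mu$, observe that inapplicability of Step~\ref{SPLIT BRANCH: UNMARKED NONTERMINAL} forces $T = A$ so Step~\ref{SPLIT REDUCTION: MUESURE} terminates the recursion, bound the two branching rules by the vector $(1, 4/3)$ with factor $1.820$, note $\mu \leq k$, and transfer to general \textsc{SFVS-S} via Lemma~\ref{ALG: GOOD = SFVS-S}. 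Your explicit verification of the branching factor and of termination is more detailed than the paper's proof but adds nothing that departs from it.
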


\begin{proof}
    \texttt{GoodAlg} contains only two branching operations in Steps~\ref{SPLIT BRANCH: 2NONTERMINAL} and~\ref{SPLIT BRANCH: UNMARKED NONTERMINAL}.
    By Lemmas~\ref{MEASURE: DELETE},~\ref{MEASURE: HIDE TERMINAL}, and~\ref{MEASURE: HIDE NONTERMINAL}, their branching vectors are not worse than $(1, 4/3)$ whose branching factor is $1.81918$.
    Thus, we conclude that \texttt{GoodAlg} solves good instances of \textsc{SFVS-S} in time $\mathcal{O}^{*}(1.81918^{k})$.  
    According to Lemma~\ref{ALG: GOOD = SFVS-S}, \textsc{SFVS-S} can be solved in time $\mathcal{O}^{*}(1.81918^{k}) \leq \mathcal{O}^{*}(1.8192^{k})$.
\end{proof}

\section{Algorithms for \textsc{SFVS in Chordal Graphs}} \label{SEC: SFVS-C}

In this section, we address \textsc{SFVS} in general chordal graphs.
We mention that we will use the traditional parameter $k$ as the measure to analyze the algorithm in this section.
Thus, the analysis will be totally different from that of \texttt{GoodAlg} in the previous section.
Many properties required in \texttt{GoodAlg} do not hold here.
The algorithm in this section will only need to call \texttt{GoodAlg} as a sub-algorithm and use Lemma~\ref{TIME: GOOD} for the analysis.

Our algorithm for \textsc{SFVS-C}, denoted by \texttt{WholeAlg}, is a recursive algorithm that contains several steps.
After completing a step, \texttt{WholeAlg} recursively calls itself on the newly generated instances.
Each step is introduced under the assumption that none of the previous steps can be applied to the current instance.

For the sake of presentation, we divide the algorithm into two parts.
In the first part, we introduce some reduction rules and branching rules to deal with several easy cases and simplify the instance.
If none of the steps in the first part can be applied, we call the instance a ``thin'' instance.
In a thin instance, if all terminals are simplicial, we can easily reduce it to a good instance of \textsc{SFVS-S} and solve it by calling \texttt{GoodAlg}.
However, if there are ``inner'' terminals (terminals not being simplicial), we employ a divide-and-conquer technique in the second part.
This technique involves branching on a minimal separator containing inner terminals.
In each branch, we will obtain a good instance of \textsc{SFVS-S} for each sub-instance and call \texttt{GoodAlg} to solve it.

\subsection{Part~\Rome{1}: Simplifying the Instance} \label{SIMPLIFICATION}

This subsection presents the first part of \texttt{WholeAlg}, which contains nine steps.
The first seven steps are almost trivial and involve simple reductions and branching rules.

\begin{chordalstep} \label{CHORDAL REDUCTION: 0TERMINAL}
    Delete every vertex not contained in any $T$-triangle or marked edge.
\end{chordalstep}

\begin{chordalstep} \label{CHORDAL REDUCTION: BRIDGE}
    Delete all unmarked bridges from the graph.
\end{chordalstep}

Apparently, Steps~\ref{CHORDAL REDUCTION: 0TERMINAL} and~\ref{CHORDAL REDUCTION: BRIDGE} are safe.
These operations keep the graph chordal and do not change the set of $T$-triangles and the set of marked edges in the graph.

\begin{chordalstep} \label{CHORDAL BRANCH: DEGREE2 MARED}
    If there is a vertex $v$ adjacent to at least two marked edges, we branch into two instances by either
    \begin{itemize}
        \item removing $v$ from the graph and decreasing $k$ by $1$; or
        \item removing all vertices in $N_{M}(v)$ and decreasing $k$ by $|N_{M}(v)| \geq 2$.
    \end{itemize}
\end{chordalstep}

\begin{lemma} \label{SAFENESS: DEGREE2 MARKED}
    Step~\ref{CHORDAL BRANCH: DEGREE2 MARED} is safe, and the branching vector is not worse than $(1, 2)$.
\end{lemma}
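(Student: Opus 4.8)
The plan is to treat Step~\ref{CHORDAL BRANCH: DEGREE2 MARED} as the standard ``$v$ in the solution or not'' branching and to verify the two claimed properties in turn. Since Section~\ref{SEC: SFVS-C} uses the parameter $k$ itself as the measure, the branching vector will follow immediately from safeness: the first branch decreases $k$ by exactly $1$, and the second decreases $k$ by $\vert N_{M}(v) \vert \ge 2$ (the hypothesis of the step is precisely that $v$ is incident to at least two marked edges). Hence the branching vector is $(1, \vert N_{M}(v) \vert)$, which is not worse than $(1,2)$ in the sense defined for branching vectors, so its branching factor is at most that of $(1,2)$.

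For the ``if'' direction of safeness, I would observe that if $S'$ is a solution to the first sub-instance, i.e.\ to the instance induced by $G - v$ with budget $k-1$, then $S' \cup \Set{v}$ has size at most $k$, and $G - (S' \cup \Set{v}) = (G - v) - S'$ contains no $T$-cycle and no marked edge of $M \cap E(G - v)$; every remaining marked edge of $M$ is incident to $v$ and is therefore covered. So $S' \cup \Set{v}$ is a solution to $\mathcal{I}$. Symmetrically, if $S'$ is a solution to the second sub-instance, i.e.\ to the instance induced by $G - N_{M}(v)$ with budget $k - \vert N_{M}(v) \vert$, then $S' \cup N_{M}(v)$ is a solution to $\mathcal{I}$.

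For the ``only if'' direction, take a solution $S$ to $\mathcal{I}$ of size at most $k$ and split on whether $v \in S$. If $v \in S$, then $S \setminus \Set{v}$ is a solution of size at most $k - 1$ to the first sub-instance. If $v \notin S$, the key point is that $S$ must still cover every marked edge incident to $v$, so the other endpoint of each such edge lies in $S$; that is, $N_{M}(v) \subseteq S$, and then $S \setminus N_{M}(v)$ is a solution of size at most $k - \vert N_{M}(v) \vert$ to the second sub-instance. This establishes the Yes/Yes equivalence and hence safeness.

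I would close by remarking that both sub-instances are genuine instances of \textsc{SFVS-C}: deleting vertices preserves chordality (every induced subgraph of a chordal graph is chordal), and the terminal set and marked-edge set of each sub-instance are exactly those of the instance induced by the corresponding subgraph. There is essentially no obstacle in this lemma; the only step needing a moment's thought is the observation that $v \notin S$ forces all of $N_{M}(v)$ into $S$ — the same mechanism underlying the hiding operation — and this is precisely what guarantees that the second branch decreases $k$ by at least $2$.
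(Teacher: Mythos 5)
Your proposal is correct and follows essentially the same argument as the paper's proof: branch on whether $v$ is in a solution, note that $v \notin S$ forces $N_{M}(v) \subseteq S$, and read off the branching vector $(1, \vert N_{M}(v) \vert)$ with $\vert N_{M}(v) \vert \geq 2$. Your version is simply more explicit about the two directions of the Yes/Yes equivalence and about chordality being preserved.
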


\begin{proof}
    If there is a solution $S$ containing $v$, then $S \setminus \Set{v}$ is a solution to the instance induced by $G - v$.
    If $v$ is not contained in any solution $S$, then $S$ must contain all vertices in $N_{M}(v)$.
    The branching rule is safe, and we can get a branching vector $(1, |N_{M}(v)|)$.
    Since the size of $N_{M}(v)$ is at least $2$, the branching vector is not worse than $(1, 2)$.
\end{proof}

After the first three steps, any vertex is contained in a $T$-triangle or adjacent to at most one marked edge.
We will focus on vertices of degree at most two.

\begin{chordalstep} \label{CHORDAL REDUCTION: DEGREE2 0MARKED}
    If there is a $2$-degree vertex $v$ not incident to a marked edge, mark the edge between two neighbours of $v$ and delete $v$.
\end{chordalstep}

\begin{lemma}
    Step~\ref{CHORDAL REDUCTION: DEGREE2 0MARKED} is safe.
\end{lemma}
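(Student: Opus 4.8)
The plan is to prove the equivalence of the input instance $\mathcal{I} = (G, T, M, k)$ and the output instance $\mathcal{I}' = (G', T, M', k)$, where $v$ is the chosen $2$-degree vertex with neighbours $a$ and $b$, $G' = G - v$, and $M' = M \cup \{ab\}$. The first step is a well-definedness observation: since Step~\ref{CHORDAL REDUCTION: 0TERMINAL} is not applicable, every vertex---in particular $v$---lies in a $T$-triangle or in a marked edge; as $v$ is incident to no marked edge by the hypothesis of Step~\ref{CHORDAL REDUCTION: DEGREE2 0MARKED}, it must lie in a $T$-triangle, and since $\deg_G(v) = 2$ the only triangle through $v$ is $vab$. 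Hence $ab \in E(G)$ and $vab$ is a $T$-triangle, so the rule is meaningful. Moreover $G'$ is an induced subgraph of $G$, hence chordal, so by Lemma~\ref{T TRIANGLE} I may work with $T$-triangles in place of $T$-cycles for both instances.

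For the backward direction, given a solution $S'$ to $\mathcal{I}'$, I would check that $S'$ is already a solution to $\mathcal{I}$: it has size at most $k$, it covers $M \subseteq M'$, and $G - S'$ is $T$-triangle-free because every $T$-triangle of $G$ either avoids $v$ (and so would survive in $G' - S'$, a contradiction) or equals $vab$ (and is hit since $S'$ covers $ab \in M'$, so $a \in S'$ or $b \in S'$).

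For the forward direction, given a solution $S$ to $\mathcal{I}$, I would distinguish two cases. If $v \notin S$, then $vab$ being a $T$-triangle and $G - S$ being $T$-triangle-free force $a \in S$ or $b \in S$, so $S$ covers $ab$ and, being a subset of $V(G')$, is a solution to $\mathcal{I}'$ verbatim. If $v \in S$, I would replace it by a neighbour: put $S'' = (S \setminus \{v\}) \cup \{a\}$. Then $|S''| \le |S| \le k$; $S''$ covers $ab$; $S''$ still covers every marked edge of $M$, since $v$ is incident to no marked edge and so removing it un-covers nothing; and $G' - S'' = G - (S \cup \{a\})$ is an induced subgraph of the $T$-triangle-free graph $G - S$, hence $T$-triangle-free. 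So $S''$ solves $\mathcal{I}'$.

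Everything here is routine; the only points that genuinely need care are that the rule is well-defined (which is why Step~\ref{CHORDAL REDUCTION: 0TERMINAL} must precede it, guaranteeing $ab \in E(G)$) and that replacing $v$ by a neighbour in a solution cannot destroy marked-edge coverage (which is exactly why the rule is restricted to $2$-degree vertices incident to no marked edge).
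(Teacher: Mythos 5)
Your proof is correct and follows essentially the same approach as the paper: the key observation that the non-applicability of Step~\ref{CHORDAL REDUCTION: 0TERMINAL} forces $vab$ to be a $T$-triangle, and the replacement of $v$ by a neighbour when $v$ is in a solution. You simply spell out the well-definedness and both directions of the equivalence in more detail than the paper does.
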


\begin{proof}
    Since Step~\ref{SPLIT REDUCTION: 0TERMINAL} cannot be applied, for any $2$-degree vertex $v$ with neighbours $N_{G}(v) = \Set{v_{1}, v_{2}}$, the three vertices $\Set{v, v_{1}, v_{2}}$ form a $T$-triangle.
    If there exists a solution containing $v$, we can replace $v$ with either $v_{1}$ or $v_{2}$.
    In either case, we obtain a solution of the same size that does not contain $v$.
    Therefore, Step~\ref{CHORDAL REDUCTION: DEGREE2 0MARKED} is safe.
\end{proof}

\begin{chordalstep} \label{CHORDAL REDUCTION: DEGREE2 1MARKED}
    If there is a $1$-degree or $2$-degree vertex $v$ incident to the marked edge $vu \in M$, delete $u$ from the graph and decrease $k$ by $1$.
\end{chordalstep}

\begin{lemma} \label{CHORDAL CORRECTNESS: DEGREE2 1MARKED}
    Step~\ref{CHORDAL REDUCTION: DEGREE2 1MARKED} is safe.
\end{lemma}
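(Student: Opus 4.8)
The plan is to prove the safeness by establishing the equivalence between the input instance $\mathcal{I} = (G, T, M, k)$ and the output instance $\mathcal{I}'$, which is the instance induced by $G - u$ but with parameter $k - 1$; that is, $\mathcal{I}' = (G - u,\ T \cap V(G-u),\ M \cap E(G-u),\ k - 1)$. Throughout I will use the two structural facts available here: first, $v$ has degree at most two and $u \in N_G(v)$, so $v$ has at most one neighbour in $G - u$ and hence cannot lie on any cycle of any subgraph of $G - u$; second, since Step~\ref{CHORDAL BRANCH: DEGREE2 MARED} cannot be applied, $v$ is incident to at most one marked edge, so $vu$ is the unique marked edge incident to $v$.

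For the direction ``$\mathcal{I}'$ is a Yes-instance $\Rightarrow$ $\mathcal{I}$ is a Yes-instance'', I would take a solution $S'$ of $\mathcal{I}'$ with $|S'| \le k - 1$ and set $S \coloneq S' \cup \{u\}$, so $|S| \le k$ (note $u \notin V(G-u)$, so $u \notin S'$). Then $G - S = (G - u) - S'$, which contains no $T$-cycle, because any terminal surviving in $G - u$ lies in $T \cap V(G-u)$, so such a cycle would be a $(T \cap V(G-u))$-cycle, contradicting that $S'$ solves $\mathcal{I}'$. Every marked edge of $M$ incident to $u$ is covered by $u \in S$, and every other marked edge of $M$ lies in $M \cap E(G-u)$ and is therefore covered by $S'$. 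Hence $S$ is a solution of $\mathcal{I}$. This direction is routine.

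For the converse, I would start from a solution $S$ of $\mathcal{I}$ with $|S| \le k$; since $vu \in M$, either $u \in S$ or $v \in S$. If $u \in S$, then $S \setminus \{u\}$ has size at most $k - 1$ and $(G - u) - (S \setminus \{u\}) = G - S$, so it is a solution of $\mathcal{I}'$. If $u \notin S$ (hence $v \in S$), I claim $S \setminus \{v\}$ is a solution of $\mathcal{I}'$: it has size at most $k - 1$, and in $(G - u) - (S \setminus \{v\})$ every marked edge of $M \cap E(G-u)$ is still covered (an edge covered by $S$ whose only covering endpoint were $v$ would be incident to $v$, hence would be $vu$, which is not in $E(G-u)$), while a would-be $T$-cycle $C$ avoiding both $u$ and $S \setminus \{v\}$ must pass through $v$ (otherwise $C$ would avoid all of $S$ and persist in $G - S$), which is impossible since $v$ has degree at most one in $G - u$. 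The step where I expect the real content to sit is exactly this last case: the reduction removes $u$ from the graph \emph{without} first putting $v$ in the solution, so one must argue that deleting $u$ alone already ``neutralises'' $v$ — and this is precisely where the degree-at-most-two hypothesis of the step and the single-marked-edge property inherited from Step~\ref{CHORDAL BRANCH: DEGREE2 MARED} are used. The remaining bookkeeping about marked edges and terminal membership is straightforward.
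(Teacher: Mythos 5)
Your proof is correct and rests on exactly the same two facts the paper uses (every triangle through $v$ also passes through $u$ because $\deg_G(v)\le 2$, and $vu$ is the only marked edge at $v$ because Step~\ref{CHORDAL BRANCH: DEGREE2 MARED} no longer applies); the paper phrases the key case as ``replace $v$ by $u$ in the solution,'' while you phrase it as ``deleting $u$ already neutralises $v$, so $S\setminus\{v\}$ works,'' which is the same exchange argument. Your write-up is merely more explicit about the two directions of the equivalence, which the paper treats as routine.
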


\begin{proof}
    After Step~\ref{CHORDAL BRANCH: DEGREE2 MARED}, we know that vertex $v$ is incident to at most one marked edge $vu$.
    At least one of $u$ and $v$ must be in the solution $S$.
    If $u$ is in $S$, but $v$ is not, we can replace $u$ with $v$ in $S$ to get another solution.
    This is because any triangle containing $v$ also contains $u$, and any marked edge incident to $v$ is also incident to $u$.
\end{proof}

Next, the main task of Steps~\ref{CHORDAL REDUCTION: DOMINATING TERMINALS} and~\ref{CHORDAL BRANCH: DEGREE3 SIMPLICIAL} is to deal with some simplicial vertices.
For a simplicial vertex $v$, it possesses the property that any neighbour $u$ of $v$ satisfies $N_{G}[v] \subseteq N_{G}[u]$.

\begin{chordalstep} \label{CHORDAL REDUCTION: DOMINATING TERMINALS}
    If there is a marked edge $vu \in M$ such that $v$ is a non-terminal or $u$ is a terminal, and $N_{G}[v] \subseteq N_{G}[u]$, then delete $u$, and decrease $k$ by $1$.
\end{chordalstep}

\begin{lemma} \label{CHORDAL CORRECTNESS: DOMINATING TERMINALS}
    Step~\ref{CHORDAL REDUCTION: DOMINATING TERMINALS} is safe.
\end{lemma}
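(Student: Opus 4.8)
\emph{Proof plan.} Recall that Step~\ref{CHORDAL REDUCTION: DOMINATING TERMINALS} deletes the endpoint $u$ of the marked edge $vu$ and decreases $k$ by one (the ``$t$'' in its statement is to be read as ``$u$'', the dominating endpoint), producing the instance $\mathcal{I}' = (G - u,\, T \setminus \{u\},\, M \setminus \{e \in M : u \in e\},\, k - 1)$. The plan is to prove the two directions of ``$\mathcal{I}$ is a Yes-instance iff $\mathcal{I}'$ is'', working throughout with the $T$-triangle formulation of solutions justified by Lemma~\ref{T TRIANGLE}: a vertex set of size at most $k$ is a solution exactly when it meets every marked edge and every $T$-triangle. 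The backward direction is immediate: if $S'$ solves $\mathcal{I}'$, then $|S' \cup \{u\}| \le k$, and $u$ destroys precisely the marked edges of $M$ incident to $u$ and the $T$-triangles of $G$ through $u$, so $S' \cup \{u\}$ solves $\mathcal{I}$.

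For the forward direction I would take any solution $S$ to $\mathcal{I}$ with $|S| \le k$. Since $vu \in M$, either $u \in S$, in which case $S \setminus \{u\}$ already witnesses that $\mathcal{I}'$ is a Yes-instance, or $v \in S$ and $u \notin S$. In the latter case the goal is to replace $v$ by $u$: I claim $S^{*} := (S \setminus \{v\}) \cup \{u\}$ is again a solution (clearly $|S^{*}| \le |S| \le k$), after which $S^{*} \setminus \{u\}$ solves $\mathcal{I}'$. That $S^{*}$ meets every marked edge uses that Step~\ref{CHORDAL BRANCH: DEGREE2 MARED} is not applicable, so $v$ lies on at most one marked edge, namely $vu$; every marked edge other than $vu$ is already met by $S \setminus \{v\} \subseteq S^{*}$, and $vu$ is met by $u \in S^{*}$.

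The heart of the argument is that $S^{*}$ still meets every $T$-triangle. Suppose, toward a contradiction, that some $T$-triangle is met by $S$ only through $v$; write it $\{v, a, b\}$ with $a, b \notin S$. If $u \in \{a, b\}$ then $u \in S^{*}$ meets it, so assume $u \notin \{a, b\}$. Then $a, b \in N_G(v) \subseteq N_G[u]$ together with $u \notin \{a, b\}$ give $a, b \in N_G(u)$, and since $ab \in E$ the set $\{u, a, b\}$ is a triangle; it is moreover a $T$-triangle, because if the terminal of $\{v, a, b\}$ is $a$ or $b$ this is clear, while if it is $v$ then $v$ is a terminal, so the side condition of Step~\ref{CHORDAL REDUCTION: DOMINATING TERMINALS} (``$v$ is a non-terminal or $u$ is a terminal'') forces $u$ to be a terminal. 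But then the solution $S$ must meet $\{u, a, b\}$, which is impossible since $u, a, b \notin S$. Hence no such $T$-triangle exists, $S^{*}$ is a solution, and the forward direction follows.

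I expect the last paragraph to be the main obstacle: the point is to notice that the domination $N_G[v] \subseteq N_G[u]$ carries every triangle through $v$ but not through $u$ to a ``parallel'' triangle through $u$, that the non-terminal/terminal side condition is exactly what is needed for that parallel triangle to inherit the $T$-label, and that $S$ being forced to cover the parallel triangle (necessarily via $a$ or $b$, as $u \notin S$) contradicts the assumption that the original triangle was covered by $v$ alone. Everything else — the marked-edge bookkeeping and the backward inclusion — is routine and relies only on the fact that $v$ and $u$ each lie on at most one marked edge after Step~\ref{CHORDAL BRANCH: DEGREE2 MARED}.
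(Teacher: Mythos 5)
Your proof is correct and follows essentially the same route as the paper's: both replace $v$ by $u$ in a solution via $S^{*}=(S\setminus\{v\})\cup\{u\}$ and use the domination $N_{G}[v]\subseteq N_{G}[u]$ together with the terminal/non-terminal side condition to transfer any $T$-triangle covered only by $v$ to a parallel $T$-triangle through $u$ that $S$ would have missed. You are in fact slightly more thorough than the paper, which leaves the marked-edge bookkeeping, the case $u\in\{a,b\}$, and the reverse direction implicit.
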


\begin{proof}
    Since $vu$ is a marked edge, at least one of $v$ and $u$ is in any solution $S$.
    We show that if $v$ is in $S$ but $u$ is not, then $S' = (S \setminus \Set{v}) \cup \Set{u}$ is a solution containing $u$.
    Assume to the contrary that $S'$ is not a solution. 
    We know that there is a $T$-triangle $vab$ containing $v$ in $G - S'$.
    Thus $uab$ is also a triangle in $G$ since $N_{G}[v] \subseteq N_{G}[u]$.
    If $v$ is a non-terminal, at least one of $a$ and $b$ is a terminal.
    If $v$ is a terminal, $u$ must be a terminal by the condition.
    It follows that $uab$ is a $T$-triangle.
    These two vertices, $a$ and $b$, are not in $S'$ or $S$.
    We get a contradiction that $u$, $a$, and $b$ form a $T$-triangle in $G - S$.
\end{proof}

\begin{chordalstep} \label{CHORDAL BRANCH: DEGREE3 SIMPLICIAL}
    Let $Q$ be a simplicial clique of size at least $4$.
    If there is a simplicial vertex $v \in Q$ adjacent to a terminal $t$, branch into two instances by either
    \begin{itemize}
        \item removing $t$ from the graph and decreasing $k$ by $1$; or
        \item removing $Q \setminus \Set{t, v}$ from the graph and decreasing $k$ by $|Q| - 2 \geq 2$.
    \end{itemize}
\end{chordalstep}

\begin{lemma} \label{CHORDAL CORRECTNESS: DEGREE3 SIMPLICIAL}
    Step~\ref{CHORDAL BRANCH: DEGREE3 SIMPLICIAL} is safe, and the branching vector is not worse than $(1, 2)$.
\end{lemma}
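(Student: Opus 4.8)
The plan is to first pin down the local structure, then dispatch the branching-vector claim and the forward direction of safeness (both immediate), and finally prove the backward direction by a short exchange argument. Since $v$ is simplicial, $N_{G}[v]$ is a clique, and because $Q$ is a maximal clique containing $v$ we get $Q = N_{G}[v]$; in particular $t \in Q$, $t \neq v$, and $\vert Q \setminus \{t,v\} \vert = \vert Q \vert - 2 \geq 2$. The two branches therefore decrease $k$ by $1$ and by $\vert Q \vert - 2 \geq 2$, so the branching vector $(1, \vert Q \vert - 2)$ is not worse than $(1,2)$. For the forward direction of safeness, a solution to the first sub-instance becomes a solution of $\mathcal{I}$ after adding $t$ back (every marked edge or $T$-triangle through $t$ is covered by $t$, all others by the sub-solution), and a solution to the second sub-instance becomes one of $\mathcal{I}$ after adding $Q \setminus \{t,v\}$ back, for the same reason.

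For the backward direction I would start from a minimum solution $S$ of $\mathcal{I}$ and split on whether $t \in S$. If $t \in S$, then $S \setminus \{t\}$ witnesses the first sub-instance. If $t \notin S$, I would exploit that $Q$ is a clique containing the terminal $t$: any two vertices of $Q \setminus \{t\}$ surviving in $G - S$ would complete a $T$-triangle with $t$, so $S$ omits at most one vertex of $Q \setminus \{t\}$. If $S$ omits none of them, or omits exactly $v$, then $Q \setminus \{t,v\} \subseteq S$ and $S \setminus (Q \setminus \{t,v\})$ witnesses the second sub-instance. Otherwise $S$ omits a single vertex $w \neq v$ (hence $v \in S$), and I would exchange $v$ for $w$, setting $S' \coloneq (S \setminus \{v\}) \cup \{w\}$; this keeps $\vert S' \vert \leq \vert S \vert$ and achieves $Q \setminus \{t,v\} \subseteq S'$, so once $S'$ is shown to be a solution, $S' \setminus (Q \setminus \{t,v\})$ witnesses the second sub-instance.

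The one step I expect to require care is verifying that the exchanged set $S'$ is again a solution. The $T$-triangle condition is straightforward: a $T$-triangle of $G - S'$ that is not already present in $G - S$ must use $v$, hence lie inside $N_{G}[v] = Q$, but $S'$ omits from $Q \setminus \{v\}$ only the terminal $t$, so no such triangle can be completed. The marked-edge condition is the delicate point: an uncovered marked edge of $S'$ would have to be incident to $v$, and since $S' \supseteq Q \setminus \{t,v\}$ its other endpoint can only be $t$, i.e. $vt \in M$. Here I would invoke that Step~\ref{CHORDAL REDUCTION: DOMINATING TERMINALS} is not applicable: $v$ simplicial gives $N_{G}[v] = Q \subseteq N_{G}[t]$, so a marked edge $vt$ with the terminal $t$ at one endpoint would have triggered that step. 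Hence $vt \notin M$, no marked edge of $S'$ is left uncovered, and $S'$ is a solution, completing the argument.
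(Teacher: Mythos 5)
Your proposal is correct and follows essentially the same route as the paper's proof: both reduce to showing that when $t$ is kept, a minimum solution can be assumed to contain $Q \setminus \{t,v\}$, via the same exchange (swapping $v$ for the single omitted vertex of $Q \setminus \{t\}$) and the same appeal to Step~\ref{CHORDAL REDUCTION: DOMINATING TERMINALS} to rule out $vt \in M$. Your write-up is merely more explicit about the case analysis and about why the exchanged set remains a solution.
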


\begin{proof}
    The edge $vt$ is not marked; otherwise, Step~\ref{CHORDAL REDUCTION: DOMINATING TERMINALS} should be applied.
    To prove this lemma, we show that if $t$ is not in the solution, there also is a solution that does not contain the vertex $v$.

    If $t$ is not in a solution $S$, then $S$ contains at least $|Q| - 2$ vertices in $Q$.
    Otherwise, three vertices in $Q \setminus S$ (including the terminal $t$) would be left and form a $T$-cycle.
    We can replace $S\cap Q$ with the $|Q| - 2$ vertices $Q\setminus \Set{t, v}$ to get another solution $S' = (S \setminus Q) \cup (Q\setminus \Set{t, v})$.
    This is because in $G - S'$, $v$ is only adjacent to $t$, and $v$ will not be in any $T$-cycle or marked edge.

    Since the size of $Q$ is at least $4$, we can decrease $k$ by at least $|Q| - 2 \geq 2$.
    We get a branching vector $(1, 2)$.
\end{proof}

\paragraph*{Reductions Based on Small Separators}

Now, we introduce some reduction rules based on separators of size one or two.
Let $Q$ be a small separator in the graph.
As mentioned in Subsection~\ref{CHORDAL AND SPLIT}, due to the properties of the chordal graph, we know that $Q$ forms a clique.
Our reduction rules aim to replace certain connected components $Z$ of $G - Q$ with a single terminal or two terminals, potentially marking some newly created edges in the process.
If $Q$ is a separator of size $1$, we can reduce the instance to one of two cases in Fig.~\ref{FIG: SEPQRATOR1}(b) and (c).
This property is formally described in Lemma~\ref{STRUCTURE: SEPQRATOR1}.
Recall that $s(\mathcal{I})$ denotes the size of the minimum solution to the instance $\mathcal{I}$.

\begin{lemma} \label{STRUCTURE: SEPQRATOR1}
    Assume that $Q = \Set{v}$ is a separator of size $1$.
    Let $Z$ be a connected component of $G - Q$, $\mathcal{I}_{1}$ be the instance induced by $Z$, and $\mathcal{I}_{2}$ be the instance induced by $Z \cup Q$.
    \begin{enumerate}
        \item If $s(\mathcal{I}_{1}) = s(\mathcal{I}_{2})$, then $s(\mathcal{I}) = s(\mathcal{I}_{1}) + s(\mathcal{\bar{I}}_{1})$, where $\mathcal{\bar{I}}_{1}$ is the instance obtained by deleting $Z$; and
        \item if $s(\mathcal{I}_{1}) + 1 = s(\mathcal{I}_{2})$, then $s(\mathcal{I}) = s(\mathcal{I}_{1}) + s(\mathcal{\bar{I}}_{2})$, where $\mathcal{\bar{I}}_{2}$ is the instance obtained by replacing $Z$ with a single terminal $t$ in $\mathcal{I}$ and marking the new edge $vt$.
    \end{enumerate}
\end{lemma}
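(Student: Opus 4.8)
The plan is to exploit the fact that $Q = \Set{v}$ is a single-vertex separator, so every $T$-triangle and every marked edge lies either entirely inside $Z \cup \Set{v}$ or entirely inside $(V \setminus Z)$; nothing straddles the cut except through $v$ itself. First I would establish the \emph{decomposition} direction: any solution $S$ to $\mathcal{I}$ restricts to a solution $S \cap (Z \cup \Set{v})$ of $\mathcal{I}_2$ and a solution $S \cap ((V \setminus Z) \cup \Set{v})$ of the instance induced by $(V \setminus Z) \cup \Set{v}$, and these two pieces overlap only possibly in $v$; hence $s(\mathcal{I}) \ge s(\mathcal{I}_2') + s(\text{other side}) - [\,v \in S\,]$, where one must be careful about whether $v$ is counted twice. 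The cleaner way is to case on whether an optimal solution contains $v$.

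For the two cases, the key quantity is the ``marginal cost'' of $v$ inside $Z \cup Q$, namely $s(\mathcal{I}_2) - s(\mathcal{I}_1)$, which is either $0$ or $1$ since adding $v$ to any solution of $\mathcal{I}_1$ already kills all $T$-triangles and marked edges inside $Z \cup \Set{v}$ (every such $T$-triangle/marked edge that is not internal to $Z$ passes through $v$). In \textbf{Case 1} ($s(\mathcal{I}_1) = s(\mathcal{I}_2)$): here there is a minimum solution to $Z \cup \Set{v}$ avoiding $v$, so nothing inside $Z$ forces us to take $v$; I would argue that an optimal solution to $\mathcal{I}$ need not use $v$ on account of $Z$, so we may take the union of a minimum solution to $\mathcal{I}_1$ with a minimum solution to $\bar{\mathcal{I}}_1 = \mathcal{I} - Z$, giving $s(\mathcal{I}) \le s(\mathcal{I}_1) + s(\bar{\mathcal{I}}_1)$, and the reverse inequality by restriction. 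In \textbf{Case 2} ($s(\mathcal{I}_1) + 1 = s(\mathcal{I}_2)$): now $v$ \emph{is} useful for the $Z$-side, and the situation is exactly captured by the gadget $\bar{\mathcal{I}}_2$ obtained by shrinking $Z$ to a single terminal $t$ with the marked edge $vt$ — taking $t$ in a solution corresponds to paying the extra $+1$ and covering everything in $Z$ optimally (which is why $vt$ is marked: it records that the $Z$-part needs $s(\mathcal{I}_1)$ vertices plus a ``witness'' that is either $v$ or $t$), while taking $v$ corresponds to reusing $v$ for the $Z$-side at no extra charge. I would make this precise by showing a bijI­ection-style correspondence between solutions: given $S$ for $\mathcal{I}$, map it to $(S \cap Z)$ together with $S$ restricted to the other side with $Z$ replaced by $t$ if and only if $v \notin S$ (so that the marked edge $vt$ is still covered), checking that sizes match up to the constant $s(\mathcal{I}_1)$; and conversely.

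The main obstacle I anticipate is the bookkeeping of the shared vertex $v$ and showing that the replacement in Case 2 neither creates spurious $T$-triangles nor loses any — i.e., that no $T$-triangle through $v$ on the far side is affected, and that the single marked edge $vt$ faithfully encodes ``one of $\{v\} \cup Z$ must be hit beyond the internal $s(\mathcal{I}_1)$ budget''. One has to verify that in $\bar{\mathcal{I}}_2$ the new terminal $t$ of degree $1$ sits in no $T$-triangle (it only has the marked edge $vt$), so it behaves purely as a covering constraint, and that chordality is preserved. Once that structural claim is nailed down, both equalities follow by pairing a minimum solution on each side, and the $+1$ in Case 2 is absorbed exactly once into $s(\bar{\mathcal{I}}_2)$ via the choice of whether to delete $t$ or $v$.
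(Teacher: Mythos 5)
Your overall strategy is the same as the paper's: restrict a minimum solution of $\mathcal{I}$ to the two sides of the cut to get the lower bound, and glue optimal solutions of the two sides back together for the upper bound, with Case~2 handled by the $vt$-marked-edge gadget encoding the choice ``pay for $v$ or pay the extra $+1$ inside $Z$''. Your Case~2 correspondence (swap $S\cap(Z\cup\Set{v})$ for a minimum solution of $\mathcal{I}_1$ or $\mathcal{I}_2$ according to whether $v\in S$, and map $v\notin S$ to ``take $t$'') is exactly the right argument and matches the paper.

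There is, however, one step in your Case~1 that would fail as written: you propose to take \emph{the union of a minimum solution to $\mathcal{I}_1$ with a minimum solution to $\bar{\mathcal{I}}_1$}. Neither of these two sets is required to hit a $T$-triangle of the form $\Set{v,z_1,z_2}$ with $z_1,z_2\in Z$, nor a marked edge $vz$ with $z\in Z$: such objects belong to $\mathcal{I}_2$ but not to $\mathcal{I}_1$ (which lives on $Z$ alone) and not to $\bar{\mathcal{I}}_1$ (which lives on $G-Z$). This is precisely the ``straddling through $v$'' issue you yourself flag as the main obstacle, but the construction you give does not resolve it. The fix is the one the paper uses and that your own first sentence of Case~1 already contains the germ of: under the hypothesis $s(\mathcal{I}_1)=s(\mathcal{I}_2)$, \emph{every} minimum solution $S_2$ of $\mathcal{I}_2$ avoids $v$ (otherwise $S_2\setminus\Set{v}$ would be a solution of $\mathcal{I}_1$ of size $s(\mathcal{I}_1)-1$), so $S_2\subseteq Z$, $|S_2|=s(\mathcal{I}_1)$, and $S_2$ does cover the straddling triangles and marked edges; the correct gluing is $S_2$ together with a minimum solution of $\bar{\mathcal{I}}_1$. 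With that substitution your argument goes through; the lower bound by restriction is fine as you state it.
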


\begin{proof}
    Let $S$ be a minimum solution to instance $\mathcal{I}$.
    It satisfies that $S \setminus Z$ is a solution to the instance $\mathcal{\bar{I}}_{1}$, which is induced by $G - Z$.
    Clearly, we also have that $|S \cap Z| \geq s(\mathcal{I}_{1})$ and $|S \cap (Z \cup Q)| \geq s(\mathcal{I}_{2})$.
    Let $S_{i}$ be a minimum solution to instance $\mathcal{I}_{i}$.
    
    Suppose that $s(\mathcal{I}_{1}) = s(\mathcal{I}_{2})$.
    Clearly, $(S \setminus Z) \cup S_{2}$ is also a solution to $\mathcal{I}$ since $\Set{v}$ is a separator.
    In addition, we have $| (S \setminus Z) \cup S_{2}| \leq |S| - s(\mathcal{I}_{1}) + s(\mathcal{I}_{2}) = |S|$, which yields that $| (S \setminus Z) \cup S_{2}|$ is a minimum solution.
    We know that $S \setminus Z$ is a solution to the instance $\mathcal{\bar{I}}_{1}$.
    Then we derive that $s(\mathcal{I}) = s(\mathcal{I}_{1}) + s(\mathcal{\bar{I}}_{1})$.
    
    Suppose that $s(\mathcal{I}_{1}) + 1 = s(\mathcal{I}_{2})$.
    Apparently, $(S \setminus Z) \cup \Set{v} \cup S_{1}$ is a solution to $\mathcal{I}$ since $\Set{v}$ is a separator.
    In addition, we have $| (S \setminus (Z \cup Q)) \cup \Set{v} \cup S_{1}| \leq |S| - s(\mathcal{I}_{2}) + 1 + s(\mathcal{I}_{1}) = |S|$, which leads that $(S \setminus Z) \cup \Set{v} \cup S_{1}$ is a minimum solution.
    Notice that $(S \setminus (Z \cup Q)) \cup \Set{v}$ is a minimum solution to $\mathcal{\bar{I}}_{2}$.
    Therefore, we obtain that $s(\mathcal{I}) = s(\mathcal{\bar{I}}_{2}) + s(\mathcal{I}_{1})$.
\end{proof}

\begin{figure}[t!]
    \centering
    \subfloat[The instance $\mathcal{I}$.]{
        \begin{tikzpicture}
            [
            scale = 1,
            nonterminal/.style={draw, shape = circle, fill = white, inner sep = 2pt},
            terminal/.style={draw, shape = circle, fill = black, inner sep = 2pt},
            ]
            \node (Z) at(0, -1.2) {$Z$};
            \node[inner sep = 0pt] (x1) at(-1, 1) {};
            \node[inner sep = 0pt] (x2) at(-0.33, 1) {};
            \node[inner sep = 0pt] (x3) at(0.33, 1) {};
            \node[inner sep = 0pt] (x4) at(1, 1) {};

            \node[inner sep = 0pt] (y1) at(-1, -1) {};
            \node[inner sep = 0pt] (y2) at(-0.33, -1) {};
            \node[inner sep = 0pt] (y3) at(0.33, -1) {};
            \node[inner sep = 0pt] (y4) at(1, -1) {};
            \node[nonterminal, label={[xshift = 4mm, yshift = -3mm]$v$}] (v) at(0, 0) {};

            \draw
            (v) -- (x1) (v) -- (x2) (v) -- (x3) (v) -- (x4)
            (v) -- (y1) (v) -- (y2) (v) -- (y4);
            \draw[red, ultra thick]
            (v) -- (y3);
            \draw[dashed] (0, 1) circle [x radius = 1.5, y radius = 0.6];
            \draw[dashed] (0, -1) circle [x radius = 1.5, y radius = 0.6];
        \end{tikzpicture}
    }
    \subfloat[The instance $\mathcal{\bar{I}}_{1}$.]{
        \begin{tikzpicture}
            [
            scale = 1,
            nonterminal/.style={draw, shape = circle, fill = white, inner sep = 2pt},
            terminal/.style={draw, shape = circle, fill = black, inner sep = 2pt},
            ]
            \node[inner sep = 0pt] (x1) at(-1, 1) {};
            \node[inner sep = 0pt] (x2) at(-0.33, 1) {};
            \node[inner sep = 0pt] (x3) at(0.33, 1) {};
            \node[inner sep = 0pt] (x4) at(1, 1) {};

            \node[nonterminal, label={[xshift = 4mm, yshift = -3mm]$v$}] (v) at(0, 0) {};

            \draw
            (v) -- (x1) (v) -- (x2) (v) -- (x3) (v) -- (x4);
            \draw[dashed] (0, 1) circle [x radius = 1.5, y radius = 0.6];
            \draw[dashed, white] (0, -1) circle [x radius = 1.5, y radius = 0.6];
        \end{tikzpicture}
    }
    \subfloat[The instance $\mathcal{\bar{I}}_{2}$.]{
        \begin{tikzpicture}
            [
            scale = 1,
            nonterminal/.style={draw, shape = circle, fill = white, inner sep = 2pt},
            terminal/.style={draw, shape = circle, fill = black, inner sep = 2pt},
            ]
            \node[inner sep = 0pt] (x1) at(-1, 1) {};
            \node[inner sep = 0pt] (x2) at(-0.33, 1) {};
            \node[inner sep = 0pt] (x3) at(0.33, 1) {};
            \node[inner sep = 0pt] (x4) at(1, 1) {};

            \node[terminal, label={[yshift = -7mm]$t$}] (t) at(0, -1) {};
            \node[nonterminal, label={[xshift = 4mm, yshift = -3mm]$v$}] (v) at(0, 0) {};

            \draw
            (v) -- (x1) (v) -- (x2) (v) -- (x3) (v) -- (x4);
            \draw[dashed] (0, 1) circle [x radius = 1.5, y radius = 0.6];
            \draw[dashed, white] (0, -1) circle [x radius = 1.5, y radius = 0.6];

            \draw[red, ultra thick]
            (v) -- (t);
        \end{tikzpicture}
    }
    \caption{The instances $\mathcal{I}$, $\mathcal{\bar{I}}_{1}$ and $\mathcal{\bar{I}}_{2}$.
    The black vertices denote terminals, and the white vertices denote non-terminals.
    $Q = \Set{v}$ is a separator of size one in $G$.
    In (a), the connected component $Z$ in $G - Q$ is in the dashed ellipse.}
    \label{FIG: SEPQRATOR1}
\end{figure}

If $Q$ is a separator of size $2$, we can reduce the instance to one of six different cases in Fig.~\ref{FIG: SEPQRATOR2} (b) to (g).
This property is formally described in the following lemma.

\begin{lemma} \label{STRUCTURE: SEPQRATOR2}
    Assume that $Q = \Set{v, u}$ is a separator of size $2$.
    Let $Z$ be a connected component of $G - Q$.
    For each subset $Q' \subseteq Q$, let $\mathcal{I}(Q')$ be the instance induced by $Z \cup Q'$.
    If 
    $$s(\mathcal{I}(\varnothing)) + x = s(\mathcal{I}(\Set{v})) + y = s(\mathcal{I}(\Set{u})) + z = s(\mathcal{I}(\Set{v,u})),$$
    then $s(\mathcal{I}) = s(\mathcal{I}(\varnothing)) + s(\mathcal{I}'_{xyz})$, where
    \begin{enumerate}
        \item $\mathcal{I}'_{1} = \mathcal{I}'_{000}$ is the instance obtained by deleting $Z$;
        \item $\mathcal{I}'_{2} = \mathcal{I}'_{111}$ is the instance obtained by replacing $Z$ with a single terminal $t$ adjacent to $v$ and $u$ in $\mathcal{I}$ and not marking any edge incident on $t$;
        \item $\mathcal{I}'_{3} = \mathcal{I}'_{110}$ is the instance obtained by replacing $Z$ with a single terminal $t$ adjacent to $v$ and $u$ in $\mathcal{I}$ and only marking the new edge $vt$;
        \item $\mathcal{I}'_{4} = \mathcal{I}'_{101}$ is the instance obtained by replacing $Z$ with a single terminal $t$ adjacent to $v$ and $u$ in $\mathcal{I}$ and only marking the new edge $ut$;
        \item $\mathcal{I}'_{5} = \mathcal{I}'_{011}$ is the instance obtained by replacing $Z$ with a single terminal $t$ adjacent to $v$ and $u$ in $\mathcal{I}$ and marking the two new edges $vt$ and $ut$;
        \item $\mathcal{I}'_{6} = \mathcal{I}'_{211}$ is the instance obtained by deleting $Z$ from $\mathcal{I}$ and introducing two vertices $t_{1}$ and $t_{2}$ such that $t_{1}$ is only adjacent to $v$ via a marked edge $vt_{1}$ and $t_{2}$ is only adjacent to $u$ via a marked edge $ut_{2}$.
    \end{enumerate}
\end{lemma}

\begin{proof}
    In the $i$th condition, for each subset $Q' \subseteq Q$, let $S_{i}(Q')$ be the minimum solutions to the instance $\mathcal{I}(Q')$, respectively.
    For every subset $Q' \subseteq Q$, define $f_{i}(Q') = |S_{i}(Q')|$.
    Moreover, assume $Z'_{i}$ ($i \in [6]$) is the vertex subset satisfying that $\mathcal{I}'_{i}$ is the instance after replacing $Z$ with the subgraph induced by $Z'_{i}$.
    For every subset $Q' \subseteq Q$, let $S''_{i}(Q')$ be a minimum solution to the instance induced by $Z'_{i} \cup Q'$ and define $g_{i}(Q') \coloneq |S''_{i}(Q')|$.
    One can easily check that for every subset $Q' \subseteq Q$, it holds 
    \begin{equation} \label{EQ: I - I'}
        f_{i}(Q') - g_{i}(Q') = f_{i}(\varnothing) - g_{i}(\varnothing) = s(\mathcal{I}(\varnothing)).
    \end{equation}
    
    Let $S$ be a minimum solution to instance $\mathcal{I}$, and define $Q_{S} \coloneq Q \setminus S$.
    Then we have $|S \cap (Z \cup Q_{S})| \geq f_{i}(Q_{S})$.
    Observe that $S \setminus (Z \cup Q_{S}) \cup S_{i}(Q_{S})$ is also a solution to $\mathcal{I}$ since $Q$ is a separator.
    Additionally, 
    \begin{equation*}
        |S \setminus (Z \cup Q_{S}) \cup S_{i}(Q_{S})| = |S \setminus (Z \cup Q_{S})| + |S_{i}(Q_{S})| \leq |S| - f_{i}(Q_{S}) + f_{i}(Q_{S}) = |S|.
    \end{equation*}
    Hence, $S \setminus (Z \cup Q_{S}) \cup S_{i}(Q_{S})$ is a minimum solution to $\mathcal{I}$, and it holds
     \begin{equation} \label{EQ: I}
        |S \setminus (Z \cup Q_{S}) \cup S_{i}(Q_{S})| = |S| - f_{i}(Q_{S}) + f_{i}(Q_{S}) = |S|.
    \end{equation}
    
    Similarly, it is clear that $S'_{i} = S \setminus (Z \cup Q_{S}) \cup S_{i}(Q''_{S})$ is also a solution to instance $\mathcal{I}'_{i}$.
    It follows that
    \begin{equation} \label{EQ: I'}
        |S'_{i}| = |S \setminus (Z \cup Q_{S}) \cup S''_{i}(Q_{S})| = |S| - f_{i}(Q_{S}) + g_{i}(Q_{S}).
    \end{equation}
    Combine \eqref{EQ: I - I'}, \eqref{EQ: I}, and \eqref{EQ: I'}, we conclude that
    \begin{equation*}
        s(\mathcal{I}) - s(\mathcal{I}'_{i}) = |S| - |S'_{i}| = f_{i}(Q_{S}) - g_{i}(Q_{S}) = s(\mathcal{I}(\varnothing)).
    \end{equation*}
    and then the desired result follows.
\end{proof}

\begin{figure}[t!]
    \centering
    \subfloat[The instance $\mathcal{I}$.]{
        \begin{tikzpicture}
            [
            scale = 1,
            nonterminal/.style={draw, shape = circle, fill = white, inner sep = 2pt},
            terminal/.style={draw, shape = circle, fill = black, inner sep = 2pt},
            ]
            \node (Z) at(0, -1.2) {$Z$};
            \node[inner sep = 0pt] (x1) at(-1, 1) {};
            \node[inner sep = 0pt] (x2) at(-0.33, 1) {};
            \node[inner sep = 0pt] (x3) at(0.33, 1) {};
            \node[inner sep = 0pt] (x4) at(1, 1) {};

            \node[inner sep = 0pt] (y1) at(-1, -1) {};
            \node[inner sep = 0pt] (y2) at(-0.33, -1) {};
            \node[inner sep = 0pt] (y3) at(0.33, -1) {};
            \node[inner sep = 0pt] (y4) at(1, -1) {};
            \node[nonterminal, label={[xshift = -4mm, yshift = -3mm]$u$}] (u) at(-0.5, 0) {};
            \node[nonterminal, label={[xshift = 4mm, yshift = -3mm]$v$}] (v) at(0.5, 0) {};

            \draw
            (u) -- (v)
            (u) -- (x1) (u) -- (x2) (u) -- (x3)
            (u) -- (y1) (u) -- (y2)
            (v) -- (x2) (v) -- (x3) (v) -- (x4)
            (v) -- (y2) (v) -- (y4);
            \draw[red, ultra thick]
            (v) -- (y3);
            \draw[dashed] (0, 1) circle [x radius = 1.5, y radius = 0.6];
            \draw[dashed] (0, -1) circle [x radius = 1.5, y radius = 0.6];
        \end{tikzpicture}
    }

    \subfloat[The instance $\mathcal{I}'_{1}$.]{
        \begin{tikzpicture}
            [
            scale = 1,
            nonterminal/.style={draw, shape = circle, fill = white, inner sep = 2pt},
            terminal/.style={draw, shape = circle, fill = black, inner sep = 2pt},
            ]
            \node[inner sep = 0pt] (x1) at(-1, 1) {};
            \node[inner sep = 0pt] (x2) at(-0.33, 1) {};
            \node[inner sep = 0pt] (x3) at(0.33, 1) {};
            \node[inner sep = 0pt] (x4) at(1, 1) {};

            \node[nonterminal, label={[xshift = -4mm, yshift = -3mm]$u$}] (u) at(-0.5, 0) {};
            \node[nonterminal, label={[xshift = 4mm, yshift = -3mm]$v$}] (v) at(0.5, 0) {};

            \draw
            (u) -- (v)
            (u) -- (x1) (u) -- (x2) (u) -- (x3)
            (v) -- (x2) (v) -- (x3) (v) -- (x4);
            \draw[dashed] (0, 1) circle [x radius = 1.5, y radius = 0.6];
            \draw[dashed, white] (0, -1) circle [x radius = 1.5, y radius = 0.6];
        \end{tikzpicture}
    }
    \subfloat[The instance $\mathcal{I}'_{2}$.]{
        \begin{tikzpicture}
            [
            scale = 1,
            nonterminal/.style={draw, shape = circle, fill = white, inner sep = 2pt},
            terminal/.style={draw, shape = circle, fill = black, inner sep = 2pt},
            ]
            \node[inner sep = 0pt] (x1) at(-1, 1) {};
            \node[inner sep = 0pt] (x2) at(-0.33, 1) {};
            \node[inner sep = 0pt] (x3) at(0.33, 1) {};
            \node[inner sep = 0pt] (x4) at(1, 1) {};

            \node[terminal, label={[yshift = -7mm]$t$}] (t) at(0, -1) {};
            \node[nonterminal, label={[xshift = -4mm, yshift = -3mm]$u$}] (u) at(-0.5, 0) {};
            \node[nonterminal, label={[xshift = 4mm, yshift = -3mm]$v$}] (v) at(0.5, 0) {};

            \draw
            (u) -- (v)
            (u) -- (x1) (u) -- (x2) (u) -- (x3)
            (v) -- (x2) (v) -- (x3) (v) -- (x4)
            (t) -- (u) (t) -- (v);
            \draw[dashed] (0, 1) circle [x radius = 1.5, y radius = 0.6];
            \draw[dashed, white] (0, -1) circle [x radius = 1.5, y radius = 0.6];
        \end{tikzpicture}
    }
    \subfloat[The instance $\mathcal{I}'_{3}$.]{
        \begin{tikzpicture}
            [
            scale = 1,
            nonterminal/.style={draw, shape = circle, fill = white, inner sep = 2pt},
            terminal/.style={draw, shape = circle, fill = black, inner sep = 2pt},
            ]
            \node[inner sep = 0pt] (x1) at(-1, 1) {};
            \node[inner sep = 0pt] (x2) at(-0.33, 1) {};
            \node[inner sep = 0pt] (x3) at(0.33, 1) {};
            \node[inner sep = 0pt] (x4) at(1, 1) {};

            \node[terminal, label={[yshift = -7mm]$t$}] (t) at(0, -1) {};
            \node[nonterminal, label={[xshift = -4mm, yshift = -3mm]$u$}] (u) at(-0.5, 0) {};
            \node[nonterminal, label={[xshift = 4mm, yshift = -3mm]$v$}] (v) at(0.5, 0) {};

            \draw
            (u) -- (v)
            (u) -- (x1) (u) -- (x2) (u) -- (x3)
            (v) -- (x2) (v) -- (x3) (v) -- (x4)
            (t) -- (v);
            \draw[dashed] (0, 1) circle [x radius = 1.5, y radius = 0.6];
            \draw[dashed, white] (0, -1) circle [x radius = 1.5, y radius = 0.6];
            \draw[red, ultra thick]
            (u) -- (t);
        \end{tikzpicture}
    }

    \subfloat[The instance $\mathcal{I}'_{4}$.]{
        \begin{tikzpicture}
            [
            scale = 1,
            nonterminal/.style={draw, shape = circle, fill = white, inner sep = 2pt},
            terminal/.style={draw, shape = circle, fill = black, inner sep = 2pt},
            ]
            \node[inner sep = 0pt] (x1) at(-1, 1) {};
            \node[inner sep = 0pt] (x2) at(-0.33, 1) {};
            \node[inner sep = 0pt] (x3) at(0.33, 1) {};
            \node[inner sep = 0pt] (x4) at(1, 1) {};

            \node[terminal, label={[yshift = -7mm]$t$}] (t) at(0, -1) {};
            \node[nonterminal, label={[xshift = -4mm, yshift = -3mm]$u$}] (u) at(-0.5, 0) {};
            \node[nonterminal, label={[xshift = 4mm, yshift = -3mm]$v$}] (v) at(0.5, 0) {};

            \draw
            (u) -- (v)
            (u) -- (x1) (u) -- (x2) (u) -- (x3)
            (v) -- (x2) (v) -- (x3) (v) -- (x4)
            (t) -- (u);
            \draw[dashed] (0, 1) circle [x radius = 1.5, y radius = 0.6];
            \draw[dashed, white] (0, -1) circle [x radius = 1.5, y radius = 0.6];
            \draw[red, ultra thick]
            (v) -- (t);
        \end{tikzpicture}
    }
    \subfloat[The instance $\mathcal{I}'_{5}$.]{
        \begin{tikzpicture}
            [
            scale = 1,
            nonterminal/.style={draw, shape = circle, fill = white, inner sep = 2pt},
            terminal/.style={draw, shape = circle, fill = black, inner sep = 2pt},
            ]
            \node[inner sep = 0pt] (x1) at(-1, 1) {};
            \node[inner sep = 0pt] (x2) at(-0.33, 1) {};
            \node[inner sep = 0pt] (x3) at(0.33, 1) {};
            \node[inner sep = 0pt] (x4) at(1, 1) {};

            \node[terminal, label={[yshift = -7mm]$t$}] (t) at(0, -1) {};
            \node[nonterminal, label={[xshift = -4mm, yshift = -3mm]$u$}] (u) at(-0.5, 0) {};
            \node[nonterminal, label={[xshift = 4mm, yshift = -3mm]$v$}] (v) at(0.5, 0) {};

            \draw
            (u) -- (v)
            (u) -- (x1) (u) -- (x2) (u) -- (x3)
            (v) -- (x2) (v) -- (x3) (v) -- (x4);
            \draw[dashed] (0, 1) circle [x radius = 1.5, y radius = 0.6];
            \draw[dashed, white] (0, -1) circle [x radius = 1.5, y radius = 0.6];
            \draw[red, ultra thick]
            (u) -- (t) (v) -- (t);
        \end{tikzpicture}
    }
    \subfloat[The instance $\mathcal{I}'_{6}$.]{
        \begin{tikzpicture}
            [
            scale = 1,
            nonterminal/.style={draw, shape = circle, fill = white, inner sep = 2pt},
            terminal/.style={draw, shape = circle, fill = black, inner sep = 2pt},
            ]
            \node[inner sep = 0pt] (x1) at(-1, 1) {};
            \node[inner sep = 0pt] (x2) at(-0.33, 1) {};
            \node[inner sep = 0pt] (x3) at(0.33, 1) {};
            \node[inner sep = 0pt] (x4) at(1, 1) {};

            \node[terminal, label={[yshift = -7mm]$t_{1}$}] (t1) at(-0.5, -1) {};
            \node[terminal, label={[yshift = -7mm]$t_{2}$}] (t2) at(0.5, -1) {};
            \node[nonterminal, label={[xshift = -4mm, yshift = -3mm]$u$}] (u) at(-0.5, 0) {};
            \node[nonterminal, label={[xshift = 4mm, yshift = -3mm]$v$}] (v) at(0.5, 0) {};

            \draw
            (u) -- (v)
            (u) -- (x1) (u) -- (x2) (u) -- (x3)
            (v) -- (x2) (v) -- (x3) (v) -- (x4);
            \draw[dashed] (0, 1) circle [x radius = 1.5, y radius = 0.6];
            \draw[dashed, white] (0, -1) circle [x radius = 1.5, y radius = 0.6];
            \draw[red, ultra thick]
            (u) -- (t1) (v) -- (t2);
        \end{tikzpicture}
    }
    \caption{The instances $\mathcal{I}$ and $\mathcal{I'}_{i}$ for $i \in [6]$.
    The black vertices denote terminals, and the white vertices denote non-terminals.
    $Q = \Set{u, v}$ is a separator of size two in $G$.
    In (a), the connected component $Z$ in $G - Q$ is in the dashed ellipse.}
    \label{FIG: SEPQRATOR2}
\end{figure}
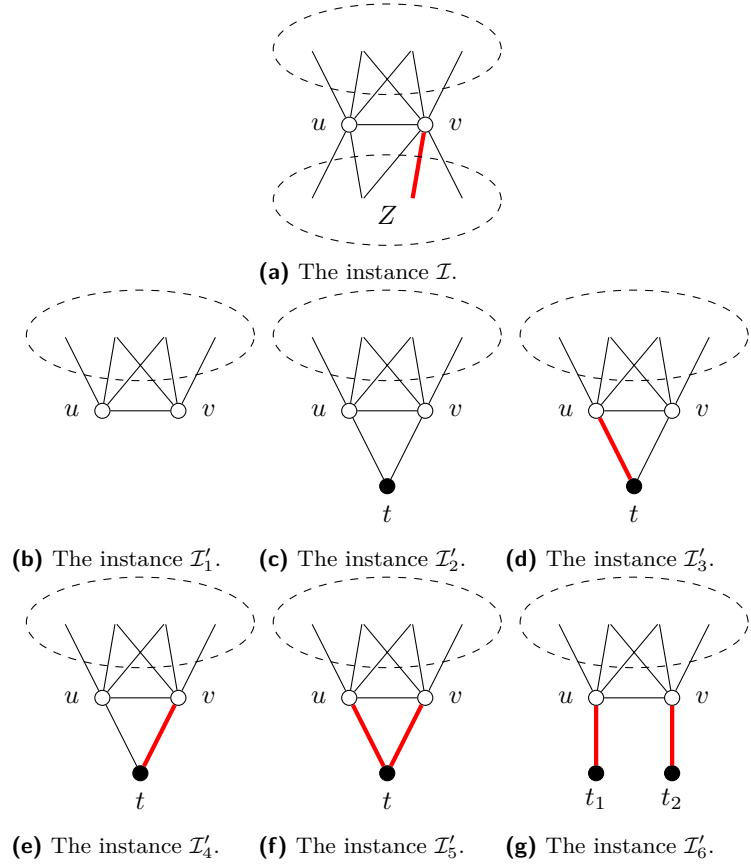

Based on Lemmas~\ref{STRUCTURE: SEPQRATOR1} and~\ref{STRUCTURE: SEPQRATOR2}, we can utilize the following reduction step to further simplify the instance.

\begin{chordalstep} \label{CHORDAL REDUCTION: SMALL SEPARATOR}
    If there is a separator $Q$ of size $1$ or $2$ with a connected component $Z$ of $G - Q$ such that the size of the minimum solution to the instance induced by $Z$ is at most $5$, we replace $Z$ with at most two vertices according to Lemmas~\ref{STRUCTURE: SEPQRATOR1} and~\ref{STRUCTURE: SEPQRATOR2}.
\end{chordalstep}

We require the size of the minimum solution to the instance induced by $Z$ to be bounded by a constant so that this step can be executed in polynomial time.

\begin{definition} [Thin Instances]
    An instance of \textsc{SFVS-C} is called \emph{thin} if none of the first eight steps in this subsection can be applied on the instance.
\end{definition}

\begin{lemma} \label{STRUCTURE: THIN}
    The following properties hold for a thin instance of \textsc{SFVS-C}.
    \begin{enumerate}[\textup{(}a\textup{)}]
        \item There is no isolated vertex in $G$ and $M$ is a matching in the graph; \label{ITEM: MATCHING}
        \item the size of every simplicial clique is at least $4$; \label{ITEM: CLIQUE SIZE4}
        \item every simplicial clique contains only one simplicial vertex, which is also the unique terminal in the simplicial clique; and \label{ITEM: SIMPLIEIAL = TERMINAL}
        \item for every separator $Q$ of size $1$ or $2$ and each connected component $Z$ in $G - Q$, the minimum solution to the instance induced by $Z$ is no less than $5$. \label{ITEM： SMALL SEPARATOR}
    \end{enumerate}
\end{lemma}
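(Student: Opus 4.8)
The plan is to verify the four properties one at a time, arguing each by contradiction: if a property failed, I would exhibit one of the first eight steps of Subsection~\ref{SIMPLIFICATION} that is then applicable, contradicting thinness. Property~(a) is immediate --- an isolated vertex lies in no $T$-triangle and on no marked edge, so Step~\ref{CHORDAL REDUCTION: 0TERMINAL} would delete it, and a vertex incident to two marked edges would trigger Step~\ref{CHORDAL BRANCH: DEGREE2 MARED}; hence a thin instance has no isolated vertex and $M$ is a matching. Before (b) I would record the auxiliary fact that every vertex of a thin instance has degree at least $3$: a vertex of degree $0$, or of degree $1$ with no incident marked edge, lies in no triangle and on no marked edge and would be removed by Step~\ref{CHORDAL REDUCTION: 0TERMINAL}; a vertex of degree at most $2$ incident to a marked edge is handled by Step~\ref{CHORDAL REDUCTION: DEGREE2 1MARKED}; and a degree-$2$ vertex on no marked edge must, since Step~\ref{CHORDAL REDUCTION: 0TERMINAL} does not apply, lie in a $T$-triangle, so its two neighbours are adjacent and Step~\ref{CHORDAL REDUCTION: DEGREE2 0MARKED} applies. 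Property~(b) then follows at once, since a simplicial clique equals $N_{G}[v]$ for the simplicial vertex $v$ it contains (the closed neighbourhood of a simplicial vertex is its unique maximal clique), and so has size $\deg_{G}(v)+1 \ge 4$.

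For (c) I would fix a simplicial clique $Q = N_{G}[v]$ with $v$ simplicial and proceed in three steps. First, $Q$ contains a terminal: since Step~\ref{CHORDAL REDUCTION: 0TERMINAL} does not apply, either $v$ lies in a $T$-triangle $vab$ --- so that $\{v,a,b\} \subseteq Q$ contains a terminal --- or $v$ lies on a marked edge $vw$ with $w \in Q$; in the latter case $N_{G}[v] = Q \subseteq N_{G}[w]$ because $Q$ is a clique, so if neither $v$ nor $w$ were a terminal then Step~\ref{CHORDAL REDUCTION: DOMINATING TERMINALS} would apply. Second, $Q$ contains exactly one terminal: if it held two distinct terminals, the simplicial vertex $v$ would be adjacent to one of them, and since $|Q| \ge 4$ by (b), Step~\ref{CHORDAL BRANCH: DEGREE3 SIMPLICIAL} would apply. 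Third, the unique terminal $t$ of $Q$ is its only simplicial vertex: any simplicial $v' \in Q$ also satisfies $Q = N_{G}[v']$, so if $v' \ne t$ then $v'$ is a non-terminal adjacent to the terminal $t$ and Step~\ref{CHORDAL BRANCH: DEGREE3 SIMPLICIAL} would again apply; hence every simplicial vertex of $Q$ equals $t$, and as $Q$ contains at least one simplicial vertex by definition, $t$ is the unique simplicial vertex of $Q$ and also its unique terminal.

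Finally, (d) is short: for a separator $Q$ with $|Q| \in \{1,2\}$ and a connected component $Z$ of $G - Q$, the inapplicability of Step~\ref{CHORDAL REDUCTION: SMALL SEPARATOR} forces the minimum solution of the instance induced by $Z$ to have size at least $6$, and since such a solution is a subset of $V(Z)$ we get $|V(Z)| \ge 6 \ge 5$. I expect (c) to be the only part demanding care: the point is to check that the asymmetric preconditions of Step~\ref{CHORDAL REDUCTION: DOMINATING TERMINALS} (that $v$ is a non-terminal or $u$ a terminal, on top of the domination $N_{G}[v] \subseteq N_{G}[u]$) and of Step~\ref{CHORDAL BRANCH: DEGREE3 SIMPLICIAL} (a simplicial vertex, adjacent to a terminal, inside a simplicial clique of size at least $4$) are genuinely met in each case invoked above; the remaining properties are routine bookkeeping on low-degree and simplicial vertices.
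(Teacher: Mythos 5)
Your proofs of parts (a)--(c) are correct and follow essentially the same route as the paper: each property is enforced by showing that its failure would make one of Steps~\ref{CHORDAL REDUCTION: 0TERMINAL}--\ref{CHORDAL BRANCH: DEGREE3 SIMPLICIAL} applicable, and your handling of the asymmetric preconditions of Steps~\ref{CHORDAL REDUCTION: DOMINATING TERMINALS} and~\ref{CHORDAL BRANCH: DEGREE3 SIMPLICIAL} in (c) is sound (you are in fact somewhat more explicit than the paper about why the simplicial clique contains a terminal at all). The one point of divergence is (d). Your one-line argument reads the precondition of Step~\ref{CHORDAL REDUCTION: SMALL SEPARATOR} literally: no separator $Q$ of size at most $2$ has a component $Z$ whose induced instance admits a solution of size at most $5$, hence $|Z| \geq 6$. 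But the replacement gadgets that Step~\ref{CHORDAL REDUCTION: SMALL SEPARATOR} itself produces (a single pendant terminal, or two pendant terminals) satisfy that literal precondition --- the instance induced by such a $Z$ has minimum solution size $0$ --- so under your reading the step would re-trigger indefinitely and no thin instance could exist. The operative semantics, which the paper's proof of (d) adopts, is that the step counts as applicable only when the replacement actually changes the instance; a component of size less than $5$ can then survive Step~\ref{CHORDAL REDUCTION: SMALL SEPARATOR} provided it is already in gadget form, and the paper excludes this residual case separately by observing that such a component would be a single terminal of degree at most $2$, giving a simplicial clique of size $3$ and contradicting item~(\ref{ITEM: CLIQUE SIZE4}). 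You should either add that extra case analysis or justify explicitly why your reading of the step's applicability is the intended one.
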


\begin{proof}   
    If a non-terminal is not adjacent to any terminal or is not incident to any marked edge, it should be deleted by Step~\ref{CHORDAL REDUCTION: 0TERMINAL}.
    Besides, if there exists a vertex adjacent to at least two marked edges, Step~\ref{CHORDAL REDUCTION: DEGREE2 0MARKED} would be applied.
    Thus we derive that Item~(\ref{ITEM: MATCHING}) is correct.

    Now we prove the correctness of Item~(\ref{ITEM: CLIQUE SIZE4}).
    For a simplicial clique $Q$, let $v$ be a simplicial vertex in $Q$.
    Because of Steps~\ref{CHORDAL REDUCTION: 0TERMINAL},~\ref{CHORDAL REDUCTION: BRIDGE}, and~\ref{CHORDAL REDUCTION: DEGREE2 0MARKED}, we have the degree $\deg_{G}(v) \geq 2$.
    Suppose that $\deg_{G}(v) = 2$.
    If $v$ is not adjacent to a marked edge, Step~\ref{CHORDAL REDUCTION: DEGREE2 0MARKED} can be applied.
    If $v$ is adjacent to a marked edge, Step~\ref{CHORDAL REDUCTION: DEGREE2 1MARKED} can be applied.
    If $v$ is adjacent to two marked edges, Step~\ref{CHORDAL BRANCH: DEGREE2 MARED} can be applied.
    This leads to a contradiction.
    Thus, we have $\deg_{G}(v) \geq 3$, which implies that $|Q| \geq 4$.
    
    Next, we consider Item~(\ref{ITEM: SIMPLIEIAL = TERMINAL}).
    Let $u$ be a simplicial vertex in a simplicial clique $Q$.
    If $u$ is a non-terminal incident to a marked edge, Step~\ref{CHORDAL REDUCTION: DOMINATING TERMINALS} is applicable.
    In addition, we have that $Q \cap T \neq \varnothing$ since Step~\ref{CHORDAL REDUCTION: 0TERMINAL} cannot be applied.  
    We have shown that $|Q| \geq 4$.
    If there is a terminal $t \neq u$, Step~\ref{CHORDAL BRANCH: DEGREE3 SIMPLICIAL} can be applied.
    It follows that $u$ must be a terminal.
    In addition, if another simplicial vertex $u' \neq u$ exists, Step~\ref{CHORDAL BRANCH: DEGREE3 SIMPLICIAL} can also be applied.
    Consequently, $u$ is the unique simplicial vertex in $Q$, and $u$ is also a terminal.
    Furthermore, $u$ is not adjacent to other terminals; otherwise, Step~\ref{CHORDAL CORRECTNESS: DEGREE3 SIMPLICIAL} is applicable.
    This leads to the correctness of Item~(\ref{ITEM: SIMPLIEIAL = TERMINAL}).

    We finally show the correctness of Item~(\ref{ITEM： SMALL SEPARATOR}).
    Let $Q$ be a separator of size $1$ or $2$.
    Assume to the contrary that there is a connected component $Z$ in $G - Q$ such that the size of the minimal solution to the instance induced by $Z$ is less than $5$.
    Based on the replacing operation in Step~\ref{CHORDAL REDUCTION: SMALL SEPARATOR}, we know that $C$ is a singleton set.
    Furthermore, we know that the unique vertex in $Z$ is a terminal of degree at most $2$, which implies that $Z \cup Q$ is a simplicial clique of size only three, a contradiction.
\end{proof}

\begin{lemma} \label{THINvsGOOD}
    For a thin instance $\mathcal{I}$, if every terminal in $G$ is simplicial, we can polynomially reduce $\mathcal{I}$ to an equivalent good instance $\mathcal{I}'$ of \textsc{SFVS-S} without increasing $k$.
\end{lemma}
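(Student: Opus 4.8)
\emph{Approach.} The plan is to turn $\mathcal{I}$ into a good \textsc{SFVS-S} instance by three polynomial-time transformations, none of which increases $k$, and to argue that each one preserves the answer.

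\emph{Step 1 (the terminals form an independent set).} First I would show that no two terminals are adjacent. If $t_1 t_2 \in E$ with $t_1, t_2 \in T$, then, since every terminal is simplicial, $N_G[t_1]$ is a maximal clique, hence a simplicial clique; but it contains $t_2$, a second simplicial vertex, contradicting the fact that a simplicial clique of a thin instance contains a unique simplicial vertex (Lemma~\ref{STRUCTURE: THIN}). Hence $T$ is independent and, since $M$ is a matching in a thin instance, every marked edge already has a non-terminal endpoint.

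\emph{Step 2 (complete the non-terminals into a clique).} Next I would add to $E$ every missing edge between two non-terminals, leaving $T$, $M$, and $k$ untouched. The key claim — and the only place the simpliciality hypothesis is genuinely needed — is that this creates no new $T$-triangle. If $u,v$ are non-adjacent non-terminals, they have no common terminal neighbour: a common terminal neighbour $w$ would be simplicial, so $N_G[w]$ would be a clique containing both $u$ and $v$, forcing $uv \in E$, a contradiction. Therefore every new triangle $uvx$ satisfies $x \notin T$, so the family of $T$-triangles is unchanged, and by Lemma~\ref{T TRIANGLE} the collection of vertex sets that destroy all $T$-cycles and cover $M$ is unchanged as well. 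Now $(T, V \setminus T)$ is a split partition, so we are looking at a generalized \textsc{SFVS-S} instance that satisfies property~(i) of Definition~\ref{DEF: GOOD}.

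\emph{Step 3 (repair the marked edges, then apply DM Reduction).} For each marked edge $uv$ with $u,v \notin T$ — these form a matching, so they do not interfere — I would unmark $uv$ and introduce a fresh degree-$2$ terminal $t_{uv}$ adjacent only to $u$ and $v$ via unmarked edges. This is exactly the reverse of the observation motivating the generalized problem; a short exchange argument (if a solution uses $t_{uv}$, replace it by $u$) shows the answer is preserved, while $(T', V' \setminus T')$ remains a split partition and now every marked edge joins a terminal and a non-terminal, giving properties~(i) and~(ii). Finally I would apply the DM Reduction as long as it applies: this is safe by Lemma~\ref{SAFENESS: DM DECOMPOSITION}, it only deletes vertices (so the split partition and property~(ii) persist), it only decreases $k$, and once it can no longer be applied property~(iii) holds. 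The resulting instance $\mathcal{I}'$ is thus a good instance equivalent to $\mathcal{I}$, obtained without increasing $k$.

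\emph{Main obstacle.} The substantive point is the claim in Step 2 that completing the non-terminals into a clique introduces no new $T$-triangle; everything else is the bookkeeping of verifying that the split structure, the marked-edge family, and the optimal solution size behave as stated, and that each transformation runs in polynomial time ($O(n^2)$ new edges, one new vertex per non-terminal marked edge, and one maximum-matching computation for the DM Reduction).
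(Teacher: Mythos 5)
Your proposal is correct and follows essentially the same route as the paper's proof: use simpliciality together with Lemma~\ref{STRUCTURE: THIN}(c) to get that $T$ is independent, complete the non-terminals into a clique after observing that non-adjacent non-terminals share no terminal neighbour (so no new $T$-triangle appears), and replace each non-terminal--non-terminal marked edge by a fresh degree-two terminal. The only cosmetic difference is that the paper establishes property~(iii) by noting the auxiliary subgraph of the constructed instance is empty (every terminal has degree at least two and $M'$ is a matching), whereas you simply run the DM Reduction to exhaustion; both are fine.
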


\begin{proof}
    We construct $\mathcal{I}'$ from $\mathcal{I} = (G, T, M, k)$ by doing the following.
    For each marked edge $ab$ with two non-terminals, add a new terminal $t$ only adjacent to $a$ and $b$ and unmark the edge $ab$.
    Next, make the subgraph induced by non-terminals a clique by adding an edge to every pair of disconnected non-terminals.
    Eventually, we have the instance $\mathcal{I}' = (G' = (V', E'), T', M', k')$, where $k' = k$.
    
    Observe that two disconnected non-terminals in $G$ are not adjacent to the same terminal since every terminal is simplicial. 
    Thus, the two instances, $\mathcal{I}$ and $\mathcal{I}'$, are equivalent since every added edge is not contained in any $T'$-triangle in $G'$.
    It is easy to see the parameter $k$ does not change.
    
    Next, we show that $\mathcal{I}'$ is a good instance of \textsc{SFVS-S}.
    Any terminal $t_{1}$ in the original graph $G$ is simplicial, and $t_{1}$ is not adjacent to other terminals by Lemma~\ref{STRUCTURE: THIN}(\ref{ITEM: SIMPLIEIAL = TERMINAL}).
    For any newly added terminal $t_{2}$, the degree of $t_{2}$ is two, and its two neighbours are both non-terminals by the construction.
    Hence, the terminal set $G'[T']$ is an independent set.
    Thus, the resulting graph is a split graph, and $(T', V' \setminus T')$ forms a split partition of $G'$.
    Besides, we have handled marked edges with two endpoints being two non-terminals in constructing $\mathcal{I}'$.
    Notice that the degree of every terminal is at least two, and $M'$ is a matching, which leads to the auxiliary subgraph of $G'$ being an empty graph.
    Therefore, the DM reduction cannot be applied on $\mathcal{I}'$, and we obtain that $\mathcal{I}'$ is a good instance.
\end{proof}

The last step in this subsection is based on Lemma~\ref{THINvsGOOD}.
A terminal is called an \emph{inner terminal} if it is not simplicial.

\begin{chordalstep} \label{CHORDAL REDUCTION: THIN TO GOOD}
    If there is no inner terminal (i.e., all terminals are simplicial), reduce the instance to a good instance of \textsc{SFVS-S} by Lemma~\ref{THINvsGOOD}, and then call \texttt{GoodAlg} to solve it.
\end{chordalstep}

\begin{lemma} \label{TIME: THIN + GOOD}
    If the input instance has no inner terminal, the first nine steps of \rm\texttt{WholeAlg} will solve the instance in time $\mathcal{O}^{*}(\alpha^{k} + 1.6181^{k})$ if \textsc{SFVS-S} can be solved in time $\mathcal{O}^{*}(\alpha^{k})$.
\end{lemma}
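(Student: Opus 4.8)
The plan is, first, to observe that on an instance with no inner terminal \texttt{WholeAlg} only ever executes Steps~\ref{CHORDAL REDUCTION: 0TERMINAL}--\ref{CHORDAL REDUCTION: THIN TO GOOD}; in particular Part~\Rome{2} is never reached, so the first nine steps indeed decide the instance. Each of Steps~\ref{CHORDAL REDUCTION: 0TERMINAL}--\ref{CHORDAL REDUCTION: SMALL SEPARATOR} only deletes vertices and edges or, in Step~\ref{CHORDAL REDUCTION: SMALL SEPARATOR}, replaces a connected component $Z$ behind a size-$\le 2$ separator $Q$ by at most two terminals of degree $\le 2$ attached to the clique $Q$. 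Deleting vertices or edges cannot turn a simplicial terminal into a non-simplicial one; the newly added small terminals are simplicial; and the only terminals whose neighbourhood Step~\ref{CHORDAL REDUCTION: SMALL SEPARATOR} can disturb lie in $Q$ and end up with degree $\le 2$, hence get removed by Steps~\ref{CHORDAL REDUCTION: 0TERMINAL},~\ref{CHORDAL REDUCTION: BRIDGE} or~\ref{CHORDAL REDUCTION: DEGREE2 1MARKED} before the instance can become thin. Thus any thin instance reached from such an input still has no inner terminal, Step~\ref{CHORDAL REDUCTION: THIN TO GOOD} fires there, and the Part~\Rome{2} rules are never applied; I would isolate this as a short preliminary claim.

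Next I would classify the rules actually used. Among Steps~\ref{CHORDAL REDUCTION: 0TERMINAL}--\ref{CHORDAL REDUCTION: THIN TO GOOD} only Step~\ref{CHORDAL BRANCH: DEGREE2 MARED} and Step~\ref{CHORDAL BRANCH: DEGREE3 SIMPLICIAL} branch, and by Lemmas~\ref{SAFENESS: DEGREE2 MARKED} and~\ref{CHORDAL CORRECTNESS: DEGREE3 SIMPLICIAL} each is safe and branches with a vector not worse than $(1,2)$, whose branching factor is the golden ratio $\varphi=(1+\sqrt{5})/2<1.619$. Steps~\ref{CHORDAL REDUCTION: 0TERMINAL},~\ref{CHORDAL REDUCTION: BRIDGE},~\ref{CHORDAL REDUCTION: DEGREE2 0MARKED},~\ref{CHORDAL REDUCTION: DEGREE2 1MARKED},~\ref{CHORDAL REDUCTION: DOMINATING TERMINALS} and~\ref{CHORDAL REDUCTION: SMALL SEPARATOR} are polynomial-time reduction rules (safe by the results above) that never increase $k$ and strictly shrink the instance, so only polynomially many of them are applied between two consecutive branchings; and Step~\ref{CHORDAL REDUCTION: THIN TO GOOD}, by Lemma~\ref{THINvsGOOD}, turns the current instance (with parameter $k'\le k$) into an equivalent good instance of \textsc{SFVS-S}, which, since \textsc{SFVS-S} is solvable in $\mathcal{O}^{*}(\alpha^{k})$, is solvable in time $\mathcal{O}^{*}(\alpha^{k'})\le\mathcal{O}^{*}(\alpha^{k})$ by Lemma~\ref{ALG: GOOD = SFVS-S}.

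For the time bound I would set up the recurrence $R(k)\le R(k-1)+R(k-2)+n^{O(1)}$ at a node where a branching step applies (using the two $(1,2)$-vectors, and that both children have fewer vertices) and $R(k)\le n^{O(1)}+\mathcal{O}^{*}(\alpha^{k})$ at a node where no branching step applies (a chain of reductions followed by Step~\ref{CHORDAL REDUCTION: THIN TO GOOD}). We may assume $\alpha\ge 1.619$, since if the assumed \textsc{SFVS-S} algorithm runs in $\mathcal{O}^{*}(\alpha^{k})$ with $\alpha<1.619$ it also runs in $\mathcal{O}^{*}(1.619^{k})$ and the target bound $\mathcal{O}^{*}(\alpha^{k}+1.619^{k})$ is then just $\mathcal{O}^{*}(1.619^{k})$. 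One solves the recurrence with the additive ansatz $R(k)=\mathcal{O}^{*}(\alpha^{k}+1.619^{k})$, verifying the branching case term by term from $\alpha^{2}\ge\alpha+1$ (valid as $\alpha\ge 1.619>\varphi$) and $1.619^{2}>1.619+1$, the strict slack in each absorbing the polynomial overhead; the Step~\ref{CHORDAL REDUCTION: THIN TO GOOD} case is immediate. Equivalently, the branching tree has $\mathcal{O}^{*}(1.619^{k})$ nodes, and because $\alpha^{-1}+\alpha^{-2}\le 1$ the sum over its leaves of the Step~\ref{CHORDAL REDUCTION: THIN TO GOOD} costs telescopes to $\mathcal{O}^{*}(\alpha^{k})$. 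Either way this yields $\mathcal{O}^{*}(\alpha^{k}+1.619^{k})$.

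I expect the first step to be the main obstacle: carefully checking that no sequence of the simplification rules---especially the separator reduction Step~\ref{CHORDAL REDUCTION: SMALL SEPARATOR}, which simultaneously deletes a component and glues in fresh terminals adjacent to a separator---can leave \texttt{WholeAlg} in a thin instance that still contains an inner terminal, so that Part~\Rome{2} is genuinely not needed. The running-time analysis is then routine Fibonacci-type bookkeeping; its only mild subtlety is using the additive ansatz $\alpha^{k}+1.619^{k}$ (equivalently the inequality $\alpha^{-1}+\alpha^{-2}\le 1$) so that the cost of the good-instance solves at the leaves is added to, rather than multiplied by, the $\mathcal{O}^{*}(1.619^{k})$ size of the branching tree.
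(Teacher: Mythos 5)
The paper states Lemma~\ref{TIME: THIN + GOOD} without proof, so there is no ``official'' argument to compare against; your proposal supplies a correct one, and its running-time analysis coincides with the recurrence the paper later uses in Theorem~\ref{SFVS CHORDAL} (polynomial work plus a call of cost $\mathcal{O}^{*}(\alpha^{k}+1.619^{k})$ at each node, combined with branching vectors $(1,2)$). You correctly identify the only non-routine point: that the simplification steps cannot create an inner terminal, so that Step~\ref{CHORDAL REDUCTION: THIN TO GOOD} is guaranteed to fire and Part~\Rome{2} is never entered. Your handling of that point is sound under the natural reading of Step~\ref{CHORDAL REDUCTION: SMALL SEPARATOR}, namely that $Q=N_{G}(Z)$ (equivalently, $Q$ is minimal with respect to $Z$): then any simplicial terminal $v\in Q$ with a neighbour in $Z$ must have $N_{G}(v)\subseteq Z\cup Q$ (a $Z$-neighbour and a neighbour outside $Z\cup Q$ cannot be adjacent across the separator), so after the replacement $v$ has degree at most~$2$ and is eliminated by the earlier reductions, exactly as you say. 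Two cosmetic caveats: your blanket claim that ``deleting edges cannot destroy simpliciality'' is false in general and should be restricted to the only edge deletions the algorithm performs (bridges, which never join two neighbours of a common vertex); and if one allowed $Q\supsetneq N_{G}(Z)$ in Step~\ref{CHORDAL REDUCTION: SMALL SEPARATOR}, the freshly attached terminal could make a high-degree terminal of $Q$ non-simplicial --- but that is an ambiguity in the paper's description of the step rather than a flaw in your argument. The Fibonacci-type bookkeeping, including the reduction to $\alpha\geq 1.619$ and the telescoping of the leaf costs via $\alpha^{-1}+\alpha^{-2}\leq 1$, is correct.
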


In the case that the instance still has some inner terminals, the algorithm will go to the next steps in the next subsection.

\subsection{Part~\Rome{2}: The Divide-and-Conquer Procedure} \label{DIVIDING PROCEDURE}

In this subsection, we present the second part of \texttt{WholeAlg}, which focuses on the divide-and-conquer procedure to handle inner terminals.
This part follows Part~\Rome{1} of \texttt{WholeAlg}.
Hence, we always assume that the input graph in this part is thin and contains inner terminals.

\begin{definition} [Dividing Separators and Good Components]
    A separator $Q$ is called a dividing separator if
    \begin{enumerate}[\textup{(}i\textup{)}]
        \item the separator $Q$ contains an inner terminal;
        \item there exists an edge $Q_{1}Q_{2}$ in the clique tree $\mathcal{T}_{G}$ such that $Q = Q_{1} \cap Q_{2}$; and
        \item there is a connected component $X_{Q}$ in $G - Q$, such that the sub-instance induced by $G[X_{Q} \cup Q]$ contains no inner terminal.
    \end{enumerate}
The connected component $X_{Q}$ in \textup{(}iii\textup{)} is called \emph{good} \textup{(}with respect to $Q$\textup{)}.
\end{definition}

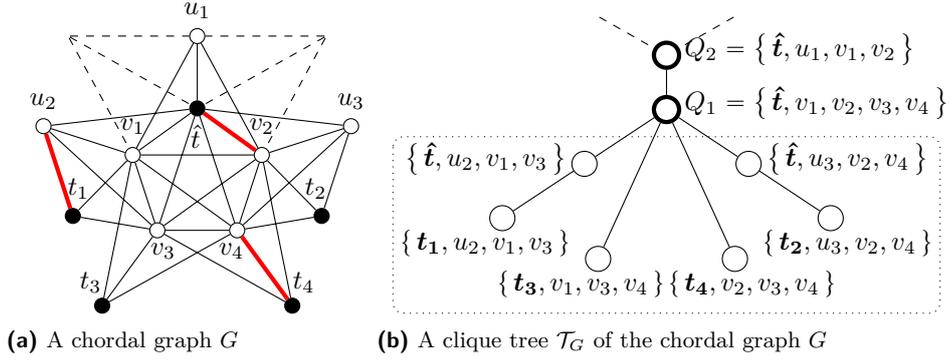
\begin{figure}[t!]
    \centering
    \subfloat[A chordal graph $G$]{
        \begin{tikzpicture}
            [
            scale = 1.05,
            nonterminal/.style={draw, shape = circle, fill = white, inner sep = 2pt},
            terminal/.style={draw, shape = circle, fill = black, inner sep = 2pt},
            ]
            \node[inner sep = 0pt] (w1) at(-1.618, 1.764) {};
            \node[inner sep = 0pt] (w2) at(1.618, 1.764) {};
            \node[nonterminal, label={[yshift = 0.5mm]$v_{1}$}] (v1) at(-0.809, 0.263) {};
            \node[nonterminal, label={[yshift = 0.5mm]$v_{2}$}] (v2) at(0.809, 0.263) {};
            \node[nonterminal, label={[xshift = 0.75mm, yshift = -6mm]$v_{3}$}] (v3) at(-0.5, -0.6885) {};
            \node[nonterminal, label={[xshift = -0.75mm, yshift = -6mm]$v_{4}$}] (v4) at(0.5, -0.6885) {};
            \node[nonterminal, label={$u_{1}$}] (u1) at(0, 1.764) {};
            
            \node[terminal, label={[xshift = -0.2mm, yshift = -7.5mm]$\hat{t}$}] (t0) at(0, 0.851) {};
            \node[nonterminal, label={$u_{2}$}] (u2) at(-1.927, 0.6265) {};
            \node[nonterminal, label={$u_{3}$}] (u3) at(1.927, 0.6265) {};
            \node[terminal, label={[xshift = 0.75mm]$t_{1}$}] (t1) at(-1.559, -0.5068) {};
            \node[terminal, label={[xshift = -0.75mm]$t_{2}$}] (t2) at(1.559, -0.5068) {};
            \node[terminal, label={[xshift = -1.5mm, yshift = -0.5mm]$t_{3}$}] (t3) at(-1.191, -1.64) {};
            \node[terminal, label={[xshift = 1.5mm, yshift = -0.5mm]$t_{4}$}] (t4) at(1.191, -1.64) {};
            
            \draw
            (u1) -- (t0) (u1) -- (v1) (u1) -- (v2)
            (t0) -- (v1) (t0) -- (v3) (t0) -- (v4)
            (v1) -- (v2) (v1) -- (v3) (v1) -- (v4) (v2) -- (v3) (v2) -- (v4) (v3) -- (v4)
            (u2) -- (t0) (u2) -- (v1) (u2) -- (v3) (u3) -- (t0) (u3) -- (v2) (u3) -- (v4)
            (t3) -- (v1) (t3) -- (v3) (t3) -- (v4) (t4) -- (v3) (t4) -- (v2)
            (t1) -- (v1) (t1) -- (v3) (t2) -- (v2) (t2) -- (v4) (t2) -- (u3);
            
            \draw[dashed]
            (w1) -- (u1) (w1) -- (t0) (w1) -- (v1) 
            (w2) -- (u1) (w2) -- (t0) (w2) -- (v2);

            \draw[red, ultra thick] (t0) -- (v2) (u2) -- (t1) (v4) -- (t4);
        \end{tikzpicture}
        }
        \hspace{5mm}
        \subfloat[The subgraph $G - Q$]{
        \begin{tikzpicture}
            [
            scale = 1.05,
            nonterminal/.style={draw, shape = circle, fill = white, inner sep = 2pt},
            terminal/.style={draw, shape = circle, fill = black, inner sep = 2pt},
            ]
            \node[inner sep = 0pt] (w1) at(-1.618, 1.764) {};
            \node[inner sep = 0pt] (w2) at(1.618, 1.764) {};
            \node[nonterminal, label={[xshift = 0.75mm, yshift = -6mm]$v_{3}$}] (v3) at(-0.5, -0.6885) {};
            \node[nonterminal, label={[xshift = -0.75mm, yshift = -6mm]$v_{4}$}] (v4) at(0.5, -0.6885) {};
            \node[nonterminal, label={$u_{1}$}] (u1) at(0, 1.764) {};
            
            \node[nonterminal, label={$u_{2}$}] (u2) at(-1.927, 0.6265) {};
            \node[nonterminal, label={$u_{3}$}] (u3) at(1.927, 0.6265) {};
            \node[terminal, label={[xshift = 0.75mm]$t_{1}$}] (t1) at(-1.559, -0.5068) {};
            \node[terminal, label={[xshift = -0.75mm]$t_{2}$}] (t2) at(1.559, -0.5068) {};
            \node[terminal, label={[xshift = -1.5mm, yshift = -0.5mm]$t_{3}$}] (t3) at(-1.191, -1.64) {};
            \node[terminal, label={[xshift = 1.5mm, yshift = -0.5mm]$t_{4}$}] (t4) at(1.191, -1.64) {};
            
            \draw
            (v3) -- (v4) (u2) -- (v3) (u3) -- (v4)
            (t3) -- (v3) (t3) -- (v4) (t4) -- (v3) 
            (t1) -- (v3) (t2) -- (v4) (t2) -- (u3);
            
            \draw[dashed]
            (w1) -- (u1) (w2) -- (u1);

            \draw[red, ultra thick] (u2) -- (t1) (v4) -- (t4);
        \end{tikzpicture}
        }

        \subfloat[A clique tree $\mathcal{T}_{G}$ of the chordal graph $G$]{
        \begin{tikzpicture}
            [
            scale = 0.72,
            clique/.style = {circle, draw}
            ]
            \node (w1) at(-1.4, 1.8) {};
            \node (w2) at(1.4, 1.8) {};
            \node[clique, ultra thick, label = {[xshift = 17.5mm, yshift = -5mm]$Q_{2} = \Set{\boldsymbol{\hat{t}}, u_{1}, v_{1}, v_{2}}$}] (Q2) at(0, 1) {};
            \node[clique, ultra thick, label = {[xshift = 19.75mm, yshift = -5mm]$Q_{1} = \Set{\boldsymbol{\hat{t}}, v_{1}, v_{2}, v_{3}, v_{4}}$}] (Q1) at(0, 0) {};
            \node[clique, label = {[xshift = -13mm, yshift = -4.5mm]$\Set{\boldsymbol{\hat{t}}, u_{2}, v_{1}, v_{3}}$}] (Q3) at(-2.4, -1.2) {};
            \node[clique, label = {[xshift = 13mm, yshift = -4.5mm]$\Set{\boldsymbol{\hat{t}}, u_{3}, v_{2}, v_{4}}$}] (Q4) at(2.4, -1.2) {};
            \node[clique, label = {[xshift = -2.25mm, yshift = -8mm]$\Set{\boldsymbol{t_{3}}, v_{1}, v_{3}, v_{4}}$}] (Q5) at(-1.5, -2.5) {};
            \node[clique, label = {[xshift = 2.25mm, yshift = -8mm]$\Set{\boldsymbol{t_{4}}, v_{2}, v_{3}, v_{4}}$}] (Q6) at(1.5, -2.5) {};
            
            \node[clique, label = {[xshift = -2.25mm, yshift = -8mm]$\Set{\boldsymbol{t_{1}}, u_{2}, v_{1}, v_{3}}$}] (Q31) at(-5, -2.5) {};
            \node[clique, label = {[xshift = 2.25mm, yshift = -8mm]$\Set{\boldsymbol{t_{2}}, u_{3}, v_{2}, v_{4}}$}] (Q41) at(5, -2.5) {};
                
            \draw[dashed]
            (w1) -- (Q2) (w2) -- (Q2);
            \draw
            (Q1) -- (Q2) (Q1) -- (Q3) (Q1) -- (Q4) (Q1) -- (Q5) (Q1) -- (Q6)
            (Q3) -- (Q31) (Q4) -- (Q41);
            
            \draw[rounded corners, dotted] 
            (-7, -0.5) rectangle (7, -3.75);
        \end{tikzpicture}
        }
        
        \subfloat[The subgraph induced by $X_{1}$]{
        \begin{tikzpicture}
            [
            scale = 1.05,
            nonterminal/.style={draw, shape = circle, fill = white, inner sep = 2pt},
            terminal/.style={draw, shape = circle, fill = black, inner sep = 2pt},
            ]
            \node[nonterminal, label={[yshift = 0.5mm]$v_{1}$}] (v1) at(-0.809, 0.263) {};
            \node[fill = white] (w) at(0, 1.2) {};
            
            \node[terminal, label={[xshift = -0.2mm, yshift = -7.5mm]$\hat{t}$}] (t0) at(0, 0.851) {};
            \node[nonterminal, label={$u_{2}$}] (u2) at(-1.927, 0.6265) {};
            \node[nonterminal, label={$u_{3}$}] (u3) at(1.927, 0.6265) {};
            \node[terminal, label={[xshift = 0.75mm]$t_{1}$}] (t1) at(-1.559, -0.5068) {};
            \node[terminal, label={[xshift = -0.75mm]$t_{2}$}] (t2) at(1.559, -0.5068) {};
            \node[terminal, label={[xshift = -1.5mm, yshift = -0.5mm]$t_{3}$}] (t3) at(-1.191, -1.64) {};
            \node[terminal, label={[xshift = 1.5mm, yshift = -0.5mm]$t_{4}$}] (t4) at(1.191, -1.64) {};
            
            \draw
            (t0) -- (v1)
            (u2) -- (t0) (u2) -- (v1) (u3) -- (t0)
            (t3) -- (v1) (t1) -- (v1) (t2) -- (u3);

            \draw[red, ultra thick] (u2) -- (t1);
            
        \end{tikzpicture}
        }
        \hspace{5mm}
        \subfloat[The subgraph induced by $X_{2}$]{
        \begin{tikzpicture}
            [
            scale = 1.05,
            nonterminal/.style={draw, shape = circle, fill = white, inner sep = 2pt},
            terminal/.style={draw, shape = circle, fill = black, inner sep = 2pt},
            ]
            \node[nonterminal, label={[yshift = 0.5mm]$v_{2}$}] (v2) at(0.809, 0.263) {};
            \node[fill = white] (w) at(0, 1.2) {};
            
            \node[terminal, label={[xshift = -0.2mm, yshift = -7.5mm]$\hat{t}$}] (t0) at(0, 0.851) {};
            \node[nonterminal, label={$u_{2}$}] (u2) at(-1.927, 0.6265) {};
            \node[nonterminal, label={$u_{3}$}] (u3) at(1.927, 0.6265) {};
            \node[terminal, label={[xshift = 0.75mm]$t_{1}$}] (t1) at(-1.559, -0.5068) {};
            \node[terminal, label={[xshift = -0.75mm]$t_{2}$}] (t2) at(1.559, -0.5068) {};
            \node[terminal, label={[xshift = -1.5mm, yshift = -0.5mm]$t_{3}$}] (t3) at(-1.191, -1.64) {};
            \node[terminal, label={[xshift = 1.5mm, yshift = -0.5mm]$t_{4}$}] (t4) at(1.191, -1.64) {};
            
            \draw
            (t0) -- (v2)
            (u2) -- (t0) (u3) -- (t0) (u3) -- (v2) 
            (t4) -- (v2) (t2) -- (v2) (t2) -- (u3);

            \draw[red, ultra thick] (u2) -- (t1);
            
        \end{tikzpicture}
        }

        \subfloat[The subgraph induced by $X_{3}$]{
        \begin{tikzpicture}
            [
            scale = 1.05,
            nonterminal/.style={draw, shape = circle, fill = white, inner sep = 2pt},
            terminal/.style={draw, shape = circle, fill = black, inner sep = 2pt},
            ]
            \node[nonterminal, label={[xshift = 0.75mm, yshift = -6mm]$v_{3}$}] (v3) at(-0.5, -0.6885) {};
            \node[fill = white] (w) at(0, 1.2) {};
            
            \node[terminal, label={[xshift = -0.2mm, yshift = -7.5mm]$\hat{t}$}] (t0) at(0, 0.851) {};
            \node[nonterminal, label={$u_{2}$}] (u2) at(-1.927, 0.6265) {};
            \node[nonterminal, label={$u_{3}$}] (u3) at(1.927, 0.6265) {};
            \node[terminal, label={[xshift = 0.75mm]$t_{1}$}] (t1) at(-1.559, -0.5068) {};
            \node[terminal, label={[xshift = -0.75mm]$t_{2}$}] (t2) at(1.559, -0.5068) {};
            \node[terminal, label={[xshift = -1.5mm, yshift = -0.5mm]$t_{3}$}] (t3) at(-1.191, -1.64) {};
            \node[terminal, label={[xshift = 1.5mm, yshift = -0.5mm]$t_{4}$}] (t4) at(1.191, -1.64) {};
            
            \draw
            (t0) -- (v3)
            (u2) -- (t0) (u2) -- (v3) (u3) -- (t0)
            (t3) -- (v3) (t4) -- (v3) (t1) -- (v3) (t2) -- (u3);

            \draw[red, ultra thick] (u2) -- (t1);
            
        \end{tikzpicture}
        }
        \hspace{5mm}
        \subfloat[The subgraph induced by $X_{4}$]{
        \begin{tikzpicture}
            [
            scale = 1.05,
            nonterminal/.style={draw, shape = circle, fill = white, inner sep = 2pt},
            terminal/.style={draw, shape = circle, fill = black, inner sep = 2pt},
            ]
            \node[nonterminal, label={[xshift = -0.75mm, yshift = -6mm]$v_{4}$}] (v4) at(0.5, -0.6885) {};
            \node[fill = white] (w) at(0, 1.2) {};
            
            \node[terminal, label={[xshift = -0.2mm, yshift = -7.5mm]$\hat{t}$}] (t0) at(0, 0.851) {};
            \node[nonterminal, label={$u_{2}$}] (u2) at(-1.927, 0.6265) {};
            \node[nonterminal, label={$u_{3}$}] (u3) at(1.927, 0.6265) {};
            \node[terminal, label={[xshift = 0.75mm]$t_{1}$}] (t1) at(-1.559, -0.5068) {};
            \node[terminal, label={[xshift = -0.75mm]$t_{2}$}] (t2) at(1.559, -0.5068) {};
            \node[terminal, label={[xshift = -1.5mm, yshift = -0.5mm]$t_{3}$}] (t3) at(-1.191, -1.64) {};
            \node[terminal, label={[xshift = 1.5mm, yshift = -0.5mm]$t_{4}$}] (t4) at(1.191, -1.64) {};
            
            \draw
            (t0) -- (v4)
            (u2) -- (t0) (u3) -- (t0) (u3) -- (v4)
            (t3) -- (v4) (t2) -- (v4) (t2) -- (u3);

            \draw[red, ultra thick] (u2) -- (t1) (v4) -- (t4);
            
        \end{tikzpicture}
        }
        
    \caption{In (a), (b), (d), (e), (f), and (g), the black vertices denote terminals $\Set{\hat{t}} \cup \Set{t_{i}}_{i = 1}^{4}$, and the white vertices denote non-terminals $\Set{u_{i}}_{i = 1}^{3} \cup \Set{v_{i}}_{i = 1}^{4}$.
    The thick red edges denote marked edges, and the dashed edges are optional.
    In (c), the nodes represent maximal cliques in $G$.
    The terminal $\hat{t}$ is an inter terminal, $Q = Q_{1} \cap Q_{2} = \Set{\hat{t}, v_{1}, v_{2}}$ is a dividing clique, where $Q_{1}$ and $Q_{2}$ are nodes with thick borders.
    A good connected component $X_{Q} = \Set{t_{i}}_{i = 1}^{4} \cup \Set{u_{2}, u_{3}, v_{3}, v_{4}}$ (w.r.t. $Q$) is formed by the union of the cliques in dotted boxes minus $Q$.
    It satisfies that $X_{0} = (X_{Q} \setminus Q_{1}) \cup \Set{\hat{t}} = \Set{t_{i}}_{i = 1}^{4} \cup \Set{\hat{t}, u_{2}, u_{3}}$ and $s_{0} = 1$.
    For the sub-instance $\mathcal{I}_{i}$ induced by $G[X_{i}] = G[X_{0} \cup \Set{v_{i}}]$ ($i \in [4]$), $s_{1} = s_{3} = 1$ and $s_{2} = s_{4} = 2$.
    Thus it holds $U_{0} = \Set{v_{1}, v_{3}}$ and $U_{1} = \Set{v_{2}, v_{4}}$.}
    \label{DIVIDING CLIQUES}
\end{figure}

When there is an inner terminal, the dividing separators exist.
We start by constructing a clique tree $\mathcal{T}_{G}$ of the graph $G$ with the set of maximal cliques $\mathcal{Q}_{G}$, which can be done in linear time by using the algorithm in~\cite{jctGavril74}.
Observe that there exist $n - 1$ edges in the clique tree, and each dividing separator corresponds to an edge in the clique tree.
By checking each edge in the clique tree, we can identify a dividing separator $Q$ and a good component $X_{Q}$ with respect to $Q$ in polynomial time.

The subsequent steps of the algorithm are based on a dividing separator.
We always use the following notations.
Let $Q$ be a dividing separator that contains an inner terminal $\hat{t}$, and suppose $Q = Q_{1} \cap Q_{2}$ for an edge $Q_{1}Q_{2}$ in the clique tree.
Let $X_{Q} \subseteq V$ be a good component w.r.t $Q$.
We can see that neither $Q_{1}$ nor $Q_{2}$ is simplicial since each contains the inner terminal $\hat{t}$.
Without loss of generality, we assume that $Q \subsetneq Q_{1} \subsetneq X_{Q} \cup Q$.
It follows that at least one simplicial vertex exists in $X_{Q} \setminus Q_{1}$ since every leaf node in the clique tree contains a simplicial vertex.

Let $Q_{1} \setminus \Set{\hat{t}} = \Set{v_{1}, v_{2}, \dots, v_{\ell}}$ and $X_{0} \coloneq (X_{Q} \setminus Q_{1}) \cup \Set{\hat{t}}$.
For each vertex $v_{i} \in Q_{1} \setminus \Set{\hat{t}}$, define $X_{i} \coloneq X_{0} \cup \Set{v_{i}}$.
Let $\mathcal{I}_{i}$ denote the sub-instance induced by $G[X_{i}]$, i.e., $\mathcal{I}_{i} = (G[X_{i}],  X_{i} \cap T,  M \cap E(G[X_{i}]), k)$.
Since $G[X_{i}]$ is a subgraph of $G[X_{Q} \cup Q]$, we know that in $\mathcal{I}_{i}$, each terminal is simplicial.
Therefore, the first nine steps of \texttt{WholeAlg} can solve each $\mathcal{I}_{i}$ in time $\mathcal{O}^{*}(\alpha^{k} + 1.6181^{k})$ by Lemma~\ref{TIME: THIN + GOOD}.
The size of the minimum solution to $\mathcal{I}_{i}$ is expressed as $s_{i} \coloneq s(\mathcal{I}_{i})$.
Refer to Fig.~\ref{DIVIDING CLIQUES} for illustrations of these concepts.

\begin{chordalstep} \label{CHORDAL BRANCH: DIVID}
    Solve each instance $\mathcal{I}_{i}$ for $0 \leq i \leq \ell$.
    If anyone returns No, the algorithm returns No to indicate the original instance is a No-instance.
    Otherwise, we can compute the value $s_{i}$ for $0 \leq i \leq \ell$.
\end{chordalstep}

If \textsc{SFVS-S} can be solved in time $\mathcal{O}^{*}(\alpha^{k})$, Step~\ref{CHORDAL BRANCH: DIVID} can be executed in time $\mathcal{O}^{*}(\alpha^{k} + 1.6181^{k})$ since each instance in Step~\ref{CHORDAL BRANCH: DIVID} takes $\mathcal{O}^{*}(\alpha^{s_{i}} + 1.6181^{s_{i}}) \leq \mathcal{O}^{*}(\alpha^{k} + 1.6181^{k})$ time, and there are $\ell + 1 = \mathcal{O}(n)$ instances.

Observe that for each index $i \in [\ell]$, it holds $s_{0} \leq s_{i} \leq s_{0} + 1$ because $G[X_{i}]$ contains only one more vertex than $G[X_{0}]$.
Define $U_{0} \coloneq \{ v_{i} \in Q_{1} \setminus \Set{\hat{t}} : s_{i} = s_{0} \}$ and $U_{1} \coloneq \{ v_{i} \in Q_{1} \setminus \Set{\hat{t}} : s_{i} = s_{0} + 1 \}$ (cf. Fig.~\ref{DIVIDING CLIQUES}).

Note that for any separator $Q'$ and any component $X'$ of $G - Q'$, $S$ is a solution to $\mathcal{I}$ if and only if $S \cap (X' \cup Q')$ is a solution to the instance induced by $X' \cup Q'$ and $S \setminus X'$ is a solution to the instance induced by $G - X'$.
This property is important and will be repeatedly used in the analysis of our algorithm.
We first give a structural property of $U_{1}$.

\begin{lemma} \label{SOLUTION SIZE1}
    If the instance $\mathcal{I}$ has a minimum solution without $\hat{t}$, there exists a minimum solution $S$ containing $U_{1}$ but not containing $\hat{t}$.
\end{lemma}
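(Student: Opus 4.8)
The plan is to argue by exchange. Among the minimum solutions of $\mathcal{I}$ that avoid $\hat t$ — at least one exists by hypothesis — I will fix one, call it $S$, that maximises $|S\cap\{v_1,\dots,v_\ell\}|$, and I will show $U_1\subseteq S$. The first observation pins down the only non-trivial case: since $\hat t$ is a terminal lying in the clique $Q_1$ and $\hat t\notin S$, at most two vertices of $Q_1$ survive in $G-S$ (three survivors, one of them $\hat t$, would span a $T$-triangle), so $S$ omits at most one of $v_1,\dots,v_\ell$. If it omits none, or omits only a vertex of $U_0$, then $U_1\subseteq\{v_1,\dots,v_\ell\}\cap S$ and we are done; so from now on $S$ omits exactly one vertex $v_j$, and $v_j\in U_1$.

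Next I would exploit the separator structure. Put $Y\coloneq X_Q\setminus Q_1=X_0\setminus\{\hat t\}$. Because $X_Q$ is a connected component of $G-Q$ and $Q\subseteq Q_1$, one checks that $X_Q\cup Q=Q_1\cup Y$ and $N_G(Y)\subseteq Q_1\cup Y$, so every $T$-triangle and every marked edge of $G$ meeting $Y$ is contained in $\{\hat t,v_1,\dots,v_\ell\}\cup Y$. Also $S\cap X_j=S\cap Y$ (as $\hat t,v_j\notin S$) is a solution of $\mathcal{I}_j$, since it hits every $T$-triangle and marked edge of $G$ lying inside $G[X_j]$; hence $|S\cap Y|\ge s_j=s_0+1$.

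Then I would rebuild $S$ on $Y$: let $R$ be a minimum solution of $\mathcal{I}_0$, chosen to avoid $\hat t$ (so $R\subseteq Y$ and $|R|=s_0$; this choice is the delicate point, discussed below), and set $\hat S\coloneq(S\setminus Y)\cup R\cup\{v_j\}$. Then $|\hat S|\le(|S|-|S\cap Y|)+s_0+1\le|S|$, the set $\hat S$ avoids $\hat t$, and $\hat S$ contains every $v_i$ (those with $i\ne j$ lie in $S\setminus Y$, and $v_j$ is added). To check $\hat S$ solves $\mathcal{I}$: a $T$-triangle or marked edge not meeting $Y$ is hit by $S$ at a vertex different from $\hat t$ and $v_j$, hence at a vertex of $S\setminus(Y\cup\{\hat t\})\subseteq\hat S$; one meeting $Y$ lies in $\{\hat t,v_1,\dots,v_\ell\}\cup Y$, so it is hit by some $v_i\in\hat S$ if it contains one, and otherwise it lies in $Y\cup\{\hat t\}=X_0$, is a constraint of $\mathcal{I}_0$, and $R$ hits it — necessarily at a vertex of $Y\subseteq\hat S$, since $R$ avoids $\hat t$. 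Thus $\hat S$ is a minimum solution of $\mathcal{I}$ avoiding $\hat t$ with $|\hat S\cap\{v_1,\dots,v_\ell\}|=\ell>|S\cap\{v_1,\dots,v_\ell\}|$, contradicting the choice of $S$; so the assumed case is impossible and $U_1\subseteq S$.

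The hard part will be the single ingredient used in the rebuild: that $\mathcal{I}_0$ has a \emph{minimum} solution avoiding $\hat t$. This is false for an arbitrary terminal (a terminal sitting inside a clique of non-terminals can be strictly cheaper to delete than its whole neighbourhood), so the argument must use that $\mathcal{I}$ is thin and $X_Q$ is good: because $G[X_Q\cup Q]$ contains no inner terminal, $\hat t$ is simplicial in $G[X_0]$, and I would combine this with the structural facts of Lemma~\ref{STRUCTURE: THIN} to produce the desired $\hat t$-free minimum solution of $\mathcal{I}_0$ (or, in the residual configuration, contradict $s_j=s_0+1$ directly). Establishing this local statement — not the routine exchange bookkeeping above — is where I expect the real difficulty; equivalently, one could aim to show instead that $\mathcal{I}_j$ admits a minimum solution containing $v_j$ and avoiding $\hat t$, and run the same rebuild with such a set in the role of $R\cup\{v_j\}$.
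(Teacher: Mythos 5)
Your argument follows the same route as the paper's proof: bound $\vert S\cap X_{0}\vert$ from below via $s_{j}=s_{0}+1$, then swap the part of $S$ inside $X_{0}$ for a minimum solution of $\mathcal{I}_{0}$ together with the missing clique vertex (the paper installs all of $Q_{1}\setminus\Set{\hat{t}}$ in one shot rather than iterating with an extremal choice of $S$, but that difference is cosmetic). The bookkeeping you carry out --- at most one $v_{i}$ can be missing from $S$ because $\hat{t}$ survives in the clique $Q_{1}$, the containment $N_{G}[X_{Q}\setminus Q_{1}]\subseteq Q_{1}\cup(X_{Q}\setminus Q_{1})$, and the case analysis showing $\hat{S}$ hits every $T$-triangle and marked edge --- is correct.

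The one step you defer is, however, a genuine gap as written, and the route you sketch for closing it would not suffice. Simpliciality of $\hat{t}$ in $G[X_{0}]$ alone does not yield a minimum solution of $\mathcal{I}_{0}$ avoiding $\hat{t}$: a simplicial terminal whose neighbourhood is a triangle of non-terminals has $\Set{\hat{t}}$ as its unique minimum solution, so no exchange can remove it. What actually closes the gap is a stronger structural fact. Since $X_{Q}$ is good, $\hat{t}$ is simplicial in $G[X_{Q}\cup Q]$, so $N_{G[X_{Q}\cup Q]}[\hat{t}]$ is a clique of $G$; it contains $Q_{1}$, because every vertex of the maximal clique $Q_{1}\ni\hat{t}$ is a neighbour of $\hat{t}$ and $Q_{1}\subseteq X_{Q}\cup Q$; by maximality of $Q_{1}$ it therefore \emph{equals} $Q_{1}$. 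Hence $\hat{t}$ has no neighbour in $X_{Q}\setminus Q_{1}=X_{0}\setminus\Set{\hat{t}}$, i.e., $\hat{t}$ is isolated in $G[X_{0}]$, lies in no $T$-triangle or marked edge of $\mathcal{I}_{0}$, and consequently no minimum solution of $\mathcal{I}_{0}$ contains it. With this observation inserted your proof is complete; for what it is worth, the paper's own proof silently relies on the same fact when it asserts that $S_{0}\cup Q_{1}\setminus\Set{\hat{t}}$ hits all triangles with a terminal in $T\cap X_{0}$.
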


\begin{proof}
    We prove this lemma by contradiction.
    Suppose that $S$ is a minimum solution to $\mathcal{I} = (G, T, M, k)$ not containing a vertex $v_{i} \in U_{1}$ and the terminal $\hat{t} \in T$.
    In this case, $S$ must contain $Q_{1} \setminus \Set{\hat{t}, v_{i}}$; otherwise, if we remove $S$ from the graph, there would be a triangle containing $\hat{t}$ and $v_{i}$ in the remaining graph.
    Thus, we have
    \begin{equation*}
        |S \cap Q_{1}| = |Q_{1}| - 2.
    \end{equation*}
    Since $S \cap X_{i}$ is also a solution to $\mathcal{I}_{i} = (G[X_{i}],  X_{i} \cap T,  M \cap E(G[X_{i}]), k)$, we can establish that
    \begin{equation*}
        |S \cap X_{i}| \geq s_{i} = s_{0} + 1.
    \end{equation*}
    Note that $X_{i} = X_{0} \cup \Set{v_{i}}$ and $v_{i} \not\in S$.
    Thus, we get
    \begin{equation*}
        S \cap X_{0} = S \cap X_{i}.
    \end{equation*}
    By combining the above three relations, we can obtain that
    \begin{equation*}
        |S \cap (Q_{1} \cup X_{0})| = |Q_{1}| - 2 + s_{i} \geq s_{0} + |Q_{1}| - 1.
    \end{equation*}
    
    Let $S_{0}$ be a minimum solution to $\mathcal{I}_{0}$.
    It is easy to see that $S' = (S \setminus (Q_{1} \cup X_{0})) \cup (S_{0} \cup Q_{1} \setminus \Set{\hat{t}})$ is still a solution to $\mathcal{I}$.
    Indeed, on the one hand, we have $S \setminus X_{0} \subseteq S'$, which indicates that $S'  \setminus X_{0}$ is a solution to the instance induced by $G - X_{0}$.
    On the other hand, we have $S_{0} \cup Q_{1} \setminus \Set{\hat{t}} \subseteq S'$, which indicates that $S' \cap (Q_{1} \cup X_{0})$ is a solution to the instance induced by $Q_{1} \cup X_{0}$.
    Besides, by the construction of $S'$, it follows that
    \begin{equation*}
        |S'| \leq |S| - (s_{0} + |Q_{1}| - 1) + (|S_{0}| + |Q_{1}| - 1) = |S|,
    \end{equation*}
    which yields that $S$ is a minimum solution to $\mathcal{I}$.
    
    Finally, due to the construction of $S'$, we have $S' \cap Q_{1} = Q_{1} \setminus \Set{\hat{t}}$, which leads $S'$ is a minimum solution to $\mathcal{I}$ containing $Q_{1} \setminus \Set{\hat{t}}$.
\end{proof}

Based on Lemmas~\ref{SOLUTION SIZE1}, we have the following branching steps.

\begin{chordalstep} \label{CHORDAL BRANCH: BIG S0+U1}
    If $s_{0} + |U_{1}| \geq 2$, branch into two instances by either
    \begin{itemize}
        \item removing $\hat{t}$ and decreasing $k$ by $1$; or
        \item removing $X_{0} \setminus \Set{\hat{t}}$ and $U_{1}$, and decreasing $k$ by $s_{0} + |U_{1}| \geq 2$.
    \end{itemize}
\end{chordalstep}

\begin{lemma} \label{CHORDAL CORRECTNESS: BIG S0+U1}
    Step~\ref{CHORDAL BRANCH: BIG S0+U1} is safe, and the branching vector is not worse than $(1, 2)$.
\end{lemma}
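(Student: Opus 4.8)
The plan is to prove safeness by the standard dichotomy on whether a minimum solution of $\mathcal{I}$ contains the inner terminal $\hat{t}$, matching the two cases to the two sub-instances produced by Step~\ref{CHORDAL BRANCH: BIG S0+U1}, and then to read off the branching vector from how much $k$ decreases in each branch.

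For the forward direction of safeness, let $S$ be a minimum solution of $\mathcal{I}$ with $\vert S\vert \le k$. If $\hat{t}\in S$, then $S\setminus\{\hat{t}\}$ is a solution of size $\le k-1$ to the instance induced by $G-\hat{t}$, so the first sub-instance is a Yes-instance. If $\hat{t}\notin S$, I would first invoke Lemma~\ref{SOLUTION SIZE1} to assume without loss of generality that $U_{1}\subseteq S$, then note that $S\cap X_{0}$ is a solution of $\mathcal{I}_{0}$ and hence $\vert S\cap X_{0}\vert \ge s_{0}$, so $S\setminus(X_{0}\cup U_{1})$ has size $\le k-(s_{0}+\vert U_{1}\vert)$; as in the ``only if'' part of Lemma~\ref{SOLUTION SIZE0}, deleting vertices cannot create new $T$-triangles or marked edges, so this set is a solution to the second sub-instance. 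For the converse direction, if the first sub-instance is a Yes-instance then adding $\hat{t}$ back gives a solution of size $\le k$ to $\mathcal{I}$, and if the second sub-instance is a Yes-instance then the lift constructed in the ``if'' part of Lemma~\ref{SOLUTION SIZE0} is a solution of size $\le k$ to $\mathcal{I}$ (this lift is valid regardless of whether the witnessing solution contains $\hat{t}$, since for safeness we only need \emph{some} solution of $\mathcal{I}$ of size at most $k$). One bookkeeping point I would spell out: Step~\ref{CHORDAL BRANCH: BIG S0+U1} removes $X_{0}\setminus\{\hat{t}\}$ and so retains $\hat{t}$ in the second sub-instance, whereas Lemma~\ref{SOLUTION SIZE0} deletes all of $X_{0}$; retaining $\hat{t}$ merely adds back one vertex together with its incident edges, and since the case $\hat{t}\in S$ is already covered by the first branch, the two formulations of the second sub-instance agree for our purposes.

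The branching vector is then immediate: the first branch decreases $k$ by $1$ and the second decreases $k$ by $s_{0}+\vert U_{1}\vert$, which is at least $2$ by the precondition of Step~\ref{CHORDAL BRANCH: BIG S0+U1}; since $k$ itself is the measure used in this section, the branching vector is $(1,\,s_{0}+\vert U_{1}\vert)$, which is componentwise at least $(1,2)$ and therefore not worse than $(1,2)$. I expect the only real obstacle to be the forward direction: making sure that the exchange argument of Lemma~\ref{SOLUTION SIZE1}, which simultaneously forces $U_{1}$ into the solution and the $X_{0}$-part down to a minimum solution of $\mathcal{I}_{0}$, composes with Lemma~\ref{SOLUTION SIZE0} with no off-by-one slack in the size accounting; the converse lift and the branching-vector computation are routine once those two lemmas are available.
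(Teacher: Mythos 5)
Your proof is correct and follows essentially the same route as the paper's: split on whether a minimum solution contains $\hat{t}$, invoke Lemma~\ref{SOLUTION SIZE1} (together with the equivalence behind Lemma~\ref{SOLUTION SIZE0}) for the branch excluding $\hat{t}$, and read off the branching vector $(1, s_{0}+\vert U_{1}\vert)$. In fact you are more explicit than the paper about the discrepancy between the step retaining $\hat{t}$ and Lemma~\ref{SOLUTION SIZE0} deleting all of $X_{0}$, a point the paper's own (very terse) proof glosses over.
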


\begin{proof}
    If $\hat{t} \in Q \subseteq Q_{1}$ is in the solution $S$, then $S \setminus \Set{\hat{t}}$ is a solution to the instance induced by $G - \hat{t}$; otherwise, $\hat{t}$ is not contained in the solution.

    Let $\mathcal{I}' = (G - (X_{0} \setminus \Set{\hat{t}}) - U_{1}, T \setminus ((X_{0} \setminus \Set{\hat{t}}) \cup U_{1}), k - s_{0} - |U_{1}|)$.
    We only need to show $\mathcal{I}'$ has a solution without $\hat{t}$ if and only if $\mathcal{I}'$ has a solution without $\hat{t}$.
    Let $S$ be a solution to the instance $\mathcal{I}$ such that $\hat{t} \not\in S$.
    Based on Lemma~\ref{SOLUTION SIZE1}, we assume that $U_{1} \subseteq S$.
    Notice that $S \cap X_{0}$ is a solution to the instance $\mathcal{I}_{0}$, and its size satisfies $|S \cap X_{0}| \geq s_{0}$.
    Additionally, $S \setminus (X_{0} \cup U_{1})$ is also a solution to the sub-instance $\mathcal{I}'$.
    Consequently, the size of the minimum solution to $\mathcal{I}'$ holds that
    \begin{equation*}
        s(\mathcal{I}') \leq |S \setminus(X_{0} \cup U_{1})| = |S| - |U_{1}| - |S \cap X_{0}| \leq |S| - |U_{1}| - s_{0} \leq k - (s_{0} + |U_{1}|).
    \end{equation*}

    Conversely, assume that $S'$ is a solution to the instance $\mathcal{I}'$ such that $\hat{t} \not\in S'$ and $|S'| \leq k - (s_{0} + |U_{1}|)$.
    We now show the existence of a solution $S$ to instance $\mathcal{I}$ with $|S| = k$.
    Since $S'$ must intersect all $T$-triangles in the clique $Q_{1} \setminus U_{1}$ and $\hat{t} \not\in S'$, there can be at most one vertex in $U_{0} \setminus S'$.
    We consider two cases.

    \textbf{Case 1:} $U_{0} \setminus S'$ is empty.
    In this case, we define $S = S' \cup S_{0} \cup U_{1}$, where $S_{0}$ is a minimum solution to the instance $\mathcal{I}_{0}$.
    On one hand, $S' \cup U_{1}$ is a solution to the instance induced by $G - X_{0}$.
    On the other hand, $S_{0}$ is a solution to the instance induced by $X_{0}$.
    We conclude that $S$ is a solution to instance $\mathcal{I}$.

    \textbf{Case 2:} $U_{0} \setminus S'$ is non-empty.
    Assume that $v_{j} \in Q_{1} \setminus \Set{\hat{t}}$ is the unique vertex in $U_{0} \setminus S'$.
    In this case, we let $S = S' \cup S_{0} \cup U_{1}$, where $S_{j}$ is a minimum solution to the instance $\mathcal{I}_{j}$.
    On the one hand, $S' \cup U_{1}$ is a solution to the instance induced by $G - X_{0}$.
    On the other hand, $S_{j}$ is a solution to the instance induced by $X_{j}$.
    We conclude that $S$ is a solution to instance $\mathcal{I}$.
    
    Finally, observe that $|S_{0}| = s_{0} = s_{j} = |S_{j}|$ since $v_{j} \in U_{0}$.
    Therefore, for either of the above two cases, the size of $S$ satisfies that
    \begin{equation*}
        |S| = |S'| + |U_{1}| + s_{0} = k - (|U_{1}| + s_{0}) + |U_{1}| + s_{0} = k.
    \end{equation*}
    Therefore, Step~\ref{CHORDAL BRANCH: BIG S0+U1} is safe.
    
    Finally, we obtain the branching vector $(1, s_{0} + |U_{1}|)$, which is not worse than $(1, 2)$.
\end{proof}

\begin{chordalstep} \label{CHORDAL BRANCH: BIG S0+Q1}
    If $s_{0} + |Q_{1}| \geq 4$ and $U_{0} \setminus Q \neq \varnothing$, we pick a vertex $v_{i} \in U_{0} \setminus Q$ and branch into two instances by either
    \begin{itemize}
        \item removing $\hat{t}$ and decreasing $k$ by $1$; or
        \item removing $(Q_{1} \cup X_{0}) \setminus \Set{\hat{t}, v_{i}}$ and decreasing $k$ by $s_{0} + |Q_{1}| - 2 \geq 2$.
    \end{itemize}
\end{chordalstep}

\begin{lemma} \label{CHORDAL CORRECTNESS: BIG S0+Q1}
    Step~\ref{CHORDAL BRANCH: BIG S0+Q1} is safe, and the branching vector is not worse than $(1, 2)$.
\end{lemma}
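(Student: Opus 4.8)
The plan is to branch according to whether some minimum solution of $\mathcal{I}$ contains the inner terminal $\hat{t}$. In the first branch we put $\hat{t}$ into the solution: as usual, $\mathcal{I}$ has a solution of size at most $k$ containing $\hat{t}$ if and only if $\mathcal{I} - \hat{t}$ has a solution of size at most $k-1$, which contributes the coordinate $1$. The second branch is the case that no minimum solution contains $\hat{t}$; I will show that $\mathcal{I}$ has a solution of size at most $k$ avoiding $\hat{t}$ if and only if the instance $\mathcal{I}''$ obtained by deleting $(Q_{1} \cup X_{0}) \setminus \{\hat{t}, v_{i}\}$ and decreasing $k$ by $s_{0} + |Q_{1}| - 2$ is a Yes-instance. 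This gives the branching vector $(1, s_{0} + |Q_{1}| - 2)$, and since the step is applied only when $s_{0} + |Q_{1}| \geq 4$ we have $s_{0} + |Q_{1}| - 2 \geq 2$, so the vector is not worse than $(1,2)$, whose branching factor is $1.820$.

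To set up the second-branch equivalence I would record two ingredients. First, since $v_{i} \in U_{0}$, i.e.\ $s_{i} = s_{0}$, a minimum solution $S_{i}$ of $\mathcal{I}_{i}$ has $|S_{i}| = s_{0}$, it must avoid $v_{i}$ (otherwise $S_{i} \cap X_{0}$ would solve $\mathcal{I}_{0}$ with fewer than $s_{0}$ vertices), hence $S_{i} \subseteq X_{0}$ and $S_{i}$ already hits every $T$-triangle and marked edge contained in $G[X_{i}]$. Second, $X_{Q}$ is a connected component of $G - Q$ with $Q_{1} \setminus Q \subseteq X_{Q}$, so $X_{Q} \cup Q = Q_{1} \cup X_{0}$ and every vertex of $(X_{0} \setminus \{\hat{t}\}) \cup \{v_{i}\} \subseteq X_{Q}$ has all of its neighbours inside $Q_{1} \cup X_{0}$; here it matters that $v_{i} \notin Q$ — which is precisely why the step demands $U_{0} \setminus Q \neq \varnothing$ — since otherwise $v_{i} \in Q_{2}$ and its neighbourhood would reach $Q_{2} \setminus Q$. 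For the ``if'' direction, from a solution $S''$ of $\mathcal{I}''$ of size at most $k - (s_{0} + |Q_{1}| - 2)$ I form $S = S'' \cup (Q_{1} \setminus \{\hat{t}, v_{i}\}) \cup S_{i}$, so $|S| \leq k$; to check $S$ solves $\mathcal{I}$ I classify each $T$-triangle (and marked edge, identically) of $G - S$: one meeting $Q_{1} \setminus \{\hat{t}, v_{i}\}$ is impossible as that set lies in $S$; one meeting $v_{i}$ or $X_{0} \setminus \{\hat{t}\}$ is confined to $Q_{1} \cup X_{0}$, avoids $Q_{1} \setminus \{\hat{t}, v_{i}\}$, hence lies in $X_{i}$ and is killed by $S_{i}$; any remaining one has all vertices in $\{\hat{t}\} \cup (V \setminus (Q_{1} \cup X_{0}))$, so it lies in the graph of $\mathcal{I}''$ and would survive $S''$, a contradiction. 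For the ``only if'' direction I start from a solution $S$ of $\mathcal{I}$ of size at most $k$ with $\hat{t} \notin S$; since $Q_{1}$ is a clique through $\hat{t}$ it misses at most one vertex of $Q_{1} \setminus \{\hat{t}\}$, and $S \cap X_{0}$ solves $\mathcal{I}_{0}$, so $|S \cap (Q_{1} \cup X_{0})| \geq (|Q_{1}| - 2) + s_{0}$; replacing the part of $S$ inside $Q_{1} \cup X_{0}$ by $(Q_{1} \setminus \{\hat{t}, v_{i}\}) \cup S_{i}$ produces a solution $S^{\ast}$ of $\mathcal{I}$ of size at most $|S| \leq k$ (same classification argument, with $S$ itself now playing the ``outside'' role) with $\hat{t}, v_{i} \notin S^{\ast}$, and then $S^{\ast} \setminus (Q_{1} \cup X_{0}) = S \setminus (Q_{1} \cup X_{0})$ is a solution of $\mathcal{I}''$ of size at most $k - (s_{0} + |Q_{1}| - 2)$.

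The main obstacle is exactly the classification step used in both directions: verifying that each swapped set is a genuine solution needs a careful (though short) case analysis on where the vertices of a given $T$-triangle or marked edge of $G - S$ sit relative to $X_{Q}$, $Q$, $Q_{1}$ and $V \setminus (Q_{1} \cup X_{0})$, and it rests entirely on the clique-tree facts that $X_{Q}$ is a connected component of $G - Q$ with $Q_{1} \setminus Q \subseteq X_{Q}$ and $v_{i} \in X_{Q} \setminus Q$ — so that $\hat{t}$ is the only vertex of the block $Q_{1} \cup X_{0}$ whose $T$-triangles can leave the block, and those escaping ones are absorbed by the ``outside'' part of the constructed solution, while everything confined to the block is absorbed by $(Q_{1} \setminus \{\hat{t}, v_{i}\}) \cup S_{i}$.
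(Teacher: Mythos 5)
Your proposal is correct and takes essentially the same route as the paper: branch on whether $\hat{t}$ is in the solution, and in the exclusion branch use an exchange argument (based on $|S\cap Q_{1}|\geq|Q_{1}|-2$, $|S\cap X_{0}|\geq s_{0}$, the fact that a minimum solution $S_{i}$ of $\mathcal{I}_{i}$ avoids $v_{i}$, and that $N_{G}[X_{Q}]\subseteq X_{Q}\cup Q=Q_{1}\cup X_{0}$) to replace the solution's intersection with the block $Q_{1}\cup X_{0}$ by $(Q_{1}\setminus\{\hat{t},v_{i}\})\cup S_{i}$, yielding the branching vector $(1,s_{0}+|Q_{1}|-2)$. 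If anything, your write-up spells out the final equivalence with $\mathcal{I}''$ and the triangle/marked-edge classification more explicitly than the paper's own proof does.
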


\begin{proof}
    If $\hat{t} \in Q$ is in a minimum solution $S$, then $S \setminus \Set{\hat{t}}$ is a minimum solution to the instance induced by $G - \hat{t}$; otherwise, $\hat{t}$ is not contained in the solution.
    Furthermore, for any solution $S$ to the input instance $\mathcal{I} = (G, T, M, k)$, set $Q_{1} \setminus S$ contains at most two vertices, leading that $|Q_{1}| - 2 \leq |S \cap Q_{1}|$.
    Recall that $v_{i}$ is a vertex in $U_{0} \setminus Q$ and suppose that $v_{i}$ is included in $S$.
    
    We now show $S' = (S \setminus X_{i}) \cup (Q_{1} \setminus \Set{\hat{t}, v_{i}}) \cup S_{i}$ is also a solution to $\mathcal{I}$.
    Notice that $\Set{\hat{t}}$ is a separator in $G - S'$.
    Since $S \setminus X_{i} \subseteq S'$, we have that $S' \setminus X_{i}$ is the solution to the instance induced by $V \setminus X_{i}$; and $S_{i} \subseteq S'$ implies that $S' \cap X_{i}$ is the solution to the instance $\mathcal{I}_{i}$.
    As a result, $S'$ is a solution to $\mathcal{I}$.

    On the other hand, $S \cap X_{i}$ is a solution to $\mathcal{I}_{i}$, which means that $s_{i} \leq |S \cap X_{i}|$.
    Thus, we have
    \begin{equation*}
        |S'| = |S \setminus (X_{i} \cup Q_{1})| + (|Q_{1}| - 2) + s_{i} = |S \setminus (X_{i} \cup Q_{1})| + |S \cap Q_{1}| + |S \cap X_{i}| \leq |S|.
    \end{equation*}
    which leads that $S'$ is a minimum solution to $\mathcal{I}$.
    
    Furthermore, according to the construction of $S'$, we know that there always exists a minimum solution containing $Q_{1} \setminus \Set{\hat{t}, v_{i}}$ and $S_{i}$.
    After removing $Q_{1} \setminus \Set{\hat{t}, v_{i}}$, $\Set{\hat{t}}$ is a separator in the remaining graph.
    Hence, we can further safely remove the rest vertices in $X_{i}$.

    Finally, we get the branching vector $(1, s_{0} + |Q_{1}| - 2)$.
\end{proof}

\begin{lemma} \label{STRUCTURE: FISH}
    For a thin instance that Steps~\ref{CHORDAL REDUCTION: THIN TO GOOD},~\ref{CHORDAL BRANCH: DIVID},~\ref{CHORDAL BRANCH: BIG S0+U1} and~\ref{CHORDAL BRANCH: BIG S0+Q1} cannot be applied, the following properties hold \textup{(}cf. Fig~\ref{FIG: FISH}\textup{)}.
    \begin{enumerate}[\textup{(}a\textup{)}]
        \item The value $s_{0} = 0$; \label{ITEM: S0=0}
        \item the set $U_{1}$ is equal to $Q_{1} \setminus Q$ and it contains exactly one vertex; and \label{ITEM: U1=Q1-Q=1}
        \item the set $X_{Q} \setminus Q_{1}$ is non-empty, and every vertex in $X_{Q} \setminus Q_{1}$ is simplicial. \label{ITEM: X1-Q1 SIMPLICIAL}
    \end{enumerate}
\end{lemma}

\begin{figure}[t!]
    \centering
        \begin{tikzpicture}
            [
            scale = 0.8,
            nonterminal/.style={draw, shape = circle, fill = white, inner sep=2pt},
            terminal/.style={draw, shape = circle, fill = black, inner sep=2pt},
            ]
            \node[terminal, label={[xshift = -4mm, yshift = -3mm]$u'$}] (u') at (4, 2) {};
            \node[nonterminal, label={[xshift = 4mm, yshift = -3mm]$u''$}] (u'') at (7, 2) {};

            \node[] (Q) at(1, 0.3) {$Q$};
            \node[] (Q2) at(1.1, 1.4) {$Q_{2}$};
            \node[] (Q1) at(1.1, -0.8) {$Q_{1}$};
            \node[terminal, label={[xshift = 0mm, yshift = 0mm]$\hat{t}$}] (t) at(2, 0) {};
            \node[nonterminal, label={[yshift = 0mm]$v_{2}$}] (u2) at(3, 0) {};
            \node[nonterminal, label={[xshift = 0mm, yshift = 0mm]$v_{3}$}] (u3) at(4, 0) {};
            \node[nonterminal, label={[xshift = 0mm, yshift = 0mm]$v_{4}$}] (u4) at(5, 0) {};
            \node[] (uu) at(6.5, 0) {$\cdots$};
            \node[nonterminal, label={[yshift = 0mm]$v_{\ell - 2}$}] (ul-2) at(8, 0) {};
            \node[terminal, label={[yshift = 0mm]$v_{\ell - 1}$}] (ul-1) at(9, 0) {};
            \node[nonterminal, label={[yshift = 0mm]$v_{\ell}$}] (ul) at(10, 0) {};

            \node[nonterminal, label={[xshift = 4mm, yshift = -2mm]$v_{1}$}] (u1) at(5.5, -1.5) {};
            \node[terminal, label={[xshift = -2mm, yshift = -1mm]$t_{1}$}] (t1) at(1, -2) {};
            \node[terminal, label={[xshift = 2mm, yshift = -1mm]$t_{2}$}] (t2) at(10, -2) {};

            \draw[ultra thick, red]
            (t) -- (u1) (u'') -- (u4);
            \draw
            (t1) -- (u2) (t1) -- (u3) (t1) -- (u4) (t1) -- (u1)
            (t2) -- (ul-2) (t2) -- (u4) (t2) -- (u1);
            \draw[dotted] (5.5, 0.9) circle [x radius = 5.5, y radius = 2];
            \draw[dotted] (5.5, -0.3) circle [x radius = 5.5, y radius = 2];
        \end{tikzpicture}
        \caption{A part of an instance that Steps~\ref{CHORDAL BRANCH: BIG S0+U1} and~\ref{CHORDAL BRANCH: BIG S0+Q1} cannot be applied, where black vertices denote the terminals $\Set{\hat{t}, t_{1}, t_{2}, v_{\ell - 1}, u'}$, and white vertices denote the non-terminals $\Set{v_{1}, v_{2}, \ldots, v_{\ell}, u''}$.
        $Q_{1} = \Set{\hat{t}} \cup \Set{v_{i}}_{i = 1}^{\ell}$ and $Q_{2} = \Set{\hat{t}, u', u''} \cup \Set{v_{i}}_{i = 2}^{\ell}$ are two maximal cliques, which are denoted by dotted ellipses.
        The unmarked edges between vertices in $Q_{1}$ or $Q_{2}$ are not presented, the marked edges denote thick edges.
        $Q = Q_{1} \cap Q_{2}$ is a dividing separator and $X_{Q} = \Set{\hat{t}, v_{1}, t_{1}, t_{2}}$ is a good component.
        Besides, it holds $X_{0} = \Set{\hat{t}, t_{1}, t_{2}}$ and $s_{0} = 0$.
        In the maximal clique $Q_{1}$, $U_{1} = \Set{v_{1}} = Q_{1} \setminus Q$ and $U_{0} = \Set{v_{i}}_{i = 2}^{\ell}$.
        In $X_{Q} \setminus Q_{1} = \Set{t_{1}, t_{2}}$, $t_{1}$ and $t_{2}$ are simplicial.}
        \label{FIG: FISH}
\end{figure}

\begin{proof}
    First, we know $s_{0} + |U_{1}| \leq 1$ since Step~\ref{CHORDAL BRANCH: BIG S0+U1} cannot be applied.
    If $s_{0} + |Q_{1}| \leq 3$, we can derive that $|Q| = |Q_{1} \cap Q_{2}| \leq |Q_{1}| - 1 \leq 2$.
    Furthermore, the size of the minimum solution to the instance induced by $X_{Q} \cup (Q_{1} \setminus Q)$ is bounded by $s_{0} + |Q_{1}| \leq 3$.
    This contradicts Lemma~\ref{STRUCTURE: THIN}(\ref{ITEM： SMALL SEPARATOR}).
    Therefore, according to the condition of Step~\ref{CHORDAL BRANCH: BIG S0+Q1}, we can conclude that $U_{0} \setminus Q$ is empty.
    It follows that $U_{1} = Q_{1} \setminus Q$, and the size of $U_{1}$ is exactly one, which yields that Item~(\ref{ITEM: U1=Q1-Q=1}) is correct.
    Additionally, by the relation $s_{0} + |U_{1}| \leq 1$, we have $s_{0} = 0$ and Item~(\ref{ITEM: S0=0}) is correct.
    
    Next, if $X_{Q} \setminus Q_{1}$ is empty, the sole vertex in $U_{1} = Q_{1} \setminus Q$ is simplicial, implying that $u$ is a terminal.
    However, $\hat{t}$ is a terminal adjacent to $u$, contradicting Lemma~\ref{STRUCTURE: THIN}(\ref{ITEM: SIMPLIEIAL = TERMINAL}).
    
    Finally, we assume to the contrary that a non-simplicial vertex exists in $X_{Q} \setminus Q_{1}$.
    In this case, there must be a simplicial vertex $t \in X_{Q} \setminus Q_{1}$ such that at least one neighbour of $t$ does not belong to $Q_{1}$. 
    Suppose $u$ is one such neighbour of $t$ and we will consider three cases.

    \textbf{Case 1:} $\hat{t} \in N_{G}(t)$.
    For this case, $\hat{t}$ and $u$ are adjacent since $t$ is simplicial.
    Thus, vertices $\hat{t}$, $t$, and $u$ form a $T$-triangle in $G[X_{0}]$, contradicting that $s_{0} = 0$.

    \textbf{Case 2:} there exists a vertex $v_{i} \in N_{G}(t) \cap Q$ ($i \in [\ell]$).
    For this case,  the vertices $v_{i}$, $t$ and $u$ form a $T$-triangle in $G[X_{i}]$, contradicting that $s_{i} = 0$.

    \textbf{Case 3:} no neighbour of $t$ belongs to $Q_{1}$.
    According to Lemma~\ref{STRUCTURE: THIN}(\ref{ITEM: CLIQUE SIZE4}), there must exist another vertex $u' \in N_{G}(t)$ distinct from $u$ and $v_{1}$.
    It follows that the vertices $u$, $u'$, and $t$ form a $T$-triangle, also leading to the contradiction that $s_{0} = 0$.

    Therefore, we conclude that Item~(\ref{ITEM: X1-Q1 SIMPLICIAL}) is correct. 
\end{proof}

\begin{chordalstep} \label{CHORDAL BRANCH: FISH}
    Let $U_{1} = Q_{1} \setminus Q = \Set{v_{1}}$.
    If there exists a vertex $u$ distinct from $\hat{t}$ adjacent to $v_{1}$ with a marked edge, branch into two instances by either
    \begin{itemize}
        \item removing $v_{1}$ and decreasing $k$ by $1$; or
        \item removing $\Set{\hat{t}, u}$ and decreasing $k$ by $2$.
    \end{itemize}
    Otherwise, we branch into two instances by either
    \begin{itemize}
        \item removing $\hat{t}$ and decreasing $k$ by $1$; or
        \item removing $(Q_{1} \cup Q_{2}) \setminus Q$ and decreasing $k$ by $|Q_{1} \cup Q_{2}| - |Q| = 1 + |Q_{2} \setminus Q| \geq 2$  (cf. Fig~\ref{FIG: FISH}).
    \end{itemize}
\end{chordalstep}

Note that Step~\ref{CHORDAL BRANCH: FISH} is always applicable after Step~\ref{CHORDAL BRANCH: BIG S0+Q1}.
One of two branching rules will be executed depending on whether $v_{1}$ is adjacent to a vertex distinct from $\hat{t}$ with a marked edge.

\begin{lemma} \label{CHORDAL CORRECTNESS: FISH}
    Step~\ref{CHORDAL BRANCH: FISH} is safe, and the branching vector is not worse than $(1, 2)$.
\end{lemma}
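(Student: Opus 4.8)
The plan is to treat the two branching rules of Step~\ref{CHORDAL BRANCH: FISH} separately, in both cases invoking Lemma~\ref{STRUCTURE: FISH} to fix the degenerate situation we are in: $s_{0} = 0$, $U_{1} = Q_{1}\setminus Q = \Set{v_{1}}$ is a single vertex, and $X_{Q}\setminus Q_{1}\neq\varnothing$ with every vertex in it simplicial. In particular the distinguished vertex $v_{1}$ exists, so the step is well-defined and exactly one of the two rules fires (according to whether $v_{1}$ has a marked neighbour other than $\hat t$). The backward direction of safeness is uniform: each sub-instance is obtained by deleting a vertex set $D\in\Set{\Set{v_{1}},\ \Set{\hat t,u},\ \Set{\hat t},\ (Q_{1}\cup Q_{2})\setminus Q}$ and decreasing $k$ by $|D|$, and if that instance has a solution $S'$ of size at most $k-|D|$, then $S'\cup D$ covers every marked edge and hits every $T$-triangle of $G$ (those meeting $D$ directly, the rest via $S'$), so $\mathcal{I}$ is a Yes-instance. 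Hence only the forward direction needs work.

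For the first rule, suppose some vertex $u\neq\hat t$ is marked-adjacent to $v_{1}$. If some minimum solution contains $v_{1}$, the first branch (delete $v_{1}$, $k\mapsto k-1$) succeeds. Otherwise no minimum solution contains $v_{1}$; then, since $v_{1}u\in M$, every minimum solution contains $u$, and by Lemma~\ref{SOLUTION SIZE1} (if a minimum solution avoids $\hat t$, one of them contains $U_{1}=\Set{v_{1}}$) every minimum solution must also contain $\hat t$. So a minimum solution $S$ contains $\Set{\hat t,u}$, and $S\setminus\Set{\hat t,u}$ witnesses the second branch. This yields branching vector $(1,2)$.

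For the second rule, the first branch (delete $\hat t$, $k\mapsto k-1$) is trivially correct whenever some minimum solution contains $\hat t$. Otherwise, by Lemma~\ref{SOLUTION SIZE1} there is a minimum solution $S$ with $v_{1}\in S$ and $\hat t\notin S$, and we must produce a minimum solution also containing $(Q_{1}\cup Q_{2})\setminus Q=\Set{v_{1}}\cup(Q_{2}\setminus Q)$. Since $\hat t$ is a terminal lying in the two cliques $Q_{1}$ and $Q_{2}$, $S$ must be a vertex cover of $G[Q_{1}\setminus\Set{\hat t}]$ and of $G[Q_{2}\setminus\Set{\hat t}]$, hence omits at most one vertex of $Q_{1}\setminus\Set{\hat t}$ and at most one of $Q_{2}\setminus\Set{\hat t}$; since $v_{1}\in S$, a short case analysis shows $S\supseteq(Q_{1}\cup Q_{2})\setminus\Set{\hat t,w}$ for at most one extra vertex $w$, which may be assumed to lie in $Q_{2}\setminus Q$ (otherwise we are already done). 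In that remaining case I trade $w$ into the solution against a vertex of $S$ lying in $X_{0}$: because $s_{0}=0$ the subgraph $G[X_{0}]$ contains no $T$-triangle and no marked edge, and because $Q$ separates $X_{0}$ from $Q_{2}\setminus Q$, this exchange creates no new $T$-triangle or marked edge and does not increase the size. Finally $Q_{2}\setminus Q\neq\varnothing$ (as $Q_{1}$ and $Q_{2}$ are distinct maximal cliques, $Q=Q_{1}\cap Q_{2}\subsetneq Q_{2}$), so the second branch deletes $1+|Q_{2}\setminus Q|\geq 2$ vertices, giving branching vector $(1,\,1+|Q_{2}\setminus Q|)$, not worse than $(1,2)$.

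The main obstacle is the exchange in the second rule: one has to argue that when $S$ omits exactly one vertex $w\in Q_{2}\setminus Q$ there really is a vertex of $S$ free to be removed — in particular handling the boundary case $S\cap X_{0}=\varnothing$, where instead one trades against a redundant vertex of $S$ on the far side of $Q$ using the optimality of $s(\mathcal{I}_{0})$ and the separator structure — and that the modified set still hits every $T$-triangle through $\hat t$, including those whose third vertex lies outside $Q_{1}\cup Q_{2}$; this is where the clique-tree/minimal-separator properties of chordal graphs are needed to confine the effect of the swap. Everything else is routine bookkeeping with Lemmas~\ref{SOLUTION SIZE1}, \ref{STRUCTURE: THIN}, and~\ref{STRUCTURE: FISH}.
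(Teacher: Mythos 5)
Your handling of the first branching rule coincides with the paper's, and your reduction of the second rule to a single residual sub-case is also correct: after invoking Lemma~\ref{SOLUTION SIZE1} and the vertex-cover observation on $Q_{1}\setminus\Set{\hat{t}}$ and $Q_{2}\setminus\Set{\hat{t}}$, the only case not immediately settled is a minimum solution $S$ with $\hat{t}\notin S$, $Q_{1}\setminus\Set{\hat{t}}\subseteq S$, and exactly one omitted vertex $w\in Q_{2}\setminus Q$. The genuine gap is in how you dispose of that case. The paper does \emph{not} try to push it into the second branch: it shows that such an $S$ can be converted into a minimum solution \emph{containing} $\hat{t}$, namely $S'=(S\setminus X_{1})\cup\Set{\hat{t}}$, using $\vert S\cap X_{1}\vert\geq s_{1}=1$, the fact that $Q\subseteq S'$ is a separator, that $G[X_{Q}\setminus Q]$ is a star by Lemma~\ref{STRUCTURE: FISH}, and that $N_{M}(v_{1})\subseteq\Set{\hat{t}}$ in this rule. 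The residual case is therefore absorbed by the \emph{first} branch; under your standing hypothesis that no minimum solution contains $\hat{t}$, the case simply cannot occur --- but you need this swap lemma to see that, and you do not prove it.

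Your substitute argument --- add $w$ and delete a vertex of $S\cap X_{0}$ --- fails concretely. In the residual case every vertex $x\in X_{Q}\setminus Q_{1}$ is simplicial with all neighbours in $Q_{1}$ (this is what the proof of Lemma~\ref{STRUCTURE: FISH}(\ref{ITEM: X1-Q1 SIMPLICIAL}) actually establishes), so every $T$-triangle and every marked edge through $x$ is already hit by $Q_{1}\setminus\Set{\hat{t}}\subseteq S$ (a marked edge $x\hat{t}$ is impossible since it would force $s_{0}\geq 1$). Hence any such $x$ in $S$ would be redundant, and minimality of $S$ forces $S\cap X_{0}=\varnothing$. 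Your primary exchange therefore has nothing to trade against, and you are always in your ``boundary case''; the fallback of removing ``a redundant vertex of $S$ on the far side of $Q$'' is not an argument, since a minimum solution has no redundant vertices and adding $w$ creates no identifiable redundancy outside $X_{Q}$. Indeed, it is not clear that a minimum solution avoiding $\hat{t}$ and containing $(Q_{1}\cup Q_{2})\setminus Q$ exists at all in this case; what is true, and what the branching rule needs, is that a minimum solution containing $\hat{t}$ exists. You should replace the exchange with the paper's swap $(S\setminus X_{1})\cup\Set{\hat{t}}$ (or an equivalent derivation that the residual case contradicts the failure of the first branch).
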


\begin{proof}
    Observe that $N_{M}(v_{1}) = \Set{u}$ by Lemma~\ref{STRUCTURE: THIN}(\ref{ITEM: MATCHING}).
    If $v_{1} \in U_{1}$ is included in a solution, then $S \setminus \Set{v_{1}}$ is a solution to the instance induced by $G - v_{1}$.
    Otherwise, $v_{1}$ does not belong to any solution, and thus $u$ belongs to every solution since edge $uv_{1}$ is marked.
    Additionally, referring to Lemma~\ref{SOLUTION SIZE1}, there must exist a solution either including $\hat{t}$ or excluding $\hat{t}$ but containing $U_{1} = \Set{v_{1}}$.
    In the case where $v_{1}$ does not belong to any solution, we can derive that there exists a solution containing both $\hat{t}$ and $u$.
    From now on, we have proved that the first branching rule is safe, and it is evident that the branching vector is $(1, 2)$.

    Now, we show the safeness of the second branching rule.
    Observe that for any minimum solution $S$ not containing $\hat{t}$, we have $|Q_{1} \setminus S| \leq 2$ and $|Q_{2} \setminus S| \leq 2$.
    We will prove that if there exists a minimal solution $S$ including all vertices in $Q$ except $\hat{t}$, there exists another minimum solution containing $\hat{t}$.
    This implies that either there is a minimal solution including $\hat{t}$ or including $(Q_{1} \cup Q_{2}) \setminus Q$.
    
    Recall that $s_{1}$ is the size of the minimum solution to the instance induced by $G[X_{1}] = G[X_{0} \cup \Set{v_{1}}]$.
    Notice that Steps~\ref{CHORDAL BRANCH: BIG S0+U1} and~\ref{CHORDAL BRANCH: BIG S0+Q1} cannot be applied.
    We have $s_{0} = 0$ and $s_{1} = 1$ by Lemma~\ref{STRUCTURE: FISH}.
    Given a minimum solution $S$, if $Q \setminus S = \Set{\hat{t}}$, then it follows that $|S \cap X_{1}| \geq s_{1} = 1$.
    Consider the set $S' = (S \setminus X_{1}) \cup \Set{\hat{t}}$.
    We claim that $S'$ is also a minimum solution.
    On one hand, one can easily find that $|S'| \leq |S \setminus X_{1}| + 1 \leq |S|$.
    On the other hand, observe that the separator $Q$ is entirely included in $S'$.
    According to Lemma~\ref{STRUCTURE: FISH}(\ref{ITEM: X1-Q1 SIMPLICIAL}), all vertices distinct from $v_{1}$ in $X_{Q} \setminus Q = X_{1} \setminus Q$ are simplicial.
    This indicates that $G[X_{Q} \setminus Q]$ is a star, which contains no $T$-triangle.
    By the assumption, $v_{1}$ is not adjacent to the vertices in $X_{Q} \setminus Q$ via marked edges, which yields that no marked edge appears in $G[X_{Q} \setminus Q]$.
    Since $Q \subseteq S'$, we only need to show $S' \setminus (X_{Q} \setminus Q)$ is a solution to the instance induced by $G - (X_{Q} \setminus Q)$.
    Observe that $S \setminus X_{Q} \subseteq S'$, which means that $S' \setminus (X_{Q} \setminus Q)$ is a solution.
    Therefore, we can conclude that $S'$ is a minimum solution containing $\hat{t}$.
    
    Finally, since $Q_{1} \setminus Q$ and $Q_{2} \setminus Q$ are both non-empty, we have $|Q_{1} \cup Q_{2}| - |Q| = |Q_{1} \setminus Q| + |Q_{2} \setminus Q| \geq 2$, implying that the branching vector is not worse than $(1, 2)$.
\end{proof}

\begin{theorem} \label{SFVS CHORDAL}
    {\rm\texttt{WholeAlg}} solves \textsc{SFVS-C} in time $\mathcal{O}^{*}(\alpha^{k} + 1.6181^{k})$ if \textsc{SFVS-S} can be solved in time $\mathcal{O}^{*}(\alpha^{k})$.
\end{theorem}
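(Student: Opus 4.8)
The plan is to assemble \texttt{WholeAlg} from the rules of Sections~\ref{SIMPLIFICATION} and~\ref{DIVIDING PROCEDURE} and bound its running time, treating those sections' lemmas as black boxes.

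\emph{Correctness and termination.} Safeness of every reduction rule and every branching rule, together with the fact that each rule keeps the graph chordal, is already established; moreover, each branching rule strictly decreases $k$ in every child and each reduction strictly decreases the instance size without increasing $k$, so the recursion has finite depth and \texttt{WholeAlg} halts. A routine induction over the recursion then yields correctness, provided the rules are \emph{exhaustive}: if none of the first nine steps applies, the instance is thin and contains an inner terminal, so a dividing separator exists and Step~\ref{CHORDAL BRANCH: DIVID} runs; afterwards, Lemma~\ref{STRUCTURE: FISH} and the remark following Step~\ref{CHORDAL BRANCH: FISH} guarantee that exactly one of Steps~\ref{CHORDAL BRANCH: BIG S0+U1},~\ref{CHORDAL BRANCH: BIG S0+Q1},~\ref{CHORDAL BRANCH: FISH} is applicable, so the algorithm never stalls.

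\emph{Shape of the search tree.} By Lemmas~\ref{SAFENESS: DEGREE2 MARKED},~\ref{CHORDAL CORRECTNESS: DEGREE3 SIMPLICIAL},~\ref{CHORDAL CORRECTNESS: BIG S0+U1},~\ref{CHORDAL CORRECTNESS: BIG S0+Q1} and~\ref{CHORDAL CORRECTNESS: FISH}, every branching step of \texttt{WholeAlg} yields a branching vector not worse than $(1,2)$, whose branching factor is at most $1.619$. Contracting each maximal run of (polynomially many) consecutive reduction steps, the recursion tree thus has $\mathcal{O}^{*}(1.619^{k})$ nodes, and --- by the standard Fibonacci-type count of root-to-node paths whose step decreases sum to $i$ --- the number of its nodes at which the residual parameter equals $k-i$ is $\mathcal{O}(1.619^{i})$.

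\emph{Per-node cost and total running time.} At a node whose current instance has no inner terminal, the first nine steps finish the entire subtree rooted there in time $\mathcal{O}^{*}(\alpha^{k_{v}}+1.619^{k_{v}})$ by Lemma~\ref{TIME: THIN + GOOD} (this is where the hypothesis that \textsc{SFVS-S} is solvable in $\mathcal{O}^{*}(\alpha^{k})$ enters, via Lemma~\ref{ALG: GOOD = SFVS-S}); treat such a node as a leaf. At a node where Step~\ref{CHORDAL BRANCH: DIVID} is executed, the extra work is $\mathcal{O}^{*}(\alpha^{k_{v}}+1.619^{k_{v}})$, since it solves $\ell+1=\mathcal{O}(n)$ sub-instances $\mathcal{I}_{i}$, each having only simplicial terminals and hence solvable in $\mathcal{O}^{*}(\alpha^{k}+1.619^{k})$ time by Lemma~\ref{TIME: THIN + GOOD}. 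All remaining per-node work is $\mathrm{poly}(n)$. Summing over the recursion tree with the node count above, the total running time is
\begin{equation*}
    \mathcal{O}^{*}\!\left(\sum_{i=0}^{k}1.619^{i}\bigl(\alpha^{k-i}+1.619^{k-i}\bigr)\right)=\mathcal{O}^{*}\!\bigl(\alpha^{k}+1.619^{k}\bigr),
\end{equation*}
where the geometric sum is dominated by its largest term in each of the cases $\alpha>1.619$, $\alpha=1.619$, and $\alpha<1.619$, giving the claimed bound.

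The step I expect to be the main obstacle is this final accounting: the costly \textsc{SFVS-S} computations are spread over an exponentially large branching tree, so the crude ``$\mathcal{O}^{*}(1.619^{k})$ nodes times $\mathcal{O}^{*}(\alpha^{k})$ per node'' estimate only yields the much weaker $\mathcal{O}^{*}(\alpha^{k}\cdot 1.619^{k})$. Obtaining $\mathcal{O}^{*}(\alpha^{k}+1.619^{k})$ hinges on the observation that only the nodes with small residual parameter --- where the \textsc{SFVS-S} call is cheap --- are numerous, which is exactly the content of the $\mathcal{O}(1.619^{i})$ bound on the number of nodes at residual parameter $k-i$; everything else in the proof is bookkeeping over lemmas already proved.
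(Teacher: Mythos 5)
Your proposal is correct and follows essentially the same route as the paper: the paper writes the analysis as the non-homogeneous recurrence $R(k) \leq (\alpha^{k} + 1.619^{k})\cdot n^{c} + R(k-1) + R(k-2)$ and asserts its solution, while you explicitly unroll that recurrence by summing the $\mathcal{O}^{*}(\alpha^{k_v}+1.619^{k_v})$ per-node costs against the $\mathcal{O}(1.619^{i})$ count of nodes at residual parameter $k-i$, which is the same computation. Your added remarks on exhaustiveness of the steps and on why the naive product bound fails are sound elaborations of what the paper leaves implicit.
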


\begin{proof}
    For every step in Part~\Rome{1} of \texttt{WholeAlg}, the reduction rules do not increase the value $k$, and the branching vectors of the branching rule are not worse than $(1, 2)$.
    Thus the branching factors of the branching rules in Part~\Rome{1} of \texttt{WholeAlg} is $1.6181$.

    Furthermore, Steps~\ref{CHORDAL BRANCH: BIG S0+U1} and~\ref{CHORDAL BRANCH: BIG S0+Q1} rely on a preprocessing that invokes the sub-algorithm \texttt{GoodAlg}.
    Thus the Divide-and-Conquer procedure of \texttt{WholeAlg} takes $(\alpha^{k} + 1.6181^{k})n^{c}$ time to compute $U_{0}$ and $U_{1}$ for some constant $c$.
    Then we get the following recurrence relation
    \begin{equation*}
        R(k) \leq (\alpha^{k} + 1.6181^{k}) \cdot n^{c} + R(k - 1) + R(k - 2).
    \end{equation*}
    One can easily verify that $R(k) = \mathcal{O}^{*}(\alpha^{k} + 1.6181^{k})$.
\end{proof}

\begin{corollary}
    \textsc{SFVS-C} can be solved in time $\mathcal{O}^{*}(1.8192^{k})$.
\end{corollary}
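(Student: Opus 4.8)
The plan is simply to combine the two main algorithmic results proved earlier in the paper. First I would invoke Lemma~\ref{TIME: GOOD}, which establishes that \textsc{SFVS-S} can be solved in time $\mathcal{O}^{*}(1.820^{k})$; this fixes the value of the parameter as $\alpha = 1.820$ for the generic reduction. Next I would apply Theorem~\ref{SFVS CHORDAL}, which states that if \textsc{SFVS-S} admits an $\mathcal{O}^{*}(\alpha^{k})$-time algorithm, then \texttt{WholeAlg} solves \textsc{SFVS-C} in time $\mathcal{O}^{*}(\alpha^{k} + 1.619^{k})$. Plugging $\alpha = 1.820$ into this statement gives a running time bound of $\mathcal{O}^{*}(1.820^{k} + 1.619^{k})$ for \textsc{SFVS-C}.

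The only remaining step is to simplify this expression. Since $1.820 > 1.619$, we have $1.619^{k} \le 1.820^{k}$ for every $k \ge 0$, so $1.820^{k} + 1.619^{k} \le 2 \cdot 1.820^{k}$, and the constant factor is absorbed by the $\mathcal{O}^{*}$ notation. Hence \textsc{SFVS-C} can be solved in time $\mathcal{O}^{*}(1.820^{k})$, which is exactly the claimed bound.

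There is no real obstacle here beyond bookkeeping: one should check that feeding \texttt{GoodAlg}, with its polynomial overhead, into the recurrence $R(k) \le (\alpha^{k} + 1.619^{k}) n^{c} + R(k-1) + R(k-2)$ from the proof of Theorem~\ref{SFVS CHORDAL} indeed still resolves to $\mathcal{O}^{*}(\alpha^{k} + 1.619^{k})$, but this is precisely what that proof already verifies, so the corollary follows immediately without any additional argument.
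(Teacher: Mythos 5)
Your proposal is correct and matches the paper's (implicit) argument exactly: the corollary is an immediate consequence of Lemma~\ref{TIME: GOOD} ($\alpha = 1.820$) plugged into Theorem~\ref{SFVS CHORDAL}, with the $1.619^{k}$ term absorbed since $1.619 < 1.820$. Nothing further is needed.
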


\section{Exact Algorithms for \textsc{SFVS in Chordal Graphs}}

In this section, we show that our result (i.e., the $\mathcal{O}^{*}(1.8192^{k})$-time parameterized algorithm) can be used to obtain fast exact algorithms with respect to the number $n$ of vertices.

The polynomial-time reduction between \textsc{Generalized SFVS in Chordal Graphs} (\textsc{Generalized SFVS-C}) and \textsc{SFVS-C} introduced in Preliminaries changes the number of vertices in the graph.
So here, we need to distinguish the original version \textsc{SFVS-C} and the generalized version \textsc{Generalized SFVS-C}.

Fomin et al.~\cite{jacmFominGLS19} introduced a framework for converting an $\mathcal{O}^{*}(\alpha^{k})$-algorithm for a so-called ``subset problem'' to an $\mathcal{O}^{*}((2 - \frac{1}{\alpha})^{n})$-algorithm for the same problem.
\textsc{Generalized SFVS-C} satisfies the conditions.
Hence, by using this result together with the $\mathcal{O}^{*}(1.8192^{k})$-time algorithm for \textsc{Generalized SFVS-C}, we directly get the following result.

\begin{theorem}
    \textsc{Generalized SFVS-C} can be solved in time $\mathcal{O}(1.4504^{n})$.
\end{theorem}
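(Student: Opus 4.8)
The plan is to invoke the \emph{monotone local search} framework of Fomin, Gaspers, Lokshtanov and Saurabh~\cite{jacmFominGLS19}. Recall that this framework applies to any \emph{subset problem}: there is a ground universe $U$ with $|U| = n$, the feasible solutions form a family of subsets of $U$ closed under taking supersets, and one is equipped with a parameterized ``extension'' routine that, given a subset $W \subseteq U$ and an integer $k$, decides in time $\mathcal{O}^{*}(\alpha^{k})$ whether some feasible solution $S$ satisfies $W \subseteq S$ and $|S \setminus W| \le k$. Under these hypotheses the framework yields an exact algorithm that finds a minimum feasible solution in time $\mathcal{O}^{*}((2 - 1/\alpha)^{n})$, and by the derandomized variant of~\cite{jacmFominGLS19} this algorithm can be taken to be deterministic. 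So it suffices to check that \textsc{Generalized SFVS-C} fits this template with $\alpha = 1.820$.

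First I would fix the universe to be the vertex set $V$ of the input chordal graph, and declare the feasible solutions to be exactly the sets $S \subseteq V$ such that $G - S$ contains no $T$-cycle and no marked edge, i.e.\ the solutions of $\mathcal{I} = (G, T, M)$. This family is upward closed: if $S$ is feasible and $S \subseteq S'$, then $G - S'$ is an induced subgraph of $G - S$, hence it again contains no $T$-cycle and no marked edge, so $S'$ is feasible. Next I would supply the extension routine. Given $W \subseteq V$ and a budget $k$, extending $W$ to a feasible set of total size at most $|W| + k$ is equivalent to finding a solution of size at most $k$ to the \textsc{Generalized SFVS-C} instance induced by $G - W$, namely $(G - W,\, T \setminus W,\, M \cap E(G - W),\, k)$: the graph $G - W$ is chordal since chordality is hereditary, a marked edge with an endpoint in $W$ is already covered while the remaining marked edges are exactly $M \cap E(G - W)$, a $T$-cycle of $G - W$ is precisely a $(T \setminus W)$-cycle, and $|W \cup S'| = |W| + |S'|$ because the two sets are disjoint. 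Hence the extension routine is just our $\mathcal{O}^{*}(1.820^{k})$-time algorithm for \textsc{Generalized SFVS-C} applied to this induced instance, so $\alpha = 1.820$ is admissible.

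Finally I would plug the numbers in. With $\alpha = 1.820$ we have $2 - 1/\alpha = 2 - 1/1.820 = 1.4505\ldots < 1.451$, and the polynomial factor hidden by the framework is absorbed by rounding the base up from $1.4506$ to $1.451$. This yields the claimed deterministic bound $\mathcal{O}(1.451^{n})$ for \textsc{Generalized SFVS-C}.

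There is no deep obstacle here: the content of the proof is entirely in verifying the two hypotheses of the framework — upward closedness of the solution family, and the fact that ``extension'' is itself a \textsc{Generalized SFVS-C} instance of the same kind with the same parameter — so that the black-box theorem of~\cite{jacmFominGLS19} applies verbatim. The only point that warrants careful (though routine) writing is the reduction of the extension problem to the instance induced by $G - W$, in particular checking that restricting the terminal set to $T \setminus W$ and the marked-edge set to $M \cap E(G-W)$ produces a legitimate instance whose size-$\le k$ solutions are in bijection with the size-$\le |W|+k$ feasible supersets of $W$.
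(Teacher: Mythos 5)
Your proposal is correct and takes exactly the same route as the paper: the paper also obtains this bound by plugging the $\mathcal{O}^{*}(1.820^{k})$ parameterized algorithm into the Fomin--Gaspers--Lokshtanov--Saurabh monotone local search framework, yielding $(2 - 1/1.820)^{n} \approx 1.4506^{n} \leq 1.451^{n}$. Your write-up is in fact more careful than the paper's, which asserts without detail that \textsc{Generalized SFVS-C} ``satisfies the conditions''; your verification of upward closedness and of the extension routine as the induced instance on $G - W$ is the routine content the paper leaves implicit.
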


Next, we show that the running time bound can be further improved for the original \textsc{SFVS-S} and \textsc{SFVS-C} without marked edges.

We first consider \textsc{SFVS-S}.
For an instance $\mathcal{I} = (G = (V, E), T, k)$, we first find a split partition $(I, K)$ of the vertex set $V$ in linear time, where $K$ is a clique, and $I$ is an independent set, and do the following steps in order.

\begin{exactstep} \label{EXACT REDUCTION: SMALL K}
    If  $|K| \leq 15$, solve the instance directly in polynomial time.
\end{exactstep}

\begin{exactstep} \label{EXACT BRANCH: TERMINALS IN K}
    If $K \cap T \neq \varnothing$, pick a terminal $t \in K \cap T$ and branch into three instances by either
    \begin{itemize}
        \item removing $t$, and decreasing $n$ by $1$;
        \item removing any a subset $K' \subseteq K \setminus \Set{t}$ of size $10$, and decreasing $n$ by $7$;
        \item removing $K'' = K \setminus (\Set{t} \cup K')$, and decreasing $n$ by $|K| - 8$.
    \end{itemize}
\end{exactstep}

The safeness of Step~\ref{EXACT BRANCH: TERMINALS IN K} is based on the following observation.
If $t$ is not contained in the solution, at most one vertex in the clique $K \setminus \Set{t}$ is not contained in the solution.
We partition $K \setminus \Set{t}$ into two parts $K'$ and $K''$.
At least one of $K'$ and $K''$ should be in the solution.
Thus, we can do the branch.
The branching vector is $(1, 7, |K| - 8)$, where $|K| \geq 15$.
So the branching factor is at most $1.3422$.

\begin{exactstep} \label{EXACT REDUCTION: BIG I AND K}
    If $|I| \leq k$ or $|K| \leq k$, return Yes.
\end{exactstep}

We can execute this step based on the following observation.
If $|I| \leq k$, $I$ is a solution since there is no terminal in $K$ after Step~\ref{EXACT BRANCH: TERMINALS IN K}.
If $|K| \leq k$, $K$ is a solution since $I$ is an independent set.

\begin{exactstep} \label{EXACT REDUCTION: PARAMETERIZED ALGORITHM}
    Solve the problem in $\mathcal{O}^{*}(1.8192^{k})$ time by Lemma~\ref{TIME: GOOD}.
\end{exactstep}

In Step~\ref{EXACT REDUCTION: PARAMETERIZED ALGORITHM}, we have that $|I| > k$ and $|K| > k$ and thus $n > 2k$.
This implies that the problem can be solved in time $\mathcal{O}^{*}(1.8192^{k}) \leq \mathcal{O}^{*}(1.8192^{n/2}) = \mathcal{O}(1.3488^{n})$.

We get the following result.

\begin{theorem}
    The original \textsc{SFVS-S} can be solved in time $\mathcal{O}(1.3488^{n})$.
\end{theorem}

We also improve the running time bound for \textsc{SFVS-C}.

\begin{theorem}
    The original \textsc{SFVS-C} can be solved in time $\mathcal{O}(1.3788^{n})$.
\end{theorem}

\begin{proof}
    First, we show that the problem can be solved in time $\mathcal{O}^{*}(2^{n - k})$.
    If $k \geq |T|$, we know that $T$ is a solution to $\mathcal{I}$, leading that $\mathcal{I}$ is a Yes-instance.
    Thus we assume that $k < |T|$.
    Next, we enumerate all subsets $P$ of non-terminals in $\mathcal{O}^{*}(2^{n - |T|}) = \mathcal{O}^{*}(2^{n - k})$ time.
    For each subset $P$, we delete $P$ from the graph and mark all other non-terminals as undeletable.
    Then we iteratively pick a pair of adjacent undeletable vertices $v$ and $u$: delete the terminal that forms a $T$-triangle with $v$ and $u$, and then merge $v$ and $u$ into one until there is no pair of adjacent undeletable vertices.
    
    Now, \textsc{SFVS-C} with the constraint that undeletable vertices are not allowed to be selected into the solution is equal to the weighted version of \textsc{FVS in Chordal Graphs}: we only need to set a very large weight to undeletable vertices to make sure that no solution will select an undeletable vertex.
    Note that weighted \textsc{FVS in Chordal Graphs} can be solved in polynomial time~\cite{iplYannakakisG87}.
    Thus, we can solve our problem in time $\mathcal{O}^{*}(2^{n - k})$.

    By doing a trivial trade-off between the $\mathcal{O}^{*}(2^{n - k})$-time algorithm and the $\mathcal{O}^{*}(1.8192^{k})$-time algorithm in Theorem~\ref{SFVS CHORDAL}, we know that \textsc{SFVS-C} can be solved in time $\mathcal{O}(1.3788^{n})$.
\end{proof}

\section{Prize-Collecting Maximum Independent Set in Hypergraphs}

Although we study the subset feedback vertex set problem in graph subclasses, \textsc{SFVS-S} already generalizes other interesting problems.

Several graph connectivity problems~\cite{damHalldorssonL09,sodaChekuriX17,talgFoxPZ23} can be modeled as natural problems in hypergraphs.
In hypergraphs, an edge can connect any number of vertices, whereas in an ordinary graph, an edge connects exactly two vertices.
Given a hypergraph $H$, the set of vertices and hyperedges are denoted by $V(H)$ and $E(H)$, respectively.
The maximum independent set problem in hypergraphs aims to find a maximum vertex subset $I \subseteq V(H)$ such that every hyperedge contains at most one vertex from $I$.
The maximum independent set problem in hypergraphs can be easily reduced to the maximum independent set problem in ordinary graphs: we only need to replace each hyperedge $e \in E(H)$ with a clique formed by the vertices in $e$ to get an ordinary graph.
In terms of exact algorithms, we may not need to distinguish this problem in hypergraphs and ordinary graphs. 
However, the prize-collecting version in hypergraphs becomes interesting, which allows us to violate the independent constraint with penalty.
As mentioned above, the prize-collecting version of many central $\NP$-hard problems has drawn certain attention recently.

\defproblem{\textsc{Prize-Collecting Maximum Independent Set in hypergraphs (PCMIS)}}
{A hypergraph $H$ and an integer $p$.}
{Determine whether there is a subset of vertices $I \subseteq V(H)$ of the prize at least $p$, where the prize of $I$ is the size of $I$ minus the number of hyperedges that contain at least two vertices from $I$.}

\begin{lemma}
    \textsc{PCMIS} is polynomially solvable fot $p \leq 1$, and \textsc{PCMIS} is $\NP$-hard for each constant $p \geq 2$.
\end{lemma}

\begin{proof}
    By definition, any singleton set has a prize of $1$.
    Therefore, \textsc{PCMIS} is polynomially solvable when $p \leq 1$.

    We will prove the $\NP$-hardness of \textsc{PCMIS} with $p \geq 2$ by reducing from the maximum independent set problem in ordinary undirected graphs.
    Let $(G, k)$ be an instance of the maximum independent set problem.
    We construct an instance $(H,p)$ of \textsc{PCMIS}, where $p\geq 2$ is a constant. Since $p$ is a constant, we can assume that $k\geq p$.

    Suppose $|V(G)| = n$ and $|E(G)| = m$.
    We now construct a hypergraph $H$ with $n$ vertices and $nm + k - p$ hyperedges.
    Specifically, for each vertex $v \in V(G)$, we introduce a vertex $v'$, and thus we obtain $V(H) = \{ v' : v \in V(G) \}$.
    Next, for each edge $uv \in E(G)$, we introduce $n$ identical hyperedges $e^{uv}_{i} = \Set{u, v}$ ($i \in [n]$); we also add $k - p$ identical hyperedges  $e'_{i} = V(H)$ ($i \in [k - p]$).
    Hence, $H$ contains $nm + (k - p)$ hyperedges: $E(H) = E'_{1} \cup E'_{2}$, where $E'_{1} = \{ e^{uv}_{i} = \Set{u, v} : uv \in E(G),~ i \in [n]\}$ and $E'_{2} = \{ e'_{i} = V(H) : i \in [k - p]\}$. 

    Finally, we show $(G, k)$ is a Yes-instance if and only if $(H, p)$ is a Yes-instance.
    On the one hand, let $I \subseteq V(G)$ be an independent set of $G$ with the size $k$.
    Let $I' = \{ v' : v \in I \}$, and we have that every hyperedge in $E'_{1}$ contains at most one vertex from $I'$.
    Additionally, each hyperedge in $E'_{2}$ contains exactly $k$ vertices from $I'$.
    Since $k \geq p \geq 2$, we derive that the prize of $I'$ is $k - |E'_{2}| = k - (k - p) = p$, which means that $(H, p)$ is a Yes-instance.

    On the other hand, let $I' \subseteq V(H)$ be a vertex subset of $H$ with the prize $p$.
    Let $X$ be the set of hyperedges containing at least two vertices from $I'$.
    We have that $p = |I'| - |X|$.
    Note that if one hyperedge $e^{vu}_{i} \in E'_{1}$ ($i \in [n]$) is in $X$, then all the $n$ hyperedges identical with $e^{vu}_{i}$ should be in $X$, which will make $p \leq |I'| - n \leq 0$, a contradiction. Therefore, we derive that $I = \{ v \in V(G) : v' \in I' \}$ is an independent set in $G$ and $X \cap E'_{1}=\emptyset$. Since $X \subseteq E'_{2}$,
    we have $|I| = |I'| = p + |X| \geq p + (k - p) \geq k$.
    We conclude that $I$ is an independent set of $G$ with size at least $k$, leading that $(G, k)$ is a Yes-instance.
\end{proof}

Previously, no exact algorithm for \textsc{PCMIS} faster than $\mathcal{O}^{*}(2^{n})$ is known.
We show that \textsc{PCMIS} parameterized by the size of the vertex set $n$ and \textsc{SFVS-S} parameterized by the size of the solution has a strong relation, which leads us to break the ``$2^{n}$-barrier'' for \textsc{PCMIS}.

\begin{lemma} \label{LEM: SFVSS=PCMIS}
    For any constant $\alpha > 1$, an $\mathcal{O}^{*}(\alpha^{k})$-time algorithm for \textsc{SFVS-S} leads to an $\mathcal{O}^{*}(\alpha^{n - p})$-algorithm for \textsc{PCMIS}.
\end{lemma}

\begin{proof}
    For an instance $(H, p)$ of \textsc{PCMIS}, we construct an instance $(G, T, M, k)$ of \textsc{SFVS-S}.
    Suppose that $H$ contains $n$ vertices and $m$ hyperedges.
    We construct a split graph $G$ as follows.
    We first introduce a clique with vertices $\{v' : v \in V(H) \}$; then for each hyperedge $e \in E(H)$, introduce a new terminal $t'_{e}$ whose neighbors are exactly the vertices in the clique corresponding to the vertices in hyperedge $e$.
    The terminal set is set as $T = \{ t'_{e} : e \in E(H) \}$, the marked edge set is set as $M = \varnothing$, and let $k = n - p$.
    
    We have the key idea: every hyperedge in a hypergraph contains at most one vertex if and only if the corresponding split graph of the hypergraph contains no $T$-triangle.
    Let $S$ be a solution to $(G, T, M, k)$ containing $m'$ terminals and $n'$ non-terminals.
    We can see that $I = \{ v : v' \in V(G) \setminus (T \cup S)\}$ is a vertex set with the prize at least $(n - n') - m' = n - k = p$ in $(H, p)$.
    As for the opposite direction, suppose $I$ is a solution to $(H, p)$ of size $n'$, and the prize of $I$ is $p$.
    We can derive that there are at most $m' = n' - p$ hyperedges containing at least two vertices from $I$.
    This means that in $G$, we can remove $m'$ terminals and $n - n'$ non-terminals to obtain a subgraph without any $T$-triangle, leading that $(G, T, M, k)$ has a solution of size $(n - n') + m' = n - p = k$.
    Therefore, $(G, T, M, k)$ is a Yes-instance if and only if $(H, p)$ is a Yes-instance.
    We finish the proof of Lemma~\ref{LEM: SFVSS=PCMIS}.
\end{proof}

Based on Lemma~\ref{TIME: GOOD}, we obtain an exact algorithm for \textsc{PCMIS} breaking the $2^{n}$ barrier.

\begin{corollary}
    \textsc{PCMIS} can be solved in time $\mathcal{O}^{*}(1.8192^{n})$.
\end{corollary}

\section{Conclusion}

In this paper, we broke the ``$2^{k}$-barrier'' for \textsc{SFVS in Chordal Graphs}.
As a corollary, we obtained an exact algorithm faster than $\mathcal{O}^{*}(2^{n})$ for \textsc{Prize-Collecting Maximum Independent Set in hypergraphs}.
To achieve this breakthrough, we introduced a new measure based on the Dulmage-Mendelsohn decomposition.
This measure served as the basis for designing and analyzing an algorithm that addresses a crucial sub-case.
Furthermore, we analyzed the whole algorithm using the traditional measure $k$, employing various techniques such as a divide-and-conquer approach and reductions based on small separators.
The bottleneck of our algorithm occurs when dealing with \textsc{SFVS in Split Graphs}.

We think it is interesting to break the ``$2^{k}$-barrier'' or ``$2^{n}$-barrier'' for more important problems, say the \textsc{Steiner Tree} problem and TSP.
For \textsc{SFVS} in general graphs, the best result is $\mathcal{O}^{*}(4^{k})$~\cite{siamcompIwataWY16,focsIwataYY18}.
It will also be interesting to reduce the gap between the results in general graphs and chordal graphs.

\bibliography{bib_SFVSC}

\begin{thebibliography}{10}

\bibitem{jcssAbu-Khzam10}
Faisal~N. Abu{-}Khzam.
\newblock A kernelization algorithm for $d$-hitting set.
\newblock {\em Journal of Computer and System Sciences}, 76(7):524--531, 2010.
\newblock \href {https://doi.org/10.1016/J.JCSS.2009.09.002}
  {\path{doi:10.1016/J.JCSS.2009.09.002}}.

\bibitem{stocBlauthN23}
Jannis Blauth and Martin N{\"{a}}gele.
\newblock An improved approximation guarantee for prize-collecting {TSP}.
\newblock In {\em Proceedings of the 55th Annual {ACM} Symposium on Theory of
  Computing, {STOC} 2023}, pages 1848--1861, 2023.
\newblock \href {https://doi.org/10.1145/3564246.3585159}
  {\path{doi:10.1145/3564246.3585159}}.

\bibitem{ijfcsBodlaender94}
Hans~L. Bodlaender.
\newblock On disjoint cycles.
\newblock {\em International Journal of Foundations of Computer Science},
  5(1):59--68, 1994.
\newblock \href {https://doi.org/10.1142/S0129054194000049}
  {\path{doi:10.1142/S0129054194000049}}.

\bibitem{dmBuneman74}
Peter Buneman.
\newblock A characterisation of rigid circuit graphs.
\newblock {\em Discrete Mathematics}, 9(3):205--212, 1974.
\newblock \href {https://doi.org/10.1016/0012-365X(74)90002-8}
  {\path{doi:10.1016/0012-365X(74)90002-8}}.

\bibitem{algorithmicaCaoCL15}
Yixin Cao, Jianer Chen, and Yang Liu.
\newblock On feedback vertex set: New measure and new structures.
\newblock {\em Algorithmica}, 73(1):63--86, 2015.
\newblock \href {https://doi.org/10.1007/s00453-014-9904-6}
  {\path{doi:10.1007/s00453-014-9904-6}}.

\bibitem{sodaChekuriX17}
Chandra Chekuri and Chao Xu.
\newblock Computing minimum cuts in hypergraphs.
\newblock In {\em Proceedings of the 28th Annual {ACM-SIAM} Symposium on
  Discrete Algorithms, {SODA} 2017}, pages 1085--1100, 2017.
\newblock \href {https://doi.org/10.1137/1.9781611974782.70}
  {\path{doi:10.1137/1.9781611974782.70}}.

\bibitem{jcssChenFLLV08}
Jianer Chen, Fedor~V. Fomin, Yang Liu, Songjian Lu, and Yngve Villanger.
\newblock Improved algorithms for feedback vertex set problems.
\newblock {\em Journal of Computer and System Sciences}, 74(7):1188--1198,
  2008.
\newblock \href {https://doi.org/10.1016/J.JCSS.2008.05.002}
  {\path{doi:10.1016/J.JCSS.2008.05.002}}.

\bibitem{jcssChenK03}
Jianer Chen and Iyad~A. Kanj.
\newblock Constrained minimum vertex cover in bipartite graphs: complexity and
  parameterized algorithms.
\newblock {\em Journal of Computer and System Sciences}, 67(4):833--847, 2003.
\newblock \href {https://doi.org/10.1016/J.JCSS.2003.09.003}
  {\path{doi:10.1016/J.JCSS.2003.09.003}}.

\bibitem{siamdmCyganPPW13}
Marek Cygan, Marcin Pilipczuk, Michal Pilipczuk, and Jakub~Onufry Wojtaszczyk.
\newblock Subset feedback vertex set is fixed-parameter tractable.
\newblock {\em {SIAM} Journal on Discrete Mathematics}, 27(1):290--309, 2013.
\newblock \href {https://doi.org/10.1137/110843071}
  {\path{doi:10.1137/110843071}}.

\bibitem{iwpecDehneFRS04}
Frank K. H.~A. Dehne, Michael~R. Fellows, Frances~A. Rosamond, and Peter Shaw.
\newblock Greedy localization, iterative compression, modeled crown reductions:
  New {$\FPT$} techniques, an improved algorithm for set splitting, and a novel
  {$2k$} kernelization for vertex cover.
\newblock In {\em Proceedings of the 1st International Workshop of
  Parameterized and Exact Computation, {IWPEC} 2004}, pages 271--280, 2004.

\bibitem{DagSemProcDeamineHM09}
Erik~D. Demaine, MohammadTaghi Hajiaghayi, and D\'{a}niel Marx.
\newblock Open problems from dagstuhl seminar 09511.
\newblock In {\em Parameterized complexity and approximation algorithms},
  volume 9511 of {\em Dagstuhl Seminar Proceedings (DagSemProc)}, pages 1--14,
  2009.
\newblock \href {https://doi.org/10.4230/DagSemProc.09511.1}
  {\path{doi:10.4230/DagSemProc.09511.1}}.

\bibitem{amhaajDirac61}
Gabriel~Andrew Dirac.
\newblock On rigid circuit graphs.
\newblock {\em Abhandlungen aus dem Mathematischen Seminar der Universitat
  Hamburg}, 25(1):71--76, 1961.
\newblock \href {https://doi.org/10.1007/BF02992776}
  {\path{doi:10.1007/BF02992776}}.

\bibitem{jdaDomGHNT10}
Michael Dom, Jiong Guo, Falk H{\"{u}}ffner, Rolf Niedermeier, and Anke
  Tru{\ss}.
\newblock Fixed-parameter tractability results for feedback set problems in
  tournaments.
\newblock {\em Journal of Discrete Algorithms}, 8(1):76--86, 2010.
\newblock \href {https://doi.org/10.1016/j.jda.2009.08.001}
  {\path{doi:10.1016/j.jda.2009.08.001}}.

\bibitem{sctDowneyF92}
Rodney~G. Downey and Michael~R. Fellows.
\newblock Fixed-parameter tractability and completeness {III:} some structural
  aspects of the {$\W$} hierarchy.
\newblock In {\em Workshop on Structure and Complexity Theory}, pages 191--225,
  1992.
\newblock \href {https://doi.org/10.5555/183589.183729}
  {\path{doi:10.5555/183589.183729}}.

\bibitem{springerDowneyF99}
Rodney~G. Downey and Michael~R. Fellows.
\newblock {\em Parameterized Complexity}.
\newblock Monographs in Computer Science. Springer, Berlin, 1999.
\newblock \href {https://doi.org/10.1007/978-1-4612-0515-9}
  {\path{doi:10.1007/978-1-4612-0515-9}}.

\bibitem{canjmathDulmageM58}
A.~L. Dulmage and N.~S. Mendelsohn.
\newblock Coverings of bipartite graphs.
\newblock {\em Canadian Journal of Mathematics}, 10:517--534, 1958.

\bibitem{tranrscDulmageA59}
Andrew~L Dulmage.
\newblock A structure theory of bipartite graphs of finite exterior dimension.
\newblock {\em The Transactions of the Royal Society of Canada, Section III},
  53:1--13, 1959.

\bibitem{siamcompEvenNZ00}
Guy Even, Joseph Naor, and Leonid Zosin.
\newblock An $8$-approximation algorithm for the subset feedback vertex set
  problem.
\newblock {\em {SIAM} Journal on Computing}, 30(4):1231--1252, 2000.
\newblock \href {https://doi.org/10.1137/S0097539798340047}
  {\path{doi:10.1137/S0097539798340047}}.

\bibitem{jacmFominGLS19}
Fedor~V. Fomin, Serge Gaspers, Daniel Lokshtanov, and Saket Saurabh.
\newblock Exact algorithms via monotone local search.
\newblock {\em Journal of the {ACM}}, 66(2):8:1--8:23, 2019.
\newblock \href {https://doi.org/10.1145/3284176} {\path{doi:10.1145/3284176}}.

\bibitem{algorithmicaFominHKPV14}
Fedor~V. Fomin, Pinar Heggernes, Dieter Kratsch, Charis Papadopoulos, and Yngve
  Villanger.
\newblock Enumerating minimal subset feedback vertex sets.
\newblock {\em Algorithmica}, 69(1):216--231, 2014.
\newblock \href {https://doi.org/10.1007/s00453-012-9731-6}
  {\path{doi:10.1007/s00453-012-9731-6}}.

\bibitem{talgFominLLSTZ19}
Fedor~V. Fomin, Tien{-}Nam Le, Daniel Lokshtanov, Saket Saurabh, St{\'{e}}phan
  Thomass{\'{e}}, and Meirav Zehavi.
\newblock Subquadratic kernels for implicit $3$-hitting set and $3$-set packing
  problems.
\newblock {\em {ACM} Transactions on Algorithms}, 15(1):13:1--13:44, 2019.
\newblock \href {https://doi.org/10.1145/3293466} {\path{doi:10.1145/3293466}}.

\bibitem{talgFoxPZ23}
Kyle Fox, Debmalya Panigrahi, and Fred Zhang.
\newblock Minimum cut and minimum $k$-cut in hypergraphs via branching
  contractions.
\newblock {\em {ACM} Transactions on Algorithms}, 19(2):13:1--13:22, 2023.

\bibitem{talgFukunaga17}
Takuro Fukunaga.
\newblock Spider covers for prize-collecting network activation problem.
\newblock {\em {ACM} Transactions on Algorithms}, 13(4):49:1--49:31, 2017.
\newblock \href {https://doi.org/10.1145/3132742} {\path{doi:10.1145/3132742}}.

\bibitem{pjmFulkersonDG65}
Delbert Fulkerson and Oliver Gross.
\newblock Incidence matrices and interval graphs.
\newblock {\em Pacific Journal of Mathematics}, 15(3):835--855, 1965.
\newblock \href {https://doi.org/10.2140/pjm.1965.15.835}
  {\path{doi:10.2140/pjm.1965.15.835}}.

\bibitem{jctGavril74}
F\v{a}nic\v{a} Gavril.
\newblock The intersection graphs of subtrees in trees are exactly the chordal
  graphs.
\newblock {\em Journal of Combinatorial Theory, Series B}, 16(1):47--56, 1974.
\newblock \href {https://doi.org/10.1016/0095-8956(74)90094-X}
  {\path{doi:10.1016/0095-8956(74)90094-X}}.

\bibitem{jcssGuoGHNW06}
Jiong Guo, Jens Gramm, Falk H{\"{u}}ffner, Rolf Niedermeier, and Sebastian
  Wernicke.
\newblock Compression-based fixed-parameter algorithms for feedback vertex set
  and edge bipartization.
\newblock {\em Journal of Computer and System Sciences}, 72(8):1386--1396,
  2006.
\newblock \href {https://doi.org/10.1016/j.jcss.2006.02.001}
  {\path{doi:10.1016/j.jcss.2006.02.001}}.

\bibitem{damHalldorssonL09}
Magn{\'{u}}s~M. Halld{\'{o}}rsson and Elena Losievskaja.
\newblock Independent sets in bounded-degree hypergraphs.
\newblock {\em Discrete Applied Mathematics}, 157(8):1773--1786, 2009.

\bibitem{mstHolsK18}
Eva{-}Maria~C. Hols and Stefan Kratsch.
\newblock A randomized polynomial kernel for subset feedback vertex set.
\newblock {\em Theory of Computing Systems}, 62(1):63--92, 2018.
\newblock \href {https://doi.org/10.1007/S00224-017-9805-6}
  {\path{doi:10.1007/S00224-017-9805-6}}.

\bibitem{siamcompHopcroftK73}
John~E. Hopcroft and Richard~M. Karp.
\newblock An $n^{5/2}$ algorithm for maximum matchings in bipartite graphs.
\newblock {\em {SIAM} Journal on Computing}, 2(4):225--231, 1973.
\newblock \href {https://doi.org/10.1137/0202019} {\path{doi:10.1137/0202019}}.

\bibitem{mstHuffnerKMN10}
Falk H{\"{u}}ffner, Christian Komusiewicz, Hannes Moser, and Rolf Niedermeier.
\newblock Fixed-parameter algorithms for cluster vertex deletion.
\newblock {\em Theory of Computing Systems}, 47(1):196--217, 2010.
\newblock \href {https://doi.org/10.1007/S00224-008-9150-X}
  {\path{doi:10.1007/S00224-008-9150-X}}.

\bibitem{icalpIwata17}
Yoichi Iwata.
\newblock Linear-time kernelization for feedback vertex set.
\newblock In {\em Proceedings of the 44th International Colloquium on Automata,
  Languages, and Programming, {ICALP} 2017}, volume~80, pages 68:1--68:14,
  2017.
\newblock \href {https://doi.org/10.4230/LIPIcs.ICALP.2017.68}
  {\path{doi:10.4230/LIPIcs.ICALP.2017.68}}.

\bibitem{algorithmicaIwataK21}
Yoichi Iwata and Yusuke Kobayashi.
\newblock Improved analysis of highest-degree branching for feedback vertex
  set.
\newblock {\em Algorithmica}, 83(8):2503--2520, 2021.
\newblock \href {https://doi.org/10.1007/s00453-021-00815-w}
  {\path{doi:10.1007/s00453-021-00815-w}}.

\bibitem{siamcompIwataWY16}
Yoichi Iwata, Magnus Wahlstr{\"{o}}m, and Yuichi Yoshida.
\newblock Half-integrality, {LP}-branching, and {$\FPT$} algorithms.
\newblock {\em {SIAM} Journal on Computing}, 45(4):1377--1411, 2016.
\newblock \href {https://doi.org/10.1137/140962838}
  {\path{doi:10.1137/140962838}}.

\bibitem{focsIwataYY18}
Yoichi Iwata, Yutaro Yamaguchi, and Yuichi Yoshida.
\newblock {$0$/$1$/All} {CSPs}, half-integral {$A$}-path packing, and
  linear-time {$\FPT$} algorithms.
\newblock In {\em 59th {IEEE} Annual Symposium on Foundations of Computer
  Science, {FOCS} 2018}, pages 462--473, 2018.
\newblock \href {https://doi.org/10.1109/FOCS.2018.00051}
  {\path{doi:10.1109/FOCS.2018.00051}}.

\bibitem{iwpecKanjPS04}
Iyad~A. Kanj, Michael~J. Pelsmajer, and Marcus Schaefer.
\newblock Parameterized algorithms for feedback vertex set.
\newblock In {\em Proceedings of the 1st International Workshop of
  Parameterized and Exact Computation, {IWPEC} 2004}, volume 3162, pages
  235--247, 2004.
\newblock \href {https://doi.org/10.1007/978-3-540-28639-4\_21}
  {\path{doi:10.1007/978-3-540-28639-4\_21}}.

\bibitem{cocoKarp72}
Richard~M. Karp.
\newblock Reducibility among combinatorial problems.
\newblock In {\em Proceedings of a symposium on the Complexity of Computer
  Computations}, The {IBM} Research Symposia Series, pages 85--103, 1972.
\newblock \href {https://doi.org/10.1007/978-1-4684-2001-2\_9}
  {\path{doi:10.1007/978-1-4684-2001-2\_9}}.

\bibitem{iplKociumakaP14}
Tomasz Kociumaka and Marcin Pilipczuk.
\newblock Faster deterministic feedback vertex set.
\newblock {\em Information Processing Letters}, 114(10):556--560, 2014.
\newblock \href {https://doi.org/10.1016/j.ipl.2014.05.001}
  {\path{doi:10.1016/j.ipl.2014.05.001}}.

\bibitem{stacsKumarL16}
Mithilesh Kumar and Daniel Lokshtanov.
\newblock Faster exact and parameterized algorithm for feedback vertex set in
  tournaments.
\newblock In {\em Proceedings of the 33rd Symposium on Theoretical Aspects of
  Computer Science, {STACS} 2016}, volume~47, pages 49:1--49:13, 2016.
\newblock \href {https://doi.org/10.4230/LIPICS.STACS.2016.49}
  {\path{doi:10.4230/LIPICS.STACS.2016.49}}.

\bibitem{elsevierLocaszP86}
L{\'a}szl{\'o} Lov{\'a}sz and Michael~D. Plummer.
\newblock {\em Matching theory}, volume 121 of {\em North-Holland Mathematics
  Studies}.
\newblock Elsevier Science Ltd., London, 1 edition, 1986.

\bibitem{damPapadopoulosT19}
Charis Papadopoulos and Spyridon Tzimas.
\newblock Polynomial-time algorithms for the subset feedback vertex set problem
  on interval graphs and permutation graphs.
\newblock {\em Discrete Applied Mathematics}, 258:204--221, 2019.
\newblock \href {https://doi.org/10.1016/j.dam.2018.11.017}
  {\path{doi:10.1016/j.dam.2018.11.017}}.

\bibitem{algorithmicaPedrosaR22}
Lehilton Lelis~Chaves Pedrosa and Hugo Kooki~Kasuya Rosado.
\newblock A $2$-approximation for the $k$-prize-collecting steiner tree
  problem.
\newblock {\em Algorithmica}, 84(12):3522--3558, 2022.
\newblock \href {https://doi.org/10.1007/S00453-021-00919-3}
  {\path{doi:10.1007/S00453-021-00919-3}}.

\bibitem{algorithmicaPhilipRST19}
Geevarghese Philip, Varun Rajan, Saket Saurabh, and Prafullkumar Tale.
\newblock Subset feedback vertex set in chordal and split graphs.
\newblock {\em Algorithmica}, 81(9):3586--3629, 2019.
\newblock \href {https://doi.org/10.1007/S00453-019-00590-9}
  {\path{doi:10.1007/S00453-019-00590-9}}.

\bibitem{talgRamanSS06}
Venkatesh Raman, Saket Saurabh, and C.~R. Subramanian.
\newblock Faster fixed parameter tractable algorithms for finding feedback
  vertex sets.
\newblock {\em {ACM} Transactions on Algorithms}, 2(3):403--415, 2006.
\newblock \href {https://doi.org/10.1145/1159892.1159898}
  {\path{doi:10.1145/1159892.1159898}}.

\bibitem{siamcompRoseTL76}
Donald~J. Rose, Robert~Endre Tarjan, and George~S. Lueker.
\newblock Algorithmic aspects of vertex elimination on graphs.
\newblock {\em {SIAM} Journal on Computing}, 5(2):266--283, 1976.
\newblock \href {https://doi.org/10.1137/0205021} {\path{doi:10.1137/0205021}}.

\bibitem{seiccgtcStephaneH77}
F\"{o}ldes Stephane and Peter Hammer.
\newblock Split graphs.
\newblock In {\em Proceedings of the 8th south-east Combinatorics, Graph
  Theory, and Computing, {SEICCGTC} 1977}, pages 311--315, 1977.

\bibitem{talgThomasse10}
St{\'{e}}phan Thomass{\'{e}}.
\newblock A $4k^{2}$ kernel for feedback vertex set.
\newblock {\em {ACM} Transactions on Algorithms}, 6(2):32:1--32:8, 2010.
\newblock \href {https://doi.org/10.1145/1721837.1721848}
  {\path{doi:10.1145/1721837.1721848}}.

\bibitem{cocoonTianXY23}
Kangyi Tian, Mingyu Xiao, and Boting Yang.
\newblock Parameterized algorithms for cluster vertex deletion on degree-$4$
  graphs and general graphs.
\newblock In {\em Proceedings of the 29th Annual International Computing and
  Combinatorics, {COCOON} 2023 Part {I}}, volume 14422 of {\em LNCS}, pages
  182--194, 2023.
\newblock \href {https://doi.org/10.1007/978-3-031-49190-0\_13}
  {\path{doi:10.1007/978-3-031-49190-0\_13}}.

\bibitem{phdWahlstrom07}
Magnus Wahlstr{\"{o}}m.
\newblock {\em Algorithms, measures and upper bounds for satisfiability and
  related problems}.
\newblock PhD thesis, Link{\"{o}}ping University, Link{\"{o}}ping Sweden, 2007.

\bibitem{phdWalter72}
James~Richard Walter.
\newblock {\em Representations of rigid cycle graphs}.
\newblock PhD thesis, Wayne State University, 1972.

\bibitem{iplYannakakisG87}
Mihalis Yannakakis and Fanica Gavril.
\newblock The maximum $k$-colorable subgraph problem for chordal graphs.
\newblock {\em Information Processing Letters}, 24(2):133--137, 1987.
\newblock \href {https://doi.org/10.1016/0020-0190(87)90107-4}
  {\path{doi:10.1016/0020-0190(87)90107-4}}.

\end{thebibliography}

\appendix

\end{document}